\DeclareMathOperator*{\argmax}{arg\,max}
\newcommand{\I}{\mathcal{I}}
\newcommand{\alp}{\alpha}
\newcommand{\bet}{\beta}
\newcommand{\gam}{\gamma}
\newcommand{\ka}{k}
\newcommand{\cF}{\mathcal{F}}
\newcommand{\cH}{\mathcal{H}}
\newcommand{\cFs}{\widehat{\mathcal{F}}}
\newcommand{\cS}{\mathcal S}
\newcommand{\cB}{\mathcal B}
\newcommand{\cSs}{{\widehat{\mathcal S}}}
\newcommand{\cSst}{{\mathcal R}}
\newcommand{\cR}{\mathcal R}
\newcommand{\dw}{\mathbf{w}}
\newcommand{\dg}{\mathbf{g}}
\newcommand{\dtext}{inductive union maximizing function}
\newcommand{\dtex}{inductive union maximizing}
\newcommand{\dterm}{generating}
\newcommand{\Dtext}{Inductive union maximizing function}
\DeclareMathOperator{\col}{col}
\newcommand{\familyS}{\cS}
\newcommand{\N}{\mathbb N}
\newcommand{\Z}{\mathbb Z}
\newcommand{\R}{\mathbb R}
\newcommand{\F}{\mathbb F}
\newcommand{\bigO}{O}
\newcommand{\FPT}{\text{FPT}}
\newcommand{\Wone}{W[1]}
\newcommand{\Wtwo}{W[2]}
\DeclareMathOperator{\poly}{poly}
\DeclareMathOperator{\polylog}{polylog}
\newcommand{\mdmc}{UFLP-MC}
\newcommand{\spmc}{SPMC}
\newcommand{\mdmcs}{UFLP-MC}
\newcommand{\mdmcc}{UFLP-MCC}
\newcommand{\clique}{\textsc{Clique}}
\newtheorem{theorem}{Theorem}
\numberwithin{theorem}{section}
\newtheorem{remark}[theorem]{Remark}
\newtheorem{corollary}[theorem]{Corollary}
\newtheorem{lemma}[theorem]{Lemma}
\newtheorem{proposition}[theorem]{Proposition}
\newtheorem{definition}[theorem]{Definition}
\newtheorem{construction}[theorem]{Construction}
\newtheorem{problem}[theorem]{Problem}
\newtheorem{example}[theorem]{Example}
\newtheorem{observation}[theorem]{Observation}
\crefname{line}{line}{lines}
\crefname{lemma}{Lemma}{Lemmas}
\crefname{problem}{Problem}{Problems}
\crefname{remarkn}{Remark}{Remarks}
\crefname{proposition}{Proposition}{Propositions}
\crefname{example}{Example}{Examples}
\newcommand{\fieldconst}[1]{
	$\mathbb F = \mathbb F_{p^d}$ such that  $p$ is a prime polynomially upper\hyp bounded in #1}
\newproof{proof}{Proof}
\journal{Discrete Applied Mathematics}
\begin{document}

\begin{frontmatter}

  \title{Representative families for matroid intersections,
    with applications to\\location,
    packing, and covering problems\tnoteref{t1}}
      \tnotetext[t1]{The results in this work
        were announced in an extended abstract at CIAC 2019 \citep{BTZ19}.}

\author[1]{René van Bevern\corref{cor1}
}\ead{rvb@nsu.ru}

\author[1,2]{Oxana Yu.\ Tsidulko
}
\ead{tsidulko@math.nsc.ru}

\author[3]{Philipp Zschoche
}
\ead{zschoche@tu-berlin.de}

\cortext[cor1]{Correspondence to: Department of Mechanics and Mathematics,
  Novosibirsk State University,
  ul.\ Pirogova 1,
  Novosibirsk, 630090, Russian Federation}
\address[1]{Department of Mechanics and Mathematics, Novosibirsk State University, Novosibirsk, Russian Federation}
\address[2]{Sobolev Institute of Mathematics of the Siberian Branch of the Russian Academy of Sciences, Novosibirsk, Russian Federation}
\address[3]{Technische Universität Berlin, Faculty IV, Algorithmics and Computational Complexity, Germany}

\begin{abstract}
  We show algorithms
  for computing representative families
  for matroid intersections
  and use them in fixed\hyp parameter algorithms
  for set packing, set covering, and facility location problems
  with multiple matroid constraints.
  We
  complement our tractability results
  by hardness results.
\end{abstract}

\begin{keyword}
  combinatorial optimization \sep
  matroid set packing \sep matroid parity \sep matroid median
\end{keyword}

\end{frontmatter}
\thispagestyle{empty}

\section{Introduction}
\label{sec:introduction}


\noindent
  Matroids are
  an important tool 
in the development
of fixed\hyp parameter algorithms~\citep{PS16}
and many of these algorithms are based
on the fast construction of
so\hyp called \emph{representative families} 
\citep{Mar09,FLPS16,FLPS17,LMPS18,GMPZ15,Zehavi15}.
Generalizing these,
we present algorithms to compute representative families
not for single matroids, but for
intersections of multiple matroids.

\looseness=-1
Using this generalization,
we derive fixed\hyp parameter algorithms
for packing, covering and facility location
problems with multiple matroid constraints.
Herein,
our algorithms for packing and covering problems
generalize and unify several fixed\hyp parameter algorithms
for covering problems
known in the literature \citep{Mar08,Mar09,BPS16}.
In the context of facility location problems,
matroid constraints
can model natural facility location scenarios like
``open at most $k_i$~facilities of type~$i$'' \citep{CLLW16},
even if types are not disjoint,
moving facilities \citep{Swa16},
but also problems in social network
analysis \citep[Section~3]{BTZ19}.

\paragraph{Organization of this work}
In \cref{sec:preliminaries}
we provide basic definitions
from parameterized complexity and
matroid theory.
\cref{sec:tools} presents
our algorithms for constructing representative families
for matroid intersections.
\cref{sec:spmc,sec:algorithm} present
our fixed\hyp parameter algorithm for set packing
and facility location problems
with multiple matroid constraints, respectively.
Related work and context for each result
is provided in the respective subsections.

\section{Preliminaries}

\label{sec:preliminaries}


\subsection{Parameterized complexity}
\noindent
The main idea of fixed\hyp parameter algorithms
is to accept the exponential running time
seemingly inherent to solving NP-hard problems,
yet to confine the combinatorial explosion
to a parameter of the problem,
which can be small in applications~\citep{CFK+15}.
A problem is \emph{fixed\hyp parameter tractable}
if it can be solved in \(f(k)\cdot \poly(n)\)~time
on inputs of length~\(n\)
and some function~\(f\)
depending only on some parameter~\(k\).
Note that
this requirement is stronger than
an algorithm
that merely runs in polynomial time
for fixed~\(k\),
say, in \(O(n^k)\)~time,
which is inapplicable
even for small values of~\(k\), say $k=10$.
The parameterized analog
of NP and NP-hardness
is the $W$-hierarchy \(\FPT\subseteq\Wone\subseteq\Wtwo\subseteq\dots W[P]\subseteq\text{XP}\) and \(W[t]\)-hardness,
where \FPT{} is the class of
fixed\hyp parameter tractable
decision problems and
all inclusions are conjectured to be strict.
If some \(W[t]\)-hard problem
is in \FPT,
then \(\FPT=W[t]\)~\citep{CFK+15}.

\subsection{Sets and set functions}
\noindent
By \(\N\),
we denote the natural numbers
including zero.
By \(\F_p\),
we denote the field
on \(p\)~elements.
Usually,
we study set systems over a
\emph{universe}~$U=\{1,\dots,n\}$.
By \(A\uplus B\),
we denote the union of sets~\(A\) and~\(B\)
that we require to be disjoint.
By convention,
the intersection of no sets is the whole universe
and the union of no sets is the empty set.

\begin{definition}[partition]
	We call \(Z_1,\dots,Z_\ell\) a \emph{partition} of a set~\(A\)
	if \(Z_1\uplus\dots\uplus Z_\ell=A\)
	and \(Z_i\ne\emptyset\) for each~\(i\in\{1,\dots,\ell\}\).
\end{definition}

\begin{definition}[\boldmath\(\gam\)-family]
	We call \(A\subseteq 2^U\)
	an \emph{\(\gam\)}-family
	if each set in~\(A\) has cardinality
	exactly~\(\gam\).
\end{definition}

\begin{definition}[additive and
	submodular
	set functions]
	A set function \(w\colon 2^U\to\R\)
	is \emph{additive} if,
	for any subsets~\(A\cup B\subseteq U\),
	one has
	\[
	w(A\cup B)=w(A)+w(B)-w(A\cap B).
	\]
	If ``$\leq$'' holds instead of equality,
	then \(w\)~is called \emph{submodular}.
\end{definition}

\begin{definition}[coverage function]
		\label[definition]{def:covfunc}
	A set function
	\[
	w\colon 2^U\to\mathbb N,S\mapsto\Bigl|\bigcup_{u\in S}u\Bigr|,
	\]
	where \(U=2^V\), is a \emph{coverage function}.
\end{definition}
Coverage functions are non\hyp decreasing
and submodular \citep[Section~44.1a]{Sch03}.

\subsection{Matroid fundamentals}
\label{sec:matbas}
\noindent
For proofs of the following propositions
and for illustrative examples
of the following definitions,
we refer to the book by \citet{Oxl92}.

\begin{definition}[matroid]
	\label[definition]{def:matroid}
	A pair $(U,\I)$,
	where $U$~is the \emph{ground set}
	and $\I\subseteq 2^U$ is a family of \emph{independent sets},
	is a \emph{matroid} if the following holds:
	\begin{itemize}
		\item  $\emptyset \in \I$,
		\item  If $A' \subseteq A$ and $A \in \I$, then $A' \in \I$.
		\item  If $A,B \in \I$ and $|A| < |B|$, then there is an $x \in B \setminus A$ such that $A \cup \{x\} \in \I$.
	\end{itemize}
\end{definition}

\begin{definition}[basis, rank]
	An inclusion\hyp wise
	maximal independent set~\(A\in \I\)
	of a matroid~\(M=(U,\I)\)
	is a \emph{basis}.
	The cardinality
	of the bases of~\(M\)
	is
	called the \emph{rank} of~\(M\).
\end{definition}

\begin{definition}[free matroid]
	A \emph{free matroid} is a matroid~\((U,2^U)\)
	in which every set is independent.
\end{definition}


%

\begin{proposition}[matroid union, direct sum] 
	\label[proposition]{def:matunion}
	\label[proposition]{def:union}
	The \emph{union} 
	\[M_1\vee M_2=(U_1\cup U_2, \{J_1\cup J_2\mid J_1\in I_1,J_2\in I_2\})\]
	of two matroids \(M_1=(U_1,I_1)\) and \(M_2=(U_2,I_2)\)
	is a matroid.
	If \(U_1\cap U_2=\emptyset\),
	we write \(M_1\oplus M_2:=M_1\vee M_2\)
	and call their union \emph{direct sum}.
\end{proposition}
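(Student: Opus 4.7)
The plan is to verify the three axioms of \cref{def:matroid} for $M_1 \vee M_2$. The first two are routine: $\emptyset = \emptyset \cup \emptyset$ with each $\emptyset \in I_j$, so $\emptyset$ belongs to the claimed family; for hereditarity, given $A = J_1 \cup J_2$ with $J_j \in I_j$ and $A' \subseteq A$, set $J_1' := A' \cap J_1$ and $J_2' := A' \setminus J_1'$, so that $J_j' \subseteq J_j$ is independent in $M_j$ by the hereditary axiom applied to each matroid separately, and $A' = J_1' \cup J_2'$ witnesses $A' \in \mathcal I(M_1 \vee M_2)$.

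The substantive step is the augmentation axiom. For the direct sum case $U_1 \cap U_2 = \emptyset$, the decomposition of any independent set is forced: $A = A_1 \uplus A_2$ with $A_j = A \cap U_j$. If $A, B$ are independent in $M_1 \vee M_2$ with $|A| < |B|$, then $|A_j| < |B_j|$ for some $j \in \{1,2\}$; applying the exchange axiom of $M_j$ to $A_j, B_j$ yields an $x \in B_j \setminus A_j \subseteq B \setminus A$ with $A_j \cup \{x\} \in I_j$, and $A \cup \{x\}$ then decomposes as the required disjoint union of independent sets.

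The overlapping case is the classical matroid union theorem of Nash-Williams and Edmonds. The standard route is to establish the rank identity
\[
r_{M_1 \vee M_2}(S) = \min_{T \subseteq S}\bigl(|S \setminus T| + r_1(T) + r_2(T)\bigr),
\]
and then observe that the right-hand side, as a function of $S$, is normalized, non-negative, integer-valued, monotone, and submodular; hence it is the rank function of some matroid on $U_1 \cup U_2$, whose independent sets are precisely those $J$ with $r(J) = |J|$, which one checks coincide with $\{J_1 \cup J_2 : J_j \in I_j\}$. Submodularity of the minimum follows from submodularity of $r_1 + r_2$ via a short case analysis on the minimizers for $S \cup S'$ and $S \cap S'$. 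The main obstacle is the matching lower bound on the rank, which is proved by an augmenting-path argument on the bipartite incidence between $S$ and the pair $(M_1, M_2)$: starting from any decomposable $J \subseteq S$ and a subset $T$ attaining the minimum, one shows that either $|J|$ equals the claimed bound or an augmenting sequence exists that enlarges $J$. Since this is a textbook fact, in the write-up I would simply cite \citep{Oxl92} for the overlapping case and present only the direct-sum verification, which is what is actually invoked later in the paper.
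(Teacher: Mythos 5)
Your proposal is correct, and in spirit it matches the paper: the paper gives no proof of this proposition at all, stating before the matroid fundamentals that proofs of these propositions are deferred to \citet{Oxl92}, which is exactly what you do for the overlapping case (the Nash--Williams/Edmonds union theorem, whose rank-function route you sketch accurately, modulo the usual convention that for different ground sets one reads $r_i(T)$ as $r_i(T\cap U_i)$, i.e., extends each $M_i$ by loops). What you add beyond the paper is a self-contained verification of the axioms of \cref{def:matroid} in the direct-sum case, where the decomposition $A=(A\cap U_1)\uplus(A\cap U_2)$ is forced and the exchange axiom transfers componentwise; this is sound, and it is indeed the case the paper leans on most (direct sums in \cref{lem:sum-matroid} and in the padding constructions, while the one genuinely overlapping use, $M\vee(X,2^X)$ in \cref{lemma:matroid-set-ext}, is reduced there to a direct sum by first restricting $M$ to $U\setminus X$). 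So your hybrid presentation --- elementary argument for the disjoint case, textbook citation for the general union --- is a reasonable, slightly more informative substitute for the paper's bare citation, with no gap to repair.
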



\begin{definition}[uniform, partition, and multicolored matroids]
	A \emph{uniform matroid of rank~\(r\)}
	is a matroid~\((U,\I)\)
	such that \(\I:=\{S\subseteq U\mid |S|\leq r\}\).
	The direct sum of uniform matroids
	is called \emph{partition matroid}.
	We call the direct sum of
	uniform matroids
	of rank one
	a \emph{multicolored matroid}.
\end{definition}
Partition matroids are useful
to model constraints
of type ``at most \(k_i\)~items of type~\(i\)''.
\subsection{Matroid representations}
\noindent
In our work,
we will use two different
ways of representing matroids.
The most general representation of matroids
is an \emph{independence oracle},
which in constant time decides
whether a given set
is independent in a given matroid.
One can imagine it
as an algorithm
that decides independence quickly.
We will also use \emph{linear representations}:

\begin{definition}[linear matroids]
  An \emph{\((r\times n)\)-representation} of a matroid~\(M=(U,\I)\)
  over a field~$\F$ is
  a matrix~\(A\in\F^{r\times n}\)
  whose columns are
  labeled by the $n$~elements of~\(U\)
  such that \(S\in \I\)
  if and only if
  the columns of~\(A\)
  with labels in~\(S\)
  are linearly independent over~\(\F\).
  A matroid is \emph{linear}  or \emph{representable over a field \(\F\)}
  if it has a representation over~$\F$.
\end{definition}
One can transform
a representation
of a matroid with rank~\(r\)
over a field~$\F$
into a representation
over~\(\F\) with \(r\)~rows
\citep[Section~2.2]{Oxl92}
and we will always assume
to work with linear representations
of this form.
Not all matroids are representable
over all fields
\citep[Theorem~6.5.4]{Oxl92}.
Some are not representable 
over any field~\citep[Example~1.5.14]{Oxl92}.
If $A_i$ is a $(a_i \times b_i)$-representation 
of a matroid $M_i$ over field $\mathbb F$
for $i \in \{1,\dots, m\}$,
then
a $(\sum_{i=1}^m a_i \times \sum_{i=1}^m b_i)$-representation
of~$\bigoplus_{i=1}^m M_i$ over~\(\mathbb F\)
is computable
in time of
$\bigO(\sum_{i=1}^m a_i \cdot \sum_{i=1}^m b_i)$~operations over~$\mathbb F$
\citep[Exercise~6, p.~132]{Oxl92}.
Uniform matroids
of rank~\(r\)
on a universe of size~\(n\)
are representable over
all fields with at least \(n\)~elements
\citep[Section~3.5]{Mar09}.
The uniform matroid of rank one
is trivially representable
over \emph{all} fields.
Thus,
so are multicolored matroids.


\begin{lemma}
	\label[lemma]{lemma:matroid-set-ext}
	Given an \((r\times n)\)-representation~\(A\)
	for a matroid~$M$
	over a field~\(\mathbb F\),
	a representation of~$M \vee (X,2^X)$
	over~\(\mathbb F\)
	is computable in time of $(n+|X|)(r+|X|)$
	operations
	over~\(\mathbb F\).
\end{lemma}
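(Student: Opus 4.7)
The plan is to exhibit an explicit block\hyp structured matrix and to verify that it represents $M \vee (X,2^X)$. Write $U$ for the ground set of $M$, and let $A$ be the given $(r\times n)$\hyp representation with columns $\{A[u]\}_{u\in U}$. I will construct an $((r+|X|)\times |U\cup X|)$\hyp matrix $A'$ over $\mathbb F$ whose columns are indexed by $U\cup X$, as follows. For $u\in U\setminus X$, let column $u$ of $A'$ be the vector $(A[u],\mathbf 0)^\top\in\mathbb F^{r+|X|}$ obtained by padding $A[u]$ with $|X|$ zeros below. For $x\in X$, let column $x$ of $A'$ be $(\mathbf 0, e_x)^\top$, where $e_x$ denotes the standard basis vector of $\mathbb F^{|X|}$ indexed by~$x$; in particular, for elements of $U\cap X$ we \emph{replace} their old column from $A$ by this new column.

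Next I would show that a set $S\subseteq U\cup X$ is linearly independent in $A'$ iff $S\cap (U\setminus X)\in\mathcal I(M)$. The columns of $A'$ indexed by $U\setminus X$ are supported only in the top $r$ rows, while the columns indexed by $X$ are supported only in the bottom $|X|$ rows, so linear independence of the two blocks decouples. The bottom block restricted to $S\cap X$ consists of distinct standard basis vectors $\{e_x\}_{x\in S\cap X}$, which are always linearly independent; the top block restricted to $S\cap (U\setminus X)$ consists of the columns $\{A[u]\}_{u\in S\cap (U\setminus X)}$ and is linearly independent over $\mathbb F$ exactly when $S\cap (U\setminus X)\in\mathcal I(M)$.

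Then I verify that this matches independence in the union. From \cref{def:union}, $S\in\mathcal I(M\vee(X,2^X))$ iff $S=J_1\cup J_2$ for some $J_1\in\mathcal I(M)$ and $J_2\subseteq X$. Taking $J_2:=S\cap X$ forces $J_1\supseteq S\setminus X = S\cap(U\setminus X)$ and $J_1\subseteq S\cap U$, so by hereditariness of $\mathcal I(M)$ such a $J_1$ exists iff $S\cap(U\setminus X)\in\mathcal I(M)$. Combined with the previous step, this shows that $A'$ represents $M\vee(X,2^X)$ over $\mathbb F$.

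Finally, the running time is straightforward: $A'$ has at most $(r+|X|)(n+|X|)$ entries, and each entry is determined in constant time by a table lookup into $A$ or by the definition of $e_x$; constructing $A'$ therefore costs $(n+|X|)(r+|X|)$ operations over~$\mathbb F$. I expect no real obstacle here: the only point worth emphasizing is that even elements of $U\cap X$ (including loops of $M$) must end up with a nonzero column in $A'$, which is exactly why we overwrite their columns with $(\mathbf 0, e_x)^\top$ rather than keeping the original $A[u]$.
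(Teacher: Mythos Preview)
Your proof is correct and follows essentially the same approach as the paper: restrict $M$ to $U\setminus X$ by deleting the columns indexed by~$X$, represent the free matroid on~$X$ by an identity block, and form the block\hyp diagonal direct\hyp sum matrix. The paper states this more abstractly by invoking the identity $M\vee(X,2^X)=M'\oplus(X,2^X)$ for $M'=M|_{U\setminus X}$ and citing the standard direct\hyp sum representation, whereas you write out the matrix explicitly and verify the independence characterization by hand; the content is the same.
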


\begin{proof}
Let \(M=(U,\I)\).
If $U \cap X \not = \emptyset$,
consider the restriction~$M' = (U \setminus X,\{ J\subseteq U\setminus X \mid J \in \I\})$
of~\(M\) to~$U \setminus X$,
which is again a matroid
\citep[Section~1.3]{Oxl92}.
If~\(U\cap X=\emptyset\), then we get $M'=M$.
A linear representation~\(A'\) for~\(M'\)
can be obtained
from a linear representation of~\(A\) for~\(M\)
by removing the columns
labeled by elements in~$X$.
The free matroid~$(X,2^X)$
is representable by the identity matrix
over any field.
Thus,
\(M\vee (X,2^X)=M'\vee (X,2^X)=M'\oplus(X,2^X)\)
is a direct sum
of two matroids
with known linear representations.
The linear representation
of this direct sum
can therefore
be easily obtained
in the claimed time
\citep[Exercise~6, p.~132]{Oxl92}.
\qed\end{proof}

\subsection{Matroids truncations}

\begin{definition}[truncation]\label[definition]{def:trunc}
	The \emph{$k$-truncation}
	of a matroid~$(U, \I)$
	is a matroid~$(U, \I')$
	with~$\I' = \{ S \subseteq U \mid S\in \I\wedge |S| \leq k \}$.
	Moreover,
	if \(A\)~is a linear representation of a matroid
	and \(A'\)~is a linear representation of its
	truncation,
	we will also
	call~\(A'\) a truncation of~\(A\).
\end{definition}

%

\begin{proposition}[{{\citet[Theorem~3.15]{LMPS18}}}]
	\label[proposition]{prop:finite-trunc}
	Let $A$~be an $(r \times n)$-matrix
	of rank~\(r\)
	over a finite field~$\F_{p^d}$,
	where \(p\)~is a prime number
	which is polynomially upper-bounded by 
	the length of the encoding of $A$
	as a binary string.
	For any $k\in\{1,\dots,r\}$,
	we can compute a $k$-truncation of~$A$
	over a finite field extension~$\mathbb K\supseteq\F$
	in time of a polynomial number of field operations over~$\mathbb F$,
	where $\mathbb K = \F_{p^{rkd}}$.
\end{proposition}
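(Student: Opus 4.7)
The plan is to exhibit a small random\hyp like matrix~$B$ of size $k\times r$ such that $BA$ represents the $k$-truncation of~$M=(U,\I)$, and then to argue that a suitable~$B$ can be found deterministically over a modest field extension. Specifically, I would use a Vandermonde\hyp style construction: define $B_{ij}=y_j^{i-1}$ for parameters $y_1,\dots,y_r$ to be chosen. The candidate representation of the $k$-truncation is the product~$BA$, which is again an $(k\times n)$-matrix over~$\F$ (or its extension) once the~$y_j$ are fixed.

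The first step is a correctness argument: show that for generic $y_1,\dots,y_r$, the columns of $BA$ indexed by a $k$-set $S\subseteq U$ are linearly independent if and only if $S$ is independent in~$M$. The ``only if'' direction is free: if the columns of~$A$ indexed by~$S$ have rank less than~$k$, then so do those of~$BA$ by subadditivity of rank. For the ``if'' direction, apply the Cauchy\hyp Binet formula to write
\[
\det(BA|_S)=\sum_{T\subseteq\{1,\dots,r\},\,|T|=k}\det(B|_T)\det(A|_S^T),
\]
and exploit that the Vandermonde minors satisfy $\det(B|_T)=\prod_{i<j,\,i,j\in T}(y_j-y_i)\neq 0$ as polynomials in the~$y_j$. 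Hence, whenever the Pl\"ucker coordinates $\det(A|_S^T)$ are not all zero, which happens exactly when $A|_S$ has rank~$k$, the determinant $\det(BA|_S)$ is a nonzero polynomial in $y_1,\dots,y_r$. Taking the product
\[
P(y_1,\dots,y_r):=\prod_{S\,:\,A|_S\text{ has rank }k}\det(BA|_S)
\]
over all rank-$k$ sets yields a single nonzero polynomial whose nonvanishing certifies that $BA$ represents the $k$-truncation.

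The second step is to find an assignment in which~$P$ evaluates nonzero without probabilistic tools. Each factor~$\det(BA|_S)$ has total degree at most~$k(k-1)/2$ coming from the Vandermonde structure, and there are at most~$\binom{n}{k}$ factors, so a crude Schwartz\hyp Zippel bound says a random evaluation in an extension of degree roughly polynomial in $n,k$ suffices. To get down to the claimed extension degree $rkd$, one argues more carefully by packaging the ``bad'' points for each~$S$ into a low\hyp degree variety and brute\hyp force searching a small set of candidate assignments, or by fixing the~$y_j$ to powers of a primitive element of the extension $\F_{p^{rkd}}$ and using algebraic identities to certify nonvanishing. Either way, once a good assignment is found, the matrix~$BA$ can be written down in polynomially many $\F$-operations since all intermediate arithmetic takes place in $\F_{p^{rkd}}$, whose elements are represented by $rkd$ coefficients over~$\F_p$.

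The main obstacle is the third step: tightening the degree of the extension to exactly $rkd$ as in the statement. A naive Schwartz\hyp Zippel\hyp style argument will produce an extension degree polynomial in $n$ rather than linear in $rk$, so the improvement requires replacing the ``product polynomial $P$'' analysis by a direct bound on a single $k\times k$ determinant with Vandermonde\hyp type entries and arguing that the $y_j$ can be chosen as distinct powers of a generator, whose existence in $\F_{p^{rkd}}$ follows from the size of that field. The remaining bookkeeping---constructing $\F_{p^{rkd}}$ via an irreducible polynomial of degree $rkd$ over $\F_p$, which is possible in time polynomial in $rkd$ and $\log p$---is standard.
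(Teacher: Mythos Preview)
This proposition is not proved in the paper; it is quoted from \citet{LMPS18}, and the paper uses it as a black box. \Cref{rem:trunc-extension} sketches the cited argument, so that is the only thing to compare against. Your outline is in the right neighbourhood---Cauchy--Binet plus a structured multiplier~$B$---but the place you yourself flag as ``the main obstacle'' is exactly where the real content lies, and your resolution of it is too vague to count as a proof.

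The idea you are missing is this: instead of treating $y_1,\dots,y_r$ as independent parameters and analysing the product~$P$ over all $\binom{n}{k}$ sets (which forces a Schwartz--Zippel bound depending on~$n$), one works over the field of fractions~$\F(X)$ in a \emph{single} indeterminate~$X$, choosing the entries of~$B$ so that each $k\times k$ minor $\det((BA)|_S)$ is a \emph{univariate} polynomial in~$X$ of degree strictly less than~$rk$. Now let $\alpha\in\F_{p^{rkd}}$ be an element whose minimal polynomial over~$\F_{p^d}$ has degree~$rk$ (that is, a generator of the extension). Any nonzero polynomial over~$\F_{p^d}$ of degree less than~$rk$ cannot vanish at~$\alpha$, since that would contradict minimality of the minimal polynomial. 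Hence every nonzero minor remains nonzero after the substitution $X\mapsto\alpha$, \emph{simultaneously for all~$S$}, with no union bound and no dependence on~$n$. This is precisely what \cref{rem:trunc-extension} records: first compute the truncation over~$\F(X)$ via \citet[Theorem~3.14]{LMPS18}, then find an irreducible polynomial of degree~$rk$ over~$\F$ and specialise~$X$ to one of its roots. Your phrase ``using algebraic identities to certify nonvanishing'' gestures in this direction, but without the minimal-polynomial observation the argument does not close, and your alternative of ``brute-force searching a small set of candidate assignments'' does not work either, because you have not bounded the number of bad assignments independently of~$n$.
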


\begin{remark}\label[remarkn]{rem:trunc-extension}
	The proof of \citet[Theorem~3.15]{LMPS18}
	shows
	that the field extension~$\mathbb K\supseteq\mathbb F:=\F_{p^d}$
	in \cref{prop:finite-trunc}
	can be chosen as
	$\mathbb K =\F_{p^{sd}}$ for any integer~$s \ge rk$.
	The time for computing
	the truncation
	consists of computing a truncation
	over the field of fractions~\(\mathbb F(X)\)
	using \(O(nkr)\)~operations over~\(\mathbb F\)
	via Theorem~3.14 of \citet{LMPS18}
	and then computing an irreducible
	polynomial of degree~\(s\)
	over~\(\mathbb F\)
	in
	\(s^4d^2\sqrt{p}\cdot\polylog(s,p,d)\)~operations
	over~$\mathbb F$ \citep{Sho90}.
\end{remark}
\cref{prop:finite-trunc}
applied according to \cref{rem:trunc-extension}
immediately yields:

\begin{corollary}
		\label[corollary]{rem:same_field_trunc}
	\looseness=-1
	For \(i\in\{1,\dots,m\}\),
	let $A_i$~be
	$(r_i \times n)$-matrices
	over~$\F_{p^d}$.
	Given a
	natural number~$k \leq \min\{r_i\mid 1\leq i\leq m\}$,
	\(k\)-truncations
	of the~\(A_i\)
	over the same
	finite field extension~$\F_{p^{rkd}}\supseteq\F$
	are computable
	in \(O(mnkr)+r^4k^4d^2\sqrt p\cdot\polylog(r,k,p,d)\)~operations
	over~\(\mathbb F\),
	where $r= \max\{r_i\mid 1\leq i\leq m\}$.
\end{corollary}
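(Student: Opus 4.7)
The plan is to apply \cref{prop:finite-trunc} according to \cref{rem:trunc-extension}, exploiting the freedom there in the choice of the extension degree~$s$. I would pick $s := rk$ with $r = \max_i r_i$; since $s \geq r_i k$ for every $i$, \cref{rem:trunc-extension} guarantees that the single finite field $\F_{p^{sd}} = \F_{p^{rkd}}$ serves as a common extension of $\F = \F_{p^d}$ over which each $k$-truncation of $A_i$ can be realized.

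First, I would construct one irreducible polynomial of degree $s = rk$ over~$\F$ using Shoup's algorithm. By \cref{rem:trunc-extension} this costs $s^4 d^2 \sqrt{p}\cdot \polylog(s,p,d) = r^4 k^4 d^2 \sqrt{p}\cdot \polylog(r,k,p,d)$ operations over~$\F$, and produces a concrete representation of $\F_{p^{rkd}}$ that will be reused across all $m$ truncations. Crucially, this step is executed only once, not once per matrix.

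Then, for each $i \in \{1, \dots, m\}$, I would run the truncation procedure from Theorem~3.14 of \citet{LMPS18} on~$A_i$, computing a $k$-truncation over the field of fractions~$\F(X)$ in $O(nkr_i)$ operations over~$\F$ and specializing~$X$ to an element of the already-constructed field~$\F_{p^{rkd}}$. Summing over~$i$ yields $\sum_{i=1}^m O(nkr_i) = O(mnkr)$ operations over~$\F$, and adding the one-time polynomial construction cost gives the claimed bound.

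The only point requiring care — and the reason the bound is not merely $m$ times that of \cref{prop:finite-trunc} — is that a single extension field must simultaneously accommodate all $m$ truncations. This follows from the monotonicity of the condition $s \geq r_i k$ in~$r_i$: the uniform choice $s = rk$ with $r = \max_i r_i$ is valid for every matrix, so the expensive field-extension construction is amortized across all of them rather than paid $m$ times.
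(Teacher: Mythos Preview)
Your proposal is correct and matches the paper's approach exactly: choose the common extension degree $s = rk$ so that $s \geq r_ik$ for every~$i$, construct the degree-$rk$ irreducible polynomial once at cost $r^4k^4d^2\sqrt p\cdot\polylog(r,k,p,d)$, and then compute each truncation in $O(nkr_i)$ operations via \citet[Theorem~3.14]{LMPS18}, summing to $O(mnkr)$. The paper states precisely this, noting that the additive term arises because the irreducible polynomial need only be constructed once.
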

Herein,
the additive running time is due to the fact
that we only have to construct the irreducible
polynomial of degree~\(rk\) once
in order to represent the truncated
matroids over the same field extension~\(\F_{p^{rkd}}\).

\section{Representative families for matroid intersections}
\label{sec:tools}

\noindent
Intuitively,
a representative
of some 
set family~\(\cS\)
for a matroid~\(M\) 
is a subfamily~\(\cSs\subseteq\cS\)
such that,
if \(\cS\)~contains a set~\(X\)
that can be extended to
a basis of~\(M\) by adding adding a set~$Y$,
then \(\cSs\) contains a set~$\widehat X$
that can also be extended to a basis of~$M$
by adding~$Y$.
Herein,
the representative~$\cSs$
may be significantly smaller than
the original family~$\cS$,
so that algorithms can gain a speed\hyp up
by working on~$\cSs$ instead of~$\cS$.

\citet{Mar09} first
used representative families 
in randomized
fixed\hyp parameter algorithms
for the NP-hard
\textsc{Matroid Intersection} problem,
where the task is to decide whether
a set is independent in several given matroids.
Representative families have subsequently
been generalized to weighted sets by \citet{FLPS16}
and their construction has been derandomized
by \citet{LMPS18}.
We generalize
this concept to representative
families for matroid \emph{intersections}:

\begin{definition}[\boldmath max intersection $q$-representative family]\label[definition]{def:qrep}
	Given matroids~$\{(U,\I_i)\}_{i=1}^{m}$,
	a~family~$\familyS\subseteq 2^U$,
	and a function~$w\colon \familyS \rightarrow \R$,
	we say that a subfamily~$\cSs \subseteq \familyS$
	is \emph{max intersection $q$-representative
		for $\familyS$ with respect to~\(w\)}
	if,
	for each set~$Y \subseteq U$ of size at most~$q$,
	it holds that:
				if there is a set~$X \in \familyS$  with
		$X \uplus Y \in \bigcap_{i=1}^m\I_i$, 
				then there is a
		set~$\widehat X \in \cSs$ with
		$\widehat X \uplus Y \in \bigcap_{i=1}^m\I_i$ and
		$w(\widehat X) \geq w(X)$.	
	If $m=1$, then we call~$\cSs$
	a \emph{max $q$-representative family} of $\familyS$.
\end{definition}
In this section,
we will show
how to compute max intersection representative families
for  matroids~$\{(U,\I_i)\}_{i=1}^{m}$.
More generally,
for some \(\cH\subseteq 2^U\),
we will compute max intersection representatives for
subsets of the family
\[
  \cB(\cH):=\Bigl\{\biguplus_{j=1}^iH_j\Bigm| i\in\N, \,H_1,\dots,H_i\in\cH\Bigr\}
\]
in a time that will grow merely linearly
in~\(|\cH|\),
whereas the size of~$\cB(\cH)$ is generally
exponential in~\(|\cH|\).
For this to work,
we require the weights of the sets in $\cB(\cH)$
to be computable from weights of sets in~$\cH$
using \emph{inductive union maximizing functions},
which we introduce in \cref{sec:imuf}.
In \cref{sec:rep-unions},
we show how to compute
representatives
with respect to inductive union maximizing set functions
in a single matroid.
In \cref{sec:rep-intersect},
we generalize this result to
max \emph{intersection} representative families
(for multiple matroids).


\subsection{\Dtext{}s}
\label{sec:imuf}


\begin{definition}[\dtext{}]
		\label[definition]{def:dtext}
	Let \(\cH\subseteq 2^U\).
	A set function~\(\dw\colon \cB(\cH)\to\R\)
	is called an \emph{\dtext{}}
	if there is
	a function~\(\dg\colon\R\times \cH\to\R\)
	that is non-decreasing in its first argument and
	such that, for each~\(X\ne\emptyset\),
	\[
	\dw(X)=\smashoperator{\max_{\substack{H\in \cH,S\in\cB(\cH)\\S\uplus H=X}}}\dg(\dw(S),H).
	\]
\end{definition}
An \dtext{}~\(\dw\)
is fully determined
by the value~\(\dw(\emptyset)\)
and the function~\(\dg\).
We thus also say that $\dg$~\emph{generates}~$\dw$.
\Dtext{}s resemble
primitive recursive functions
on natural numbers,
where \(S\uplus H\) plays the role
of the ``successor'' of~\(S\)
in primitive recursion.
We take the maximum over all
partitions~$S\uplus H=X$ since
the partition of~$X$
into a set in~$\cH$ and a set in~$\cB(\cH)$ is not unique.
We now show some examples and counterexamples
for \dtext{}s.

\begin{example}
		\label[example]{ex:ws}
	Let \(\cH\subseteq 2^U\) and
	\(w\colon \cH\to\R\).
        The function~\(\dw\)
	determined by
	\(\dw(\emptyset)=0\) and
	\(\dg\colon (k,H)\mapsto k+w(H)\)
        is an \dtext{}.
	Concretely,
	for \(\emptyset\ne X\subseteq\cB(\cH)\),
	one gets
	\[
	w_{\Sigma}(X):=\dw(X)=\max_{X=H_1\uplus\dots \uplus H_i\atop
		H_1,\dots,H_i \in \cH}\sum_{j=1}^{i}w(H_j)
	\]
	due to the associativity and commutativity
	of the maximum and sum.
\end{example}
\Dtext{}s generalize additive set functions:
\begin{example}
	\label[example]{ex:additive}
	Any additive set function~\(w\colon 2^U\to\R\)
	is \dtex{}
	since,
	for the \dtext{}~\(w_{\Sigma}\)
	in \cref{ex:ws}, one has
	\(w_{\Sigma}(X)=\sum_{j=1}^{i}w(H_j)=w(X)\)
	for \emph{any} partition
	\(X=H_1\uplus\dots \uplus H_j\).
\end{example}
However,
submodular functions
are generally not
\dtex{}:

\begin{example}
		\label[example]{ex:nocoverage}
	Let \(f\colon 2^U\to\R\)~be a coverage function
	(cf.\ \cref{def:covfunc})
	on \(U=\{u_1,v_1,u_2,v_2\}\) with
	\begin{align*}
	u_1&=\{a\}, & v_1&=\{c\}, & u_2 &=\{a,b\},&&\text{and} & v_2=\{c,d\}.
	\end{align*}
	Assume that
	\(f\)~is \dtex{}
	for   \(\cH=\{\{u_1\},\{v_1\},\{u_2\},\{v_2\}\}\).
	The only partition of~\(\{u_1,v_2\}\)
	into sets in~\(\cH\) is~\(\{u_1\}\uplus\{v_2\}\).
	Thus,
	$f(\{u_1,v_2\})=\dg(f(\{u_1\}),\{v_2\})$
	or
	$f(\{u_1,v_2\})=\dg(f(\{v_2\}),\{u_1\})$,
	whichever is larger.
	In the first case, we get the contradiction
	\begin{align*}
	3=f(\{u_1,v_2\})=\dg(f(\{u_1\}),\{v_2\}) = \dg(1,\{v_2\}) =\dg(f(\{v_1\}),\{v_2\})
	\leq f(\{v_1,v_2\})=2.
	\end{align*}
	Otherwise,
	in the second case, we get the contradiction
		\begin{align*}
	3=f(\{u_1,v_2\})=\dg(f(\{v_2\}),\{u_1\})=\dg(1,\{u_1\})=\dg(f(\{u_2\}),\{u_1\})
	\leq f(\{u_1,u_2\})=2.
	\end{align*}
\end{example}
We see that
coverage functions are not
\dtex{} since
the function~\(\dg\)
\dterm{}~\(\dw\) in \cref{def:dtext}
is allowed to
depend only on \(\dw(S)\)~in the first argument,
not on~\(S\) itself.
We will indeed see that this requirement is crucial
and presume that \dtext{}s
are the most general class of functions
with respect to which
we can prove the results in \cref{sec:rep-unions,sec:rep-intersect}.

\subsection{Computing representative families for unions of disjoint sets}
\label{sec:rep-unions}
\noindent
In this section,
we show how to
compute
a representative of subfamilies of~$\cB(\cH)$
with respect to inductive union maximizing set functions
in a single matroid.
We generalize it
to multiple matroids in
\cref{sec:rep-intersect}.

%

\begin{proposition}
  \label[proposition]{thm:core-tool}
	Let $M = (U,\I)$~be a linear matroid
	of rank $r=(\alp+\bet)\gam\geq 1$ with $\alp,\bet\in\mathbb N$, 
	$\cH\subseteq 2^U$~be a $\gam$-family
	of size~$t$, 
	and \(\dw\colon\cB(\cH)\to\R\)~be an \dtext{}
	(cf.\ \cref{def:dtext}) generated by   \(\dw(\emptyset)\)
        and the
	function~\(\dg \colon \R\times \cH\to\R\)
        non\hyp decreasing in the first argument.
	
	Given $\alpha\in\N$,
        a representation~$A$
	of~$M$ over a field~$\mathbb F$,
	the value~\(\dw(\emptyset)\), and the
	function~\(\dg\),
	one can
	compute a
	max $\bet \gam$-representative~$\widehat \familyS$ of 
	size~$r \choose \alp \gam$ for the 
	family
	\[\familyS = \{ S = H_1 \uplus \dots \uplus H_\alp 
	\mid
	S\in\I\text{ and }
	H_j \in \cH\text{ for }j \in \{1,\dots, \alp \}\}
	\]
	with respect to~\(\dw\)
	in time of $O(2^{\omega r}\cdot t)$~operations over $\mathbb F$
	and calls to the
	function~\(\dg\),  
	where $\omega\geq 2$
        is any constant such that two $(n\times n)$-matrices
        can be multiplied in $O(n^{\omega'})$~time for $\omega'<\omega$.
\end{proposition}
Before proving \cref{thm:core-tool},
we provide some context.
The main feature of \cref{thm:core-tool}
is that it allows us
to compute max intersection representatives
of the family~\(\cS\),
whose size may be
exponential in the size of~\(\cH\),
in time growing merely linearly
in the size of~\(\cH\).
The literature
uses several implicit
ad\hyp hoc
proofs of
variants of \cref{thm:core-tool}
in algorithms for concrete problems
\citep{Mar09,FLPS16,FLPS17}.
These proofs usually
use non\hyp negative additive
functions in place of~\(\dw\).
Our \cref{thm:core-tool}
does not require additivity,
yet,
as shown by \cref{ex:additive},
works perfectly fine
for all additive weight functions.

As shown by \cref{ex:nocoverage},
submodular functions
are not necessarily \dtex{}
and,
%
indeed,
we now show that
generalizing \cref{thm:core-tool}
even
to coverage functions
would yield \(\FPT=\Wtwo\).
The proof also makes
for an illustration
of \cref{def:qrep} and \cref{thm:core-tool}:

\begin{observation}
	If \cref{thm:core-tool}
	holds for coverage functions~\(\dw\),
	then \(\FPT=\Wtwo\).
\end{observation}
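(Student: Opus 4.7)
The plan is to derive an FPT algorithm for \textsc{Max Coverage} parameterized by the number of selected sets from the hypothetical coverage-function version of \cref{thm:core-tool}, contradicting the W[2]-hardness of \textsc{Max Coverage} (which follows from the W[2]-hardness of \textsc{Set Cover} since deciding whether $r$~sets cover all elements reduces \textsc{Set Cover} to \textsc{Max Coverage}, see also \cref{ex:maxcover}).

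Given a \textsc{Max Coverage} instance---a set system $\cH\subseteq 2^V$ with budget~$r$---I take the matroid ground set $U:=\cH$ and the singleton family $\cH':=\{\{H\}\mid H\in\cH\}$, which is a $\gam$-family of size $t:=|\cH|$ with $\gam=1$. Let $M$~be the uniform matroid of rank~$r$ on~$U$, which is representable over any field with at least $|U|$ elements. With $\alp:=r$ and $\bet:=0$, the matroid rank equals $(\alp+\bet)\gam=r$, as required. Define the weight $\dw\colon 2^U\to\N$ by $\dw(S):=\bigl|\bigcup_{H\in S}H\bigr|$; under the identification of elements of~$U$ with subsets of~$V$, this is a coverage function in the sense of \cref{def:covfunc}.

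Invoking the assumed coverage-function version of \cref{thm:core-tool} yields a max $0$-representative~$\widehat{\cS}$ of
\[
  \cS := \{H_1\uplus\dots\uplus H_r \mid H_j\in\cH',\ H_1\uplus\dots\uplus H_r\in\I\}
\]
of size $\binom{r}{r}=1$ in $2^{\bigO(r)}\cdot|\cH|$ operations over the field. The family~$\cS$ is in natural bijection with the $r$-element subfamilies of~$\cH$: the disjoint union of $r$~singletons records which $r$~sets of~$\cH$ are chosen, and the rank-$r$ matroid constraint automatically forces them to be distinct. Since $q=\bet\gam=0$, the representative property of \cref{def:qrep} only needs to be verified for $Y=\emptyset$, and since every $X\in\cS$ has $|X|=r$ equal to the rank and thus lies in~$\I$, we obtain, for each $X\in\cS$, some $\widehat X\in\widehat{\cS}$ with $\dw(\widehat X)\ge\dw(X)$. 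Hence the unique element of~$\widehat{\cS}$ witnesses the maximum coverage achievable by $r$~sets from~$\cH$.

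This solves \textsc{Max Coverage}, and hence \textsc{Set Cover}, in $2^{\bigO(r)}\cdot\poly(n)$ time, placing a W[2]-hard problem in \FPT{} and thus yielding $\FPT=\Wtwo$. The only real subtlety---and the main thing to double-check---is the $\gam=1$, $\bet=0$ encoding: it makes the matroid rank coincide with $(\alp+\bet)\gam$ and eliminates any need for an extension set~$Y$, so that a representative of size~$1$ already pins down the optimum.
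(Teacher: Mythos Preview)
Your proposal is correct and follows essentially the same approach as the paper: both encode \textsc{Max Coverage} via the uniform matroid of rank~$r$ on the set system, take $\alpha=r$, $\beta=0$, $\gamma=1$ with the singleton $1$-family, and observe that the resulting size-$1$ max $0$-representative pins down the optimum, yielding an FPT algorithm for a W[2]-hard problem. The only cosmetic difference is that the paper starts from an arbitrary coverage function and recognizes its maximization as \textsc{Max Coverage}, whereas you start from a \textsc{Max Coverage} instance and build the coverage function explicitly.
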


\begin{proof}
Consider a coverage
function~\(\dw\colon 2^U\to\mathbb N\).
The problem of finding a set
\(S\subseteq U\) with \(|S|\leq r\)
maximizing~\(\dw(S)\)
is known as
\textsc{Maximum Coverage}
and W[2]-hard parameterized
by~\(r\) \citep{CFK+15}.

Now, assume that we could
apply \cref{thm:core-tool}
with a uniform matroid~\(M=(U,\I)\)
of rank~\(r\),
\(\alp=r\), \(\bet=0\), \(\gam=1\),
and
the \(\gam\)-family $\cH = \{ \{ u \} \mid u \in U \}$
to compute a \(0\)-representative~\(\cSs\)
of size \({r\choose\alp\gam}=1\)
of the family
\begin{align*}
\cS=\{S=H_1\uplus\dots\uplus H_\alp\mid S\in\I
\text{ and }H_j\in\cH\text{ for }j\in\{1,\dots,\alp\} \}
=\{S\subseteq U\mid |S|= r\}
\end{align*}
with respect to~\(\dw\)
in time of
\(2^{O(r)}\cdot |\cH|=2^{O(r)}n\)~operations over~\(\F\),
where \(M\)~is representable over
\(\F=\F_{2^d}\)
for \(d=\lceil\log r\rceil\).
Then \(\cSs\)
only contains
the set~\(S\) with \(|S|=r\) maximizing~\(\dw(S)\).
Since each operation over~\(\F_{2^d}\)
can be carried out in \(\poly(d)\in\poly(n)\)~time,
we thus
solve the \(\Wtwo\)-hard
\textsc{Maximum Coverage}
problem
in \(2^{O(r)}\cdot \poly(n)\)~time,
which implies \(\FPT=\Wtwo\).
\qed\end{proof}
We now prove \cref{thm:core-tool}.
Like its implicit ad\hyp hoc proofs
in the literature \citep{Mar09,FLPS16,FLPS17},
we will prove it by iterative
application of the following known result.

\begin{proposition}[{\citet[Theorem~3.7]{FLPS16}}]
	\label[proposition]{prop:repset}
	Let $M = (U,\I)$~be a linear matroid
	of rank~$r=\alp + \bet$,
	$\familyS = \{S_1,\dots,S_t\}$~be
	an $\alp$-family of independent sets,
	and $w \colon \familyS \rightarrow \R$.%
	\footnote{\citet{FLPS16}
		require the weight function to be non\hyp negative.
		Yet their proof does not exploit non\hyp negativity.
		Moreover,
		one can always transform the weight function
		so that it is non\hyp negative and then transform it back.}
	Then,
	there exists a max $\bet$-representative~$\cSs \subseteq \familyS$
	of size at most~${r \choose \alp}$.
	Given a representation of~$M$
	over a field~$\mathbb F$,
	$\cSs$ can be found using
	\[\bigO\left({r \choose \alp}t\alp^{\omega} + t{r \choose \alp}^{\omega-1}\right)\text{ operations over~\(\F\)},\]
        where $\omega$
        is any constant such that two $(n\times n)$-matrices
        can be multiplied in $O(n^\omega)$~time.
\end{proposition}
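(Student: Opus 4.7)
The plan is to encode each $\alp$-sized independent set as a single vector in a common vector space so that extendibility to a basis becomes a linear–algebraic question. Fixing an $(r\times n)$-representation $A$ of~$M$, for each $S\in\familyS$ I would form the \emph{Plücker vector} $v_S\in\F^{\binom{r}{\alp}}$ whose coordinates, indexed by $\alp$-subsets~$I$ of the $r$ rows of~$A$, are the $\alp\times\alp$ minors $\det A[I,S]$. Applying Laplace expansion to the $r\times r$ determinant $\det A[\cdot,S\cup Y]$ along an $(\alp,\bet)$-column split yields the key identity
\[
  \det A[\cdot,S\cup Y]=\smashoperator{\sum_{I\subseteq[r],\,|I|=\alp}}\pm\det A[I,S]\cdot\det A[[r]\setminus I,Y]=\langle v_S,v_Y^*\rangle,
\]
where $v_Y^*\in\F^{\binom{r}{\alp}}$ is a signed vector of complementary minors depending only on~$Y$. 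Consequently $S\cup Y\in\I$ holds iff $v_S$ pairs non-trivially with~$v_Y^*$.

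Given this identity, I would construct~$\cSs$ by a weighted greedy over the Plücker images: sort $S_1,\dots,S_t$ by non-increasing~$w$, initialize $\cSs:=\emptyset$ together with an empty echelon matrix~$B$, and for each~$S_i$ in turn reduce $v_{S_i}$ against~$B$; if the reduced vector is nonzero, add $S_i$ to~$\cSs$ and absorb the new pivot into~$B$. Since the accepted vectors form a linearly independent set in a $\binom{r}{\alp}$-dimensional space, this immediately yields $|\cSs|\leq\binom{r}{\alp}$.

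For correctness, suppose $X\in\familyS$ satisfies $X\cup Y\in\I$ for some $Y$ with $|Y|\leq\bet$; we may assume $|Y|=\bet$ by augmenting via the matroid exchange axiom. If $X\in\cSs$ we are done; otherwise $v_X$ was rejected, so it lies in the span of previously accepted vectors, each belonging to a set of weight at least~$w(X)$, and we can write $v_X=\sum_{j:S_j\in\cSs,\,w(S_j)\geq w(X)}c_j v_{S_j}$. Pairing with~$v_Y^*$ and using linearity gives
\[
  0\neq\langle v_X,v_Y^*\rangle=\sum_j c_j\langle v_{S_j},v_Y^*\rangle,
\]
so some $\widehat X=S_j\in\cSs$ satisfies $\langle v_{\widehat X},v_Y^*\rangle\neq 0$, i.e.\ $\widehat X\cup Y\in\I$ with $w(\widehat X)\geq w(X)$; this is exactly \cref{def:qrep} in the single-matroid case.

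For the running time, each Plücker vector consists of $\binom{r}{\alp}$ minors, each computable in $\bigO(\alp^\omega)$ field operations via fast LU, giving $\bigO(t\binom{r}{\alp}\alp^\omega)$ overall for vector construction. Each greedy step then reduces one $\binom{r}{\alp}$-dimensional vector against the current echelon form, which I would implement to run in $\bigO(\binom{r}{\alp}^{\omega-1})$ amortised operations via block row updates, yielding $\bigO(t\binom{r}{\alp}^{\omega-1})$ across all~$t$ steps and matching the claimed bound. The main technical obstacle I anticipate is precisely this last step: a naive Gauss--Jordan reduction costs $\binom{r}{\alp}^2$ per insertion, and extracting the $(\omega-1)$-exponent requires batching pivot operations so that the incurred work can be paid for by a single fast rectangular matrix multiplication, in the spirit of standard dynamic Gaussian elimination arguments.
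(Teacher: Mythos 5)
Your proposal is correct and follows essentially the same route as the proof of the cited source (Fomin et al., Theorem~3.7), which the paper itself does not reprove: map each $\alp$-set to its vector of $\alp\times\alp$ minors (equivalently its wedge/Plücker vector), use the generalized Laplace expansion to turn extendibility by $Y$ into a nonzero bilinear pairing, and keep a greedy maximum-weight linearly independent subfamily, whose size is at most $\binom{r}{\alp}$ and whose exchange argument (a rejected $v_X$ lies in the span of accepted vectors of no smaller weight) gives representativity; note also that a nonzero pairing automatically forces $\widehat X\cap Y=\emptyset$, since a repeated column makes the corresponding $r\times r$ determinant vanish, so the disjointness required by the definition comes for free. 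The running-time accounting, including the batched Gaussian elimination needed for the $\binom{r}{\alp}^{\omega-1}$ term, likewise matches the cited analysis.
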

\cref{alg:q-repset}
computes
the representative families
required by \cref{thm:core-tool}
by iteratively applying \cref{prop:repset}.
\cref{thm:core-tool}
follows from
the correctness proof of
\cref{alg:q-repset} in \cref{lem:q-repset}
and its running time analysis in
\cref{lem:q-repset-running-time}.


%


\begin{algorithm}[t]
  \caption{for the proof of \cref{thm:core-tool}.}
  \label[algorithm]{alg:q-repset}    
  \begin{compactdesc}
  \item[Input:]
    $\alp \in \mathbb N$,
    a $\gam$-family $\cH\subseteq 2^U$ of size~$t$,
    a representation~$A$
    of a matroid~\(M=(U,\I)\) 
    of rank $r=(\alp+\bet)\cdot \gam\geq 1$
    over $\mathbb F$,
    a function~\(\dg\colon \R\times \cH\to\R\)
    non\hyp decreasing in the first argument,
    and \(\dw(\emptyset)\).
    
  \item[Output:] A max $\bet \gam$-representative~\(\widehat\familyS\) of size $r \choose \alp \gam$ for the 
    family $\familyS:=\{ S = H_1 \uplus \dots \uplus H_\alp 
    \mid  S \in \I \text{ and } H_j \in \cH
    \text{ for }j\in\{1,\dots,\alp\}\}
    $ with respect to the
    \dtext{}~$\dw$ generated by \(\dw(\emptyset)\)
    and \(\dg\).
  \end{compactdesc}
  \smallskip\hrule\smallskip
  \begin{compactenum}[\footnotesize 1:]
  \item  $\cSs_0 \gets \{ \emptyset \}$
    \label[line]{lin:base-case}.  

  \item $w_0(\emptyset)\gets \dw(\emptyset)$
    \label[line]{lin:base-case-end}.
    
  \item \label[line]{lin:rs-fork}
    \textbf{for each} $i \in \{ 1, \ldots, \alp \}$ \textbf{do}
    
  \item \quad
    $\cSst_i\gets \emptyset$.\label[line]{lin:csst}
    
  \item \quad
    $w_i\colon 2^U\to\R, S\mapsto -\infty$.\label[line]{lin:wi}
    
  \item \quad \label[line]{lin:for2}
   \textbf{for each} $H\in\cH,S\in\cSs_{i-1},H\cap S=\emptyset,H\cup S\in\I$ \textbf{do}
  \item \qquad
    $\cSst_i\gets \cSst_i\cup\{H\uplus S\}.$ \label[line]{lin:ri}\;
  \item \qquad
    $w_i(H\uplus S)\gets \max\{w_i(H\uplus S),\dg(w_{i-1}(S),H)\}$.  \label[line]{lin:weight}\;

  \item \quad $\cSs_i\gets{}$max $(r-\gam i)$-representative of \(\cSst_{i}\) with respect to~\(w_i\) of size
    at most~\({r\choose \gam i}\) using \cref{prop:repset}.
    \label[line]{lin:compress}
  \item \textbf{return} $\cSs_{\alp}$.\label[line]{lin:retcSsalp}\;
  \end{compactenum}
\end{algorithm}

\begin{lemma}
	\label[lemma]{lem:q-repset}
	\cref{alg:q-repset} is correct.
\end{lemma}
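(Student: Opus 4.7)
The plan is to prove by induction on $i\in\{0,1,\dots,\alp\}$ the joint invariant that (a)~for every $X\in\cS_i:=\{H_1\uplus\dots\uplus H_i\mid H_j\in\cH,\ H_1\uplus\dots\uplus H_i\in\I\}$ and every $Y\subseteq U$ with $|Y|\leq r-\gam i$ and $X\uplus Y\in\I$, the family $\cSs_i$ contains some $\widehat X$ with $\widehat X\uplus Y\in\I$ and $w_i(\widehat X)\geq \dw(X)$; and (b)~$w_i(S)\leq \dw(S)$ for every $S\in\cSst_i$ (setting $\cSst_0:=\{\emptyset\}$). Since $\cS_\alp$ coincides with the family $\cS$ specified in \cref{alg:q-repset}, combining (a) and (b) at $i=\alp$ on the same witness gives $\dw(\widehat X)\geq w_\alp(\widehat X)\geq \dw(X)$, which is exactly the max\hyp representative property of \cref{def:qrep} with respect to~$\dw$. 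The base case $i=0$ is immediate from \cref{lin:base-case,lin:base-case-end}.

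For the inductive step of (a), I fix $X\in\cS_i$ and an extension $Y$ as above, and use \cref{def:dtext} to select a decomposition $X=H^*\uplus S^*$ with $H^*\in\cH$, $S^*\in\cB(\cH)$, and $\dw(X)=\dg(\dw(S^*),H^*)$. A cardinality count—$\cH$ is a $\gam$-family, so $S^*$ must be a disjoint union of exactly $i-1$ members of $\cH$—combined with $S^*\subseteq X\in\I$ shows $S^*\in\cS_{i-1}$. The induction hypothesis (a) applied to $S^*$ with the enlarged extension $H^*\uplus Y$ of size at most $r-\gam(i-1)$ produces $\widehat S\in\cSs_{i-1}$ with $\widehat S\uplus H^*\uplus Y\in\I$ and $w_{i-1}(\widehat S)\geq \dw(S^*)$. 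Thus the loop at \cref{lin:for2} inserts $\widetilde X:=H^*\uplus\widehat S$ into $\cSst_i$ at \cref{lin:ri}, and \cref{lin:weight} together with the monotonicity of $\dg$ in its first argument yields
\[
w_i(\widetilde X)\geq \dg(w_{i-1}(\widehat S),H^*)\geq \dg(\dw(S^*),H^*)=\dw(X).
\]
Invoking \cref{prop:repset} at \cref{lin:compress} then extracts $\widehat X\in\cSs_i$ with $\widehat X\uplus Y\in\I$ and $w_i(\widehat X)\geq w_i(\widetilde X)\geq \dw(X)$, closing (a). Invariant (b) is maintained by a parallel but simpler induction: any $S\in\cSst_i$ is produced at \cref{lin:ri} as $H\uplus S'$ with $S'\in\cSs_{i-1}\subseteq\cSst_{i-1}$, so $w_{i-1}(S')\leq \dw(S')$ inductively, and the maximum in \cref{lin:weight} over this restricted set of decompositions is bounded by the maximum in \cref{def:dtext} over \emph{all} decompositions defining $\dw(S)$.

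The step I expect to be the main obstacle, and the only place where the \dtex{} hypothesis is essential, is proving $w_i(\widetilde X)\geq \dw(X)$: the algorithm composes new weights only from values $w_{i-1}(\cdot)$ stored on the already compressed family $\cSs_{i-1}$, whereas the optimal decomposition $X=H^*\uplus S^*$ realising $\dw(X)$ need not pass through $\cSs_{i-1}$ at all. The rescue is to use representativeness of $\cSs_{i-1}$ to swap $S^*$ for a witness $\widehat S\in\cSs_{i-1}$ that is at least as heavy \emph{and} still admits the extension $H^*\uplus Y$; monotonicity of $\dg$ in its first argument then propagates the inequality through the last composition. This argument would fail for merely submodular weight functions, matching the observation preceding \cref{alg:q-repset}.
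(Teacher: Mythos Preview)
Your proof is correct and follows the same inductive skeleton as the paper: decompose $X$ optimally via \cref{def:dtext}, use representativeness of $\cSs_{i-1}$ to swap the inner piece $S^*$ for some $\widehat S\in\cSs_{i-1}$ that still admits the extension $H^*\uplus Y$, and then invoke \cref{prop:repset} at level~$i$. The difference lies in the invariants carried through the induction. The paper maintains the pointwise \emph{equality} $w_i(X)=\dw(X)$ for all $X\in\cSst_i$, whereas you carry only the one\hyp sided bound $w_i\le\dw$ on~$\cSst_i$ (your~(b)) together with the witness inequality $w_i(\widehat X)\ge\dw(X)$ folded into~(a). Your pair of inequalities is formally weaker than the paper's equality but suffices for the conclusion, and it sidesteps a delicate point in the paper's argument for the direction $w_i(X)\ge\dw(X)$: the representative $\widehat S\in\cSs_{i-1}$ produced from the optimal decomposition $X=H^*\uplus S^*$ need not satisfy $\widehat S\uplus H^*=X$, so the pair $(H^*,\widehat S)$ need not contribute to $w_i(X)$ itself. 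By tracking the needed inequality directly on the witness rather than via a pointwise identity, your formulation avoids this issue entirely.
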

\begin{proof}
We prove by induction that,

\begin{enumerate}[(i)]
\item\label{ind1}
  in \cref{lin:compress} of \cref{alg:q-repset},
  $\cSs_i$ is
	max $(r-\gam i)$-representative     with respect to~\(\dw\)
	for
	\begin{align*}
	\cS_i\phantom:=\{S=H_1\uplus\dots\uplus H_i\mid S\in\I
	\text{ and }H_j\in\cH \text{ for }j\in\{1,\dots,i\}\}.
	\end{align*}
	\item\label{ind2} To this end,
	we simultaneously show
	that
	$w_i(X)=\dw(X)$ for $X\in\cSst_i$.
\end{enumerate}
The lemma then follows since
\cref{alg:q-repset} returns \(\cSs_{\alp}\)
in \cref{lin:retcSsalp},
which has size~\({r\choose\alp \gam{}}\)
by construction in \cref{lin:compress}.

Both \eqref{ind1} and \eqref{ind2} hold
for~\(i=1\)
since \(\cS_1=\cSst_1=\cH\cap \I\)
and
\(\dw(X)=\dg(\dw(\emptyset),X)=w_1(X)\)
for all \(X\in\cH\) by \cref{def:dtext}.
For the induction step,
assume that \eqref{ind1} and \eqref{ind2}
hold for~\(i-1\)   and observe that
\begin{enumerate}[(a)]
	\item by construction,
	\(\cSs_i\subseteq \cSst_i\subseteq \cS_i\)
	for all~\(i\in\{1,\dots,k\}\) and
	\item\label{cla2}  since \(\cH\)~is a \(\gam\)-family
	and every subset of an independent set is independent,
	for all \(X=H\uplus S\in\cS_i\)
	with \(H\in \cH\) and \(S\in\cB(\cH)\),
	one has \(S\in\cS_{i-1}\).
\end{enumerate}
We first prove \eqref{ind2}.
For each set~\(X\) added to~\(\cSst_i\)
by \cref{alg:q-repset}
in \cref{lin:ri},
\begin{align*}
w_i(X)&=\max\{\dg(w_{i-1}(S),H)\mid H\in\cH,S\in\cSs_{i-1},X=H\uplus S\}.\\
\intertext{Since,
	by induction,
	\(w_{i-1}(X)=\dw(X)\)
	for all \(X\in\cSs_{i-1}\subseteq \cSst_{i-1}\),
	we have}
w_i(X)&=\max\{\dg(\dw(S),H)\mid H\in\cH,S\in\cSs_{i-1},X=H\uplus S\}.\\
\intertext{Since,
	by induction,
	\(\cSs_{i-1}\)~is max representative
  for \(\cS_{i-1}\) with respect to~\(\dw\),
  since \(\dg\)~is non\hyp decreasing in its first argument,
  and due to \eqref{cla2},
	we have}
w_i(X)&=\max\{\dg(\dw(S),H)\mid H\in\cH,S\in\cS_{i-1},X=H\uplus S\}\\
&=\max\{\dg(\dw(S),H)\mid H\in\cH,S\in\cB(\cH),X=H\uplus S\}=\dw(X),
\end{align*}
where the last equality is due to \cref{def:dtext}.
We now show \eqref{ind1}.
In \cref{lin:compress},
using \cref{prop:repset},
we create a 
max $(r-\gam i)$-representative $\cSs_i$
of $\cSst_i$ 
with respect to~$w_i$,
which coincides with~\(\dw\) on~\(\cSst_i\)
by \eqref{ind2}.
The claim now is that $\cSs_i$ is
max $(r-\gam i)$-representative
for $\cS_i$
with respect to~$\dw$.

First, let
$Y \subseteq U$ with
$|Y|=r-\gam i$
be such that
there is an~$X \in \cS_i$ with
$Y \uplus X \in \I$.
Since 
\(\dw\)~is
an \dtext{} (cf.\ \cref{def:dtext}),
there is a 
partition~$S \uplus H = X$
with
$H \in \cH$,
$S\in \cB(\cH)$
such that
\begin{equation}
\dw(X)=\dg(\dw(S),H).\label{eq:razlozhenie}
\end{equation}
Since \(X\in\cS_i\), one has \(S\in \cS_{i-1}\) by \eqref{cla2}.
By induction,
$\cSs_{i-1}$~is
max $(r-\gam(i-1))$-representative
for~$\cS_{i-1}$
with respect to~\(\dw\).
Thus,
there is a  set~$S' \in \cSs_{i-1}$
with
$(Y \uplus H) \uplus S' \in \I$, and
$\dw(S')\geq\dw(S)$.
By construction of~\(\cSst_i\) in \cref{lin:ri},
$S' \uplus H \in \mathcal\cSst_i$.
Since,
by \cref{lin:compress},
$\cSs_i$ is max
$(r-\gam i)$-representative
for $\cSst_i$
with respect to~\(w_i\),
there finally is
an~$X' \in \cSs_{i}$ with
$Y \uplus X' \in \I$, and
$\dw(X')=w_i(X')\geq w_i(S'\uplus H)=\dw(S'\uplus H)$.
Since \(\dg\)~is non\hyp decreasing
in its first argument
and by \eqref{eq:razlozhenie},
we get
$\dw(X')\geq \dw(S'\uplus H)
\geq \dg(\dw(S'),H)
\geq \dg(\dw(S),H)
=\dw(X).$

Finally,
consider \(Y\subseteq U\) with
$|Y|<r-\gam i$
such that
there is an~$X \in \cS_i$ with
$Y \uplus X \in \I$.
Since matroid~\(M\) has rank~\(r\),
there is a superset~\(Y'\supseteq Y\)
with \(|Y'|=r-\gam i\)
such that
$Y' \uplus X \in \I$.
Thus, there exists \(X'\in\cSs_{i}\) such that
$Y' \uplus X' \in \I$, and
$\dw(X')\geq \dw(X)$
and both properties hold when
replacing~\(Y'\) by~\(Y\).
\qed\end{proof}
Having shown that \cref{alg:q-repset}
is correct,
to prove \cref{thm:core-tool},
it remains to show that
\cref{alg:q-repset}
runs in the claimed time.

\begin{lemma}\label[lemma]{lem:q-repset-running-time}
	\cref{alg:q-repset} runs in time of
	$O(2^{\omega r} \cdot t)$~operations over~$\mathbb F$
	and calls to the function~\(\dg\),
        where $\omega \geq 2$
        is any constant such that two $(n\times n)$-matrices
        can be multiplied in $O(n^{\omega'})$~time for $\omega'<\omega$.
\end{lemma}
\begin{proof}
	Without loss of generality,
	\(U=\{1,\dots,n\}\).
	We represent subsets of~\(U\)
	as sorted lists.
	Since the input sets in~\(\cH\)
	have cardinality~\(\gamma\),
	we can initially sort each of them
	in \(O(\gamma\log\gamma)\)~time.
	The sorted union and intersection
	of a sorted list of length~\(a\)
	and a sorted list of length~\(b\)
	can be computed in \(O(a+b)\)~time
	\citep[Section~4.4]{AHU83}.
	We thus get a representation
	of sets as words of length~\(r\)
	over an alphabet of size~\(n\).
	We can thus store and look up
	the weight of a set of size at most~\(r\)
	in a trie
	in \(O(r)\)~time \citep[Section~5.3]{AHU83}.
	Note that we do not have the time
	to completely initialize
	the \(O(t)\)~size-$n$ arrays
	in the trie nodes.
	Instead,
	we will initialize
	each array cell of a trie node at
	the first access:
	to keep track of the
	already initialized array positions,
	we use a
	data structure for \emph{sparse sets}
	over a fixed universe~\(U\)
	that allows membership tests,
	insertion, and deletion of elements
	in constant time \citep{BT93}.
	
	%
	
	The running time of \cref{alg:q-repset}
	is dominated by the $\alp\leq r$~iterations 
	of the for-loop starting in \cref{lin:rs-fork}.
	We analyze the running time of iteration~\(i\)
	of this loop.
	The family~$\cH$ consists of $t$~sets of size~$\gam{}$.
	The family~$\cSs_{i-1}$ consists of
	$r \choose {\gam{}(i-1)}$~sets of size $\gam{}(i-1)$ by construction and \cref{prop:repset}.
	Thus,
	the for-loop starting in \cref{lin:for2}
	makes at most \(2^r\cdot t\) iterations:
	\begin{enumerate}[(i)]
		\item $H \cap S = \emptyset$ can be checked
		in $O(\gam{}i)\subseteq O(r)$~time.
		\item We check $H \uplus S \in \I$
		by
		testing \(|H\uplus S|\leq r\)~columns of 
		matrix~\(A\)
		of height~\(r\)
		for linear independence
                in time $O(r^{\omega'})$ \citep{BH74}.
		\item The running time of \cref{lin:ri,lin:weight}
		is dominated
		by
		looking up and storing weights of sets
		of size at most~\(r\)
		in \(O(r)\)~time using a trie,
		and a call to~\(\dg\).
	\end{enumerate}
	Thus,
	the for-loop in \cref{lin:for2}
	runs in time of
	\(O(2^rr^{\omega'} \cdot t)\)~operations over~\(\mathbb F\)
	and calls to~\(\dg\).
	Finally,
	in
	\cref{lin:compress},
	we build a max \((r-\gamma i)\)-representative
	of the \(\gamma i\)-family~\(\cR_i\).
	Since \(|\cR_i|\leq 2^r\cdot t\),
	by \cref{prop:repset},
	this works
	in time of
	\[
	O\left( \left({r \choose \gam{}i}(\gam{}i)^{\omega'} + {r \choose \gam{}i}^{{\omega'}-1}\right)\cdot|\cR_i| \right)
	\subseteq O((2^r r^{\omega'}+2^{r({\omega'}-1)})\cdot 2^r\cdot t) \subseteq O((2^{2r} r^{\omega'}+2^{{\omega'} r})\cdot t)\subseteq O(2^{{\omega'} r} r^{\omega'} \cdot t)
	\]
	operations over~\(\mathbb F\),
        which dominates the running time
        of the for-loop in \cref{lin:for2}.
	Thus,
        \cref{alg:q-repset} runs in time of
	\(O\bigl( r \cdot 2^{{\omega'} r} r^{\omega'} \cdot t  \bigr) 
        \subseteq 
	O( 2^{\omega r} \cdot t)
	\)
      operations over~\(\mathbb F\)
	and calls to~\(\dg\), 
	\qed\end{proof}	

\subsection{Computing intersection representative families}
\label{sec:rep-intersect}
\noindent
In this section,
we generalize \cref{thm:core-tool}
from representatives for a single matroid
to matroid intersections.

\begin{theorem}
	\label[theorem]{lem:max-int-repr}
	Let $\{M_i =(U,\I_i)\}_{i=1}^m$~be
	linear matroids of rank~$r:=(\alp+\bet)\gam\geq 1$,
	\(\cH\subseteq 2^U\) be  a \(\gamma\)-family of size~\(t\),
	and \(\dw\colon\cB(\cH)\to\R\)~be an \dtext{}
	(cf.\ \cref{def:dtext}) generated by   \(\dw(\emptyset)\)
        and the
	function~\(\dg \colon \R\times \cH\to\R\)
        non\hyp decreasing in the first argument.

	
	Given $\alpha\in\N$,
	a
	representation~\(A_i\)
	of~\(M_i\) for each~\(i\in\{1,\dots,m\}\)
	over the same field~$\mathbb F$,
	the value~\(\dw(\emptyset)\), and the
	function~\(\dg\),
	one can
	compute
	a max intersection $\bet\gam$-representative
	of size at most~$r m \choose \alp\gam m$
	of the family
	\[\familyS = \Bigl\{ S = H_1 \uplus \dots \uplus H_\alp 
	\Bigm|
	S\in\bigcap_{i=1}^m \I_i\text{ and }
	H_j \in \cH\text{ for }j \in \{1,\dots, \alp \}\Bigr\}
	\]  
	with respect to~\(\dw\)
	in time of $O(2^{\omega rm}  \cdot (t+n))$~operations
	over $\mathbb F$ and calls to the function~\(\dg\), 
                where $\omega \geq 2$
        is any constant such that two $(n\times n)$-matrices
        can be multiplied in $O(n^{\omega'})$~time for $\omega'<\omega$.
\end{theorem}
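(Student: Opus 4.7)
The plan is to reduce the $m$-matroid intersection problem to a single-matroid problem on a disjoint-union universe and then invoke \cref{thm:core-tool}.

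First, I would construct the enlarged universe $\hat U := U \times \{1,\dots,m\}$ and set $\hat M := M_1' \oplus \dots \oplus M_m'$, where each $M_i'$ is the isomorphic copy of $M_i$ living on $U \times \{i\}$. Since all $A_i$ are given over the same field $\mathbb F$, a block-diagonal $(rm \times nm)$-representation of $\hat M$ can be assembled in $O(m^2 r n)$ operations via the direct-sum construction cited before \cref{lemma:matroid-set-ext}. For $S \subseteq U$ let its \emph{lift} be $\hat S := S \times \{1,\dots,m\} \subseteq \hat U$; a direct check gives $S \in \bigcap_{i=1}^m \I_i$ iff $\hat S$ is independent in $\hat M$, and more generally $S \uplus Y \in \bigcap_i \I_i$ iff $\hat S \uplus \hat Y \in \hat\I$.

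Next, I would lift the inputs: $\hat\cH := \{\hat H \mid H \in \cH\}$ is a $(\gamma m)$-family of size $t$, and I would lift $\dw$ to an \dtex{} $\hat\dw$ on $\cB(\hat\cH)$ by $\hat\dw(\emptyset) := \dw(\emptyset)$ and $\hat\dg(k,\hat H) := \dg(k,H)$. The map $H \mapsto \hat H$ being a bijection makes $\hat\dg$ well-defined and non-decreasing in its first argument; moreover, because lifts of disjoint sets are disjoint and vice versa, partitions of $\hat S$ into members of $\hat\cH$ correspond bijectively to partitions of $S$ into members of $\cH$, so a straightforward induction on $|S|$ gives $\hat\dw(\hat S) = \dw(S)$ for every $S \in \cB(\cH)$. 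Analogously, $\hat\cS := \{\hat S \mid S \in \cS\}$ is exactly the family obtained by applying the definition in \cref{thm:core-tool} to $\hat M$ and $\hat\cH$.

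Now I would apply \cref{thm:core-tool} to $\hat M$ (of rank $rm = (\alp+\bet)(\gam m)$), the $(\gam m)$-family $\hat\cH$, and the \dtex{} $\hat\dw$, with parameters $\alp' = \alp$, $\bet' = \bet$, $\gam' = \gam m$. This produces a max $(\bet\gam m)$-representative $\hat\cSs$ of $\hat\cS$ of size at most $\binom{rm}{\alp\gam m}$ in time of $2^{O(rm)} \cdot t$ operations over $\mathbb F$ and calls to $\hat\dg$ (equivalently, to $\dg$). I would return $\cSs := \{S \mid \hat S \in \hat\cSs\}$; the size bound and the additional $O(m^2 r n)$ representation-building cost match the claimed bounds.

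The main thing to verify is that $\cSs$ really is a max intersection $(\bet\gam)$-representative of $\cS$. Given $X \in \cS$ and $Y \subseteq U$ with $|Y| \leq \bet\gam$ and $X \uplus Y \in \bigcap_i \I_i$, the lift satisfies $|\hat Y| = m|Y| \leq \bet\gam m$ and $\hat X \uplus \hat Y \in \hat\I$, so the representative property of $\hat\cSs$ yields $\hat X' \in \hat\cSs$ with $\hat X' \uplus \hat Y \in \hat\I$ and $\hat\dw(\hat X') \geq \hat\dw(\hat X)$; projecting through the lift/preimage bijection gives the required $X' \in \cSs$ with $X' \uplus Y \in \bigcap_i \I_i$ and $\dw(X') \geq \dw(X)$. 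The main subtlety is the book-keeping in this transfer---in particular, ensuring that the representative property of $\hat\cSs$, which quantifies over all $\hat Y \subseteq \hat U$ of the right size, specializes correctly when restricted to lifted sets $\hat Y$, and that no information is lost when projecting back to $U$.
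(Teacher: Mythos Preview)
Your proposal is correct and essentially identical to the paper's proof. The paper packages the lift construction as a separate \cref{lem:sum-matroid} (writing $f(S)$ for your $\hat S$ and $M_{\oplus}$ for your $\hat M$) and then proceeds exactly as you do: lift $\cH$ to a $(\gamma m)$-family, lift $\dw$ via $\dg_{\oplus}(k,f(H)):=\dg(k,H)$, apply \cref{thm:core-tool} to the single matroid $M_{\oplus}$ of rank~$rm$, and project back; the book-keeping subtlety you flag is precisely what the paper verifies at the end of its proof.
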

%
%
%
%
To prove \cref{lem:max-int-repr},
we reduce the \(m\)~matroid constraints
to a single matroid constraint.
To this end, we use a
folklore construction
sketched by \citet[page~359]{Law76}
in a reduction of the \textsc{Matroid Intersection}
to the \textsc{Matroid Parity} problem.
It works
at the expense of replacing
each universe element
by a ``block'' of \(m\)~copies that
is only allowed to be \emph{completely}
included in or excluded from an independent set.
We then use
our \cref{thm:core-tool}
to compute a representative
of the family of independent sets
that can be obtained as unions of these ``blocks''.
We now present
the folklore construction
and then prove \cref{lem:max-int-repr}.


\begin{lemma}
	\label[lemma]{lem:sum-matroid}
	Let $\{M_i=(U,\I_i)\}_{i=1}^m$~be
	linear matroids of rank $r$ and
	\(\cH\subseteq 2^U\),
	\begin{align*}
	U_{{\oplus}}&{}:=\{ u^{(1)},\ldots, u^{(m)} \mid u \in U \},  \text{ and}\\
	f\colon 2^U\to 2^{U_{\oplus}}, X &{}\mapsto \{ u^{(1)},\ldots, u^{(m)} \mid u \in X \}.
	\end{align*}
	Then,     for all~\(S,S'\subseteq U\),
	\begin{enumerate}[(i)]
		\item \label{sm:1}
                  \(S\ne S'\iff f(S)\ne f(S')\), that is, \(f\)~is injective,
		
		\item \label{sm:2}
		\(f(S)\cup f(S')=f(S\cup S')\),
		\item \label{sm:3}
		$S \cap S' = \emptyset \iff f(S) \cap f(S') = \emptyset$,
	\end{enumerate}
	and given \((r\times n)\)-representations~\(A_i\) of~\(M_i\) for all \(i\in\{1,\dots,m\}\)
	over the same field $\mathbb F$,
	one can,
	in time of $\bigO(m^2 \cdot r \cdot n)$~operations over $\mathbb F$,
	compute a
	\((r m \times n m)\)-representation~$A_\oplus$
	of a matroid~$M_{\oplus}= (U_{\oplus},\I_{\oplus})$ over~$\mathbb F$  such that
	\begin{enumerate}[(i)]
		\setcounter{enumi}{3}
		\item \label{sm:4} for all $S\subseteq U$, $S \in \displaystyle\bigcap\limits_{i=1}^{m} \I_i\iff f(S) \in \I_{\oplus}$.
	\end{enumerate}
\end{lemma}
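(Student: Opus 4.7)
The plan is to handle the three combinatorial statements (i)--(iii) by direct inspection of the definition of $f$, and then to obtain the matroid $M_\oplus$ in (iv) by the folklore ``block replacement'' direct-sum construction, reusing the representation result for direct sums already quoted earlier in the paper.

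For (i)--(iii), the point is simply that $f$ partitions $U_\oplus$ into the $n$ disjoint ``blocks'' $B_u := \{u^{(1)},\dots,u^{(m)}\}$, one per element $u\in U$, and $f(S)=\bigcup_{u\in S}B_u$. So $f(S)\cap B_u$ is either all of $B_u$ (if $u\in S$) or empty, which immediately yields that $f$ is injective, that $f$ commutes with unions, and that disjointness of $S,S'$ is equivalent to disjointness of $f(S),f(S')$. I would dispatch these in a few lines.

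The heart of the lemma is (iv). I would define, for each $i\in\{1,\dots,m\}$, the ``relabelled copy'' $M_i':=(U^{(i)},\I_i')$ on the ground set $U^{(i)}:=\{u^{(i)}\mid u\in U\}$, where $\I_i':=\{\{u^{(i)}\mid u\in X\}\mid X\in\I_i\}$. The map $u\mapsto u^{(i)}$ is a matroid isomorphism $M_i\cong M_i'$, so $M_i'$ is linear of rank $r$ and is represented over $\mathbb F$ by $A_i$ with its columns relabelled accordingly. Since the sets $U^{(1)},\dots,U^{(m)}$ are pairwise disjoint and their union is $U_\oplus$, the direct sum $M_\oplus:=\bigoplus_{i=1}^m M_i'$ is a matroid on $U_\oplus$ whose independent sets are exactly the $T\subseteq U_\oplus$ with $T\cap U^{(i)}\in\I_i'$ for all~$i$. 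Setting $T=f(S)$, we have $T\cap U^{(i)}=\{u^{(i)}\mid u\in S\}$, which lies in $\I_i'$ iff $S\in\I_i$; quantifying over $i$ gives the ``$\iff$'' of (iv).

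For the representation bound, I would invoke the previously quoted block-diagonal direct-sum construction (the exercise from Oxley's book cited near \cref{lemma:matroid-set-ext}): placing $A_1,\dots,A_m$ on the diagonal of an otherwise zero $(rm\times nm)$-matrix produces a representation of $M_\oplus$ over $\mathbb F$ using $O\bigl((\sum r)(\sum n)\bigr)=O(m^2 rn)$ field operations. I do not foresee a genuine obstacle: the only thing that needs a moment of care is being explicit that ``$S\in\bigcap_i\I_i\iff f(S)\in\I_\oplus$'' is a biconditional on the image of $f$ only (the direct sum of course has many independent sets not of the form $f(S)$, but the statement only concerns images), and this is automatic from the block structure.
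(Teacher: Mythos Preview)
Your proposal is correct and follows essentially the same approach as the paper: define $M_\oplus$ as the direct sum of relabelled copies $M_i'$ of the $M_i$ on disjoint ground sets $U^{(i)}$, verify (iv) via $f(S)\cap U^{(i)}=\{u^{(i)}\mid u\in S\}$, and obtain the $(rm\times nm)$-representation from the block-diagonal construction cited from Oxley. The paper in fact dismisses (i)--(iii) in one sentence (``obviously hold by construction''), so your block argument is already more explicit than the original.
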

\begin{proof}
We choose~\(M_{\oplus}\)
to be the
direct sum
of pairwise disjoint copies~\(M_i'\) of~\(M_i\):
\begin{align*}
&M_{\oplus}\phantom:=(U_{\oplus},\I_{\oplus}) = \bigoplus_{i=1}^m M_i',\text{\; where \;} M_i'\phantom:=(U_i,\I_i') \text{\; with}\\
&U_i:= \{ u^{(i)} \mid u \in U \} \text{\; and \;} \I_i' := \{ \{ u_1^{(i)},\ldots,u_j^{(i)} \} \mid \{ u_1,\ldots,u_j \} \in \I_i \}.
\end{align*}
We get a \((r m \times n m)\)-representation~$A_\oplus$
over~$\mathbb F$
of~\(M_{\oplus}\) in time of $\bigO(m^2 \cdot r\cdot n)$~operations 
over~$\mathbb F$ 
\citep[Exercise~6, p.~132]{Oxl92}.

Properties \eqref{sm:1}--\eqref{sm:3}
obviously hold by construction.
It remains to prove \eqref{sm:4}.
Let \(S=\{ u_1,\ldots,u_j \}\subseteq U\)
and,
for an arbitrary $i \in \{1,\ldots, m \}$,
let $S_i = f(S) \cap U_i$.
Then,
$S_i = \{ u_1^{(i)},\ldots,u_j^{(i)}\}$
and
$S \in \I_i$
if and only if
$S_i \in \I_i'$.
Thus,
if $S \in \bigcap_{i=1}^{m} \I_i$,
then  $S_i \in \I_i'$
for all~$i \in \{1,\ldots,m\}$
and, hence,
$\bigcup_{i=1}^{m} S_i = f(S) \in \I_{\oplus}$
by \cref{def:union} on direct sums.
Conversely,
if $f(S) \in \I_{\oplus}$,
then $S_i \in \I_i'$
for all~$i \in \{1,\ldots,m\}$
and, therefore,
$S \in \bigcap_{i=1}^{m} \I_i$.
\qed\end{proof}


\begin{algorithm}[t]
  \caption{for the proof of \cref{lem:max-int-repr}.}
  \label[algorithm]{alg:max-int-repr}
  \begin{compactdesc}
    
  \item[Input:]
    $\alpha\in\N$,
    a \(\gamma\)-family \(\cH\subseteq 2^U\) of size~\(t\),
    representations~$\{A_i\}_{i=1}^m$
    of matroids $\{M_i =(U,\I_i)\}_{i=1}^m$
    of rank~$r:=(\alp+\bet)\gam\geq 1$
    over the same field~$\mathbb F$,
    a function~\(\dg\colon \R\times \cH\to\R\)
    non\hyp decreasing in the first argument,
    and \(\dw(\emptyset)\).
    
  \item[Output:]
    A max intersection $\bet\gam$-representative
    of size at most~$r m \choose \alp\gam m$
    for 
    \(\familyS = \{ S = H_1 \uplus \dots \uplus H_\alp 
    \mid
    S\in\bigcap_{i=1}^m \I_i\text{ and }
    H_j \in \cH\text{ for }j \in \{1,\dots, \alp \}\}
    \)
    w.\,r.\,t.\ to the
    \dtext{}~$\dw$ generated by \(\dw(\emptyset)\)
    and \(\dg\).
  \end{compactdesc}
  \smallskip\hrule\smallskip
  \begin{compactenum}[\footnotesize 1:]
  \item\label[line]{moplus} $A_\oplus\gets$
    representation
    of the
    matroid $M_{{\oplus}}=(U_{{\oplus}},\I_{{\oplus}})$
    created from~\(\{A_i\}_{i=1}^m\) by \cref{lem:sum-matroid}
    over~$\mathbb F$.
  \item \label[line]{hoplus} \(\cH_{\oplus}\gets\{f(X)\mid X\in\cH\}\)
    for the injective function
    $f\colon 2^U \rightarrow 2^{U_{\oplus}}$
    from \cref{lem:sum-matroid}.

    
  \item\label[line]{srep} $\cSs_{\oplus}\gets$
    \cref{alg:q-repset} on~$\alpha$,
    $\gamma m$-family
    $\cH_{\oplus}$,
    $A_\oplus$,
    $\dw_{\oplus}(\emptyset)=\dw(\emptyset)$,
    and $\dg_{\oplus}\colon\R\times\cH_{\oplus}\to\R,
(k,f(H))\mapsto \dg(k,H)$.
  \item\label[line]{sret} \textbf{return} $\{S\subseteq U\mid f(S)\in\cSs_{{\oplus}}\}$.
  \end{compactenum}
\end{algorithm}

\noindent
\cref{lem:max-int-repr} now follows from the following lemma.

\begin{lemma}\label[lemma]{lem:max-int-repr-corr}
  \cref{alg:max-int-repr} is correct and
  runs in time of $O(2^{\omega rm}\cdot (t+n))$
  operations over~$\mathbb F$
  and calls to~\(\dg\).
  If $A_{\oplus}$~in \cref{moplus} is given
  (for example, precomputed),
  then \cref{alg:max-int-repr}
  runs in time of $O(2^{\omega rm}\cdot t)$
  operations over~$\mathbb F$   and calls to~\(\dg\),
  where $\omega \geq 2$
  is any constant such that two $(n\times n)$-matrices
  can be multiplied in $O(n^{\omega'})$~time for $\omega'<\omega$.
\end{lemma}
\begin{proof}
  In \cref{moplus},
  from the linear representations~$\{A_i\}_{i=1}^m$
  of the matroids~\(\{M_i\}_{i=1}^m\),
  \cref{alg:max-int-repr} computes a
  $(rm \times nm)$-representation~\(A_\oplus\)
  of the matroid
  $M_{{\oplus}}=(U_{{\oplus}},\I_{{\oplus}})$
  of rank $rm = (\alp+\bet)\gam m$
  from \cref{lem:sum-matroid}
  in
  time of $O(m^2\cdot r\cdot n)$~operations
  over $\mathbb F$.
  In \cref{hoplus},
  it computes the $\gamma m$-family $\cH_\oplus$.
By \cref{lem:q-repset},
the result of \cref{srep} is
a max \(\beta\gam m\)-representative~\(\cSs_{{\oplus}}\)
of size~$rm \choose \alp \gam m$ for
\begin{align*}
\cS_{{\oplus}}&:=\{f(S)\mid S\in\cS\}
=\{f(S)=f(H_1)\uplus\dots\uplus f(H_\alpha)\mid f(H_i)\in\cH_{\oplus}\text{ and }f(S)\in \I_{\oplus}\},
\end{align*}
where equality is due to \cref{lem:sum-matroid}\eqref{sm:2} and \eqref{sm:4},
with respect to the \dtext{}~\(\dw_{\oplus}\)
determined by
\(\dw_{\oplus}(\emptyset)=\dw(\emptyset)\)
and the
non\hyp decreasing in its first argument function~$\dg_\oplus$.
By \cref{lem:q-repset-running-time},
\cref{srep} is executed
in time of $O(2^{\omega rm}\cdot t)$
operations over~$\mathbb F$
and calls to~\(\dg_{\oplus}\).
Hence,
together with applying the transformation from \cref{lem:sum-matroid},
we take the time
of $O(2^{\omega rm}\cdot (t+n))$
operations over~$\mathbb F$
and calls to~\(\dg_{\oplus}\).
Herein,
one call to~\(\dg_{\oplus}\)
is one call to~\(\dg\).
Also,
since \(\dg_{\oplus}(k,f(H))=\dg(k,H)\)
for all \(k\in\R\) and all \(f(H)\in\cH_{\oplus}\),
one has \(\dw_{\oplus}(f(X))=\dw(X)\)
for all~\(f(X)\in \cB(\cH_{\oplus})\).
This allows us to show that the result
returned in \cref{sret},
\begin{align*}
\cSs&:=\{S\subseteq U\mid f(S)\in\cSs_{{\oplus}}\}=
\Bigl\{H_1\uplus\dots\uplus H_\alp\Bigm|
H_i\in\cH\text{ and }
\biguplus_{i=1}^\alp f(H_i)\in\cSs_{{\oplus}}\Bigr\},
\end{align*}
which
has size~\(|\cSs_{\oplus}|\),
is
max \(\beta \gam\)-intersection
representative of~\(\cS\)
with respect to~\(\dw\).
Note that $\cSs$~can be constructed in
\(\alp\gam m\cdot|\cSs_{\oplus}|\)~time
by simply iterating over
the sets in~\(\cSs_{{\oplus}}\),
replacing
a group of elements~\(u^{(1)},\dots,u^{(m)}\)
by element~\(u\).

To see that $\cSs$ is max \(\beta \gam\)-intersection
representative of~\(\cS\)
with respect to~\(\dw\), consider set $Y \subseteq U$ with \(|Y|\leq\beta\gam\)
and $X \in \mathcal \cS$ with
$Y \uplus X \in \bigcap_{i=1}^m \I_i$.
Then
$f(Y) \uplus f(X) \in \I_{\oplus}$
by \cref{lem:sum-matroid}\eqref{sm:3} and \eqref{sm:4}.
Moreover, $f(X) \in \cS_{\oplus}$.
By \cref{def:qrep},
there is a set $X' \in \cSs_{\oplus}$ such that
$f(Y) \uplus X' \in \I_{\oplus}$,
and $\dw_{\oplus}(X') \geq \dw_{{\oplus}}(f(X))$.
By construction of~\(\cSs\),
there is an \(X''\in\cSs\)
with \(f(X'')=X'\).
Note that \(Y\cap X''=\emptyset\)
by \cref{lem:sum-matroid}\eqref{sm:3}
since \(f(Y)\cap f(X'') = f(Y)\cap X'=\emptyset\).
Moreover,
\(Y\cup X''\in\bigcap_{i=1}^m \I_i\)
by \cref{lem:sum-matroid}\eqref{sm:4}
since \(f(Y)\uplus f(X'')=f(Y)\uplus X'\in \I_{{\oplus}}\).
Finally,
\(\dw(X'')=\dw_{{\oplus}}(f(X''))=\dw_{{\oplus}}(X')\geq \dw_{{\oplus}}(f(X))=\dw(X)\).
\qed\end{proof}


\section{Set packing with multiple matroid constraints}
\label{sec:spmc-fpt}
\label{sec:spmc}



%
\noindent
In this section,
we apply the results from \cref{sec:tools}
to obtain
a fixed\hyp parameter algorithm
for the following problem.


\begin{problem}[\textsc{Set Packing with Matroid Constraints} (\spmc{})]\leavevmode
  \label[problem]{prob:spmc}
	\begin{compactdesc}
		\item[\normalfont\textit{Input:}]
		Matroids $\{(U,\I_i)\}_{i=1}^m$,
		a family $\cH \subseteq 2^U$,
		$w \colon \cH \rightarrow \R$,
		and \(\alp\in\N\).
		\item[\normalfont\textit{Task:}]
		Find sets~\(H_1,\dots,H_\alp\in\cH\) such that
		\[
		\biguplus_{i=1}^\alp H_i\in \bigcap_{i=1}^m \I_i
		\text{\qquad maximizing\qquad}
		\sum_{i=1}^\alp w(H_i).\]
	\end{compactdesc}
\end{problem}
\spmc{} is a generalization of
the \textsc{Matroid Parity}
and \textsc{Matroid Matching} problems introduced
by \citet{Law76}
as generalization of the \textsc{2-Matroid Intersection}
problem.
In \textsc{Matroid Parity} and \textsc{Matroid Matching},
there is only one input matroid and
all input sets in $\cH$ have size exactly two.
In \textsc{Matroid Parity},
all input sets are additionally required to be
pairwise disjoint.
Both problems are solvable in
polynomial\hyp time on linear matroids,
but not on general matroids \cite[Section 43.9]{Sch03}.
\citet{LSV13} studied approximation algorithms
for the variant
\textsc{Matroid Hypergraph Matching}
with one input matroid
and unweighted (but possibly intersecting) input sets.
Finally,
\citet{Mar09} 
and \citet{LMPS18}
obtained fixed\hyp parameter tractability results for
\textsc{Matroid \(\gamma\)-Parity}, 
in which only one matroid is given in the input
and the input set family consists of pairwise non\hyp intersecting unweighted sets
of size~\(\gamma\).
We generalize the fixed\hyp parameter
algorithms of \citet{Mar09} 
and \citet{LMPS18}
to \spmc{}.

\begin{theorem}
  \label[theorem]{thm:solve-spmc}
  \spmc{}
  with sets of size at most~$\gamma$
  and
  $m$~matroids of rank at most~$r$
  with given representations  
  over a field~\(\mathbb F=\mathbb F_{p^d}\)
  is solvable in time
  of
  \(
  2^{O(\alp\gam m)}\cdot |\cH|^2\cdot\poly(r)+ m^2n\cdot\poly(r,\alp,\gam,p,d)
  \)
  operations over~\(\mathbb F\).
\end{theorem}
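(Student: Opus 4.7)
The plan is to reduce \spmc{} to a single application of \cref{lem:max-int-repr} with the \dtext{} $w_\Sigma$ from \cref{ex:ws}, whose generating function $\dg(k,H)=k+w(H)$ exactly tracks the \spmc{} objective. Taking $\bet=0$ so that the matroid rank equals the feasible-solution size $\alp\gam$, the theorem will yield a max intersection $0$-representative of $\familyS$ of size $\binom{\alp\gam m}{\alp\gam m}=1$, which, applied with $Y=\emptyset$, must contain a maximum-weight feasible solution.

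Two preparatory steps are needed. First, $\cH$ may contain sets of various sizes up to~$\gam$, whereas \cref{lem:max-int-repr} requires a $\gam$-family. I pad every $H\in\cH$ of size~$<\gam$ with a private block of $\gam-|H|$ fresh dummy elements; let $D$ be the union of all these blocks, so $|D|\leq\gam|\cH|$. For each matroid $M_i$, I compute a representation of $M_i\vee(D,2^D)$ via \cref{lemma:matroid-set-ext}; since $(D,2^D)$ is free and the dummy blocks are pairwise disjoint, a disjoint union of padded sets is independent in every extended matroid iff the underlying disjoint union of originals is independent in every $M_i$. Setting $w(H'):=w(H)$ on the padded family keeps $w_\Sigma$ \dtex{}, because the private dummy blocks uniquely determine the partition of any padded union into padded sets.

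Second, the matroid rank may be far larger than $\alp\gam$, so the bare $2^{O(rm)}$ factor in \cref{lem:max-int-repr} is too coarse. I therefore apply \cref{rem:same_field_trunc} with $k=\alp\gam$ to all $m$ extended representations to obtain $(\alp\gam)$-truncations over a \emph{common} finite-field extension $\mathbb K\supseteq\mathbb F$, using $\poly(r,\alp,\gam,p,d)$ operations over~$\mathbb F$. I then invoke \cref{lem:max-int-repr} on the truncated matroids (all of rank $\alp\gam=(\alp+0)\gam$), the $\gam$-family of padded sets, and $w_\Sigma$; stripping the dummies from the unique set returned recovers the optimal $\alp$-tuple $H_1,\dots,H_\alp$.

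For the running time, the $m$ invocations of \cref{lemma:matroid-set-ext} each cost $(n+\gam|\cH|)(r+\gam|\cH|)$ operations over~$\mathbb F$; the $\gam^2|\cH|^2$ cross-term supplies the $|\cH|^2\poly(r)$ factor, and the remaining cross-terms are absorbed into $m^2n\cdot\poly(r,\alp,\gam)$. The joint truncation adds a further $\poly(r,\alp,\gam,p,d)$ operations over~$\mathbb F$. Finally, \cref{lem:max-int-repr} uses $2^{O(\alp\gam m)}\cdot|\cH|+O(m^2\alp\gam(n+\gam|\cH|))$ operations over~$\mathbb K$, each of which translates to $\poly(r,\alp,\gam,d)$ operations over~$\mathbb F$; summing matches the stated bound. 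The step I expect to be the main obstacle is arranging padding and joint truncation so that all representations remain over one shared field while keeping $w_\Sigma$ genuinely \dtex{} on the padded family: without the ``private dummy block'' trick, padded sets could share dummies, the partition of a padded union into padded sets would no longer be unique, and the inductive definition of $w_\Sigma$ would break down.
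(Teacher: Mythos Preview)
Your overall strategy---pad $\cH$ to a $\gam$-family, extend the matroids by a free matroid on the dummies, truncate everything to rank $\alp\gam$, and invoke \cref{lem:max-int-repr} with $\bet=0$ and the \dtext{} $w_\Sigma$ of \cref{ex:ws}---is exactly the paper's. The gap is in the running-time accounting, and it stems from your choice of \emph{private} dummy blocks. With one fresh block per set you get $|D|\le\gam|\cH|$, so each extended matroid has rank up to $r+\gam|\cH|$. Consequently the common field extension produced by \cref{rem:same_field_trunc} has degree on the order of $(r+\gam|\cH|)\,\alp\gam\,d$, not $\poly(r,\alp,\gam,d)$ as you claim: the irreducible-polynomial step alone costs $\bigl((r+\gam|\cH|)\alp\gam\bigr)^4 d^2\sqrt{p}\cdot\polylog(\ldots)$ operations, an $|\cH|^4$ term; and every subsequent operation over $\mathbb K$ incurs an extra $\poly(|\cH|)$ factor when translated back to $\mathbb F$. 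So neither ``the joint truncation adds a further $\poly(r,\alp,\gam,p,d)$ operations'' nor ``each of which translates to $\poly(r,\alp,\gam,d)$ operations over $\mathbb F$'' is correct, and the final bound is not met.

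The paper avoids this by using only $\alp\gam$ \emph{shared} dummies: it fixes blocks $D_1,\dots,D_\alp$ of size $\gam$ and, for each $H\in\cH$, creates $\alp$ padded copies $H^{(1)},\dots,H^{(\alp)}$ with $H^{(i)}$ padded from $D_i$. Then $|D|=\alp\gam$, the extended rank is $r+\alp\gam$, and the extension degree is genuinely $\poly(r,\alp,\gam,d)$. The price is that the returned set $S$ no longer determines the $H_i$ uniquely, so the paper recovers the actual tuple by self-reduction, calling the weight-computing routine $|\cH|$ times; \emph{that} is where the $|\cH|^2$ in the stated bound comes from. Incidentally, your worry that shared dummies would make $w_\Sigma$ fail to be \dtex{} is unfounded: \cref{ex:ws} already shows $w_\Sigma$ is \dtex{} for \emph{any} $\cH'$, with no uniqueness assumption on partitions.
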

      
\begin{proof}
  \looseness=-1
  We will prove the theorem using
  \cref{alg:spmc},
  which computes
  \emph{weight} of an optimal solution to SPMC.
  The actual solution
  can then be retrieved via self\hyp reduction,
  calling \cref{alg:spmc} as most \(|\cH|\)~times.
  However,
  note that for the repeated application of \cref{alg:spmc},
  it is enough to compute
  the matroid representations $\{A_i^*\}_{i=1}^m$,
  $\{A_i'\}_{i=1}^m$,
  and $A_{\oplus}$ in
  \cref{alg:max-int-repr} called in \cref{0rep} once,
  as they do not depend on~$\cH$.
  Thus,
  we will only once
  account for the time of their computation
  and analyze the running time
  of $|\cH|$ calls of \cref{alg:spmc}
  under the assumption that they are precomputed.

  \begin{algorithm}[t]
  \caption{for the proof of \cref{thm:solve-spmc}.}
  \label{alg:spmc}
  \begin{compactdesc}
    
  \item[Input:]
    Representations~$\{A_i\}_{i=1}^m$ of
    matroids $\{M_i\}_{i=1}^m$ over a field~$\mathbb F$,
    a family $\cH \subseteq 2^U$
    of sets of size at most~$\gam$,
    $w \colon \cH \rightarrow \R$,
    and $\alpha\in\N$.
  \item[Output:]
    The weight of an optimal solution to SPMC,
    if it exists.
  \end{compactdesc}
  \smallskip\hrule\smallskip
  \begin{compactenum}[\footnotesize 1:]
  \item
    \label[line]{makedummy}
    Create a set~$D=D_1\uplus\dots\uplus D_\alpha$
    of
    $\alp\gam$~\emph{dummy elements}
    with $D \cap U = \emptyset$
    and
    $|D_i|=\gam$
    for $i\in \{1, \dots, \alp\}$.
    
  \item
    \textbf{for each} $H\in \cH$ and $i\in\{1,\dots,\alpha\}$ \textbf{do}
  \item     \label[line]{padh}
    \quad
     $H^{(i)}\gets H$ with additional $\gamma-|H|$ dummy elements chosen arbitrarily from~$D_i$.
    
   \item
     \label[line]{padh2}
     $\cH'\gets \{H^{(i)}\mid H\in\cH, i\in\{1,\dots,\alpha\}\}$.
   \item \textbf{for each} $i\in\{1,\dots,m\}$ \textbf{do}
     
   \item
     \label[line]{padmatroids}
     \quad
     $A_i^*\gets$ linear representation of $M_i^*:= M_i\oplus (D,2^D)$.
     
   \item
     \label[line]{truncatematroids}
     \quad
     $A_i'\gets$ truncation of $A_i^*$ to rank~$\alpha\gamma$.
     
   \item
     \label[line]{0rep}
     \(\cSs\gets\) 
     \cref{alg:max-int-repr}
     with $\alpha$,
     $\gamma$-family~$\cH'$,
     representations~$\{A_i'\}_{i=1}^m$,
     $\dw(\emptyset)=0$
     and \(\dg\colon\R\times\cH'\to\R, (k,H)\mapsto k+w(H\setminus D)\).

   \item \textbf{if} $\cSs=\emptyset$ \textbf{then return} No solution exists.
     
   \item
     \label[line]{retweight}
     \textbf{else return} $w(H\setminus D)$
     for the only set~$H\in\cSs$.
  \end{compactenum}
\end{algorithm}

  First,
  in lines~\ref{makedummy} to \ref{padh2},
  \cref{alg:spmc}  constructs a family~$\cH'$
from~\(\cH\) in which each set has
size exactly~$\gam$.
This step
can be executed in \(|\cH|\alp\gam\)~time.
In \cref{padmatroids},
for each~\(i\in\{1,\dots,m\}\)
and $r_i$~being the rank of~$M_i=(U,\I_i)$,
we compute
a 
representation~\(A_i^*\)
of 
matroid~$M_i^*=(U^*,\I_i^*)$
of rank~\(r_i+|D|=r_i+\alp\gam\)
in
$O((r_i+\alp\gam)(n+\alp\gam))$~operations
over~$\mathbb F$ by \cref{lemma:matroid-set-ext}.
Note that \(|U^*|=n+|D|=n+\alp \gam\) and
\begin{equation}
\begin{aligned}
\text{there are sets $H_1,\ldots,H_\alp \in \cH$}
&
\text{\qquad with $\biguplus_{i=1}^\alp H_i \in \bigcap_{i=1}^m \I_i$
	\quad if and only if}\\
\text{there are sets
	$H_1^{(1)},\ldots,H_\alp^{(\alp)} \in \cH'$}
&
\text{\qquad with $\biguplus_{i=1}^\alp H_i^{(i)} \in \bigcap_{i=1}^m \I_i^*$}.
\end{aligned}
\label{pf4}
\end{equation}
In \cref{truncatematroids},
we compute
$\alp\gam$-truncations~\(\{A_i'\}_{i=1}^m\)
of
the~\(\{A_i^*\}_{i=1}^m\)
using \cref{rem:same_field_trunc}
in time of
\begin{align*}
O(m\cdot (n+\alp\gam)\cdot\alp\gam\cdot(r+\alp\gam))+ (r+\alp\gam)^4\cdot (\alp\gam)^4\cdot d^2\sqrt p\cdot\polylog(r+\alp\gam,p,d)\\
\subseteq mn\cdot\poly(r,\alp,\gam,p,d) \text{ operations over~$\mathbb F$}
\end{align*}
and obtain
an $(\alp\gam \times (n+\alp\gam))$-representation~\(A_i'\)
of \(M_i'\)
over
$\mathbb F' := \mathbb F_{p^{(r+\alp \gam)\alp\gam d}}$
for each~$i \in \{1,\ldots,m\}$.
Note that any operation over~\(\mathbb F'\)
can be executed using
\(\poly(r,\alp,\gam)\)~operations over~\(\mathbb F\).

In \cref{0rep},
we apply
\cref{alg:max-int-repr}
to the linear representations \(\{A_i'\}_{i=1}^m\)
of the matroids~$\{M_i'\}_{i=1}^m$,
all of rank~\(\alp\gam\).
By \cref{lem:max-int-repr-corr},
the result will be  a max 0-representative~\(\cSs\)
of size ${\alp\gam m\choose \alp\gam m}=1$
of the family
\begin{align*}
\cS:=\Bigl\{ S=H_1 \uplus \dots \uplus H_\alp
\Bigm|  S \in \bigcap_{i=1}^m\I'_i\text{ and }H_i \in \cH' \text{ for }i\in\{1,\dots,\alp\}\Bigr\}
\end{align*}
with respect to the \dtext{}
determined by $\dw(\emptyset)=0$
and \(\dg\colon\R\times\cH'\to\R, (k,H)\mapsto k+w(H\setminus D)\).
By \cref{lem:max-int-repr-corr},
this takes time of $2^{O(\alp\gam m)}\cdot |\cH|$~operations
over~$\mathbb F'$ and calls to~$\dg$
if the $A_{\oplus}$ in \cref{alg:max-int-repr}
is precomputed.
The overall running time of solving SPMC
is thus
$|\cH|$~applications of \cref{0rep}
in $2^{O(\alp\gam m)}\cdot \poly(r)\cdot |\cH|^2$~operations
over~$\mathbb F$ and calls to~$\dg$,
plus
$m^2n\cdot\poly(r,\alp,\gam,p,d)$~operations over~$\F$
for computing~$A_{\oplus}$ from \cref{lem:sum-matroid},
the $\{A_i^*\}_{i=1}^m$,
and the $\{A_i'\}_{i=1}^m$ once.
We finally prove that \cref{retweight}
returns the weight of an optimal solution
to our input \spmc{} instance
if and only if that instance
has a feasible solution.
The weight function
on $\cS$
generated by~$\dw(\emptyset)=0$
and $\dg$ is
\begin{equation}\label{eq:w-sigma}
w_{\Sigma}(X):=\max_{X=H_1\uplus\dots \uplus H_\alp\atop
  H_1,\dots,H_\alp \in \cH'}\sum_{i=1}^{\alp}w'(H_i),
\quad
\text{with}
\quad
w'(H_i):=w(H_i\setminus D).
\end{equation}


($\geq$)
Let \(S^*:=\biguplus_{i=1}^\alp H_i\in\bigcap_{i=1}^m\I_i\)
be an optimal solution to \spmc{}.
One has $S:=\biguplus_{i=1}^\alp H_i^{(i)} \in \bigcap_{i=1}^m \I_i^*$
by \eqref{pf4}.
Since \(|S|=\alp\gam\),
\(S\in \bigcap_{i=1}^m \I_i'\).
Thus,
\(S\in\cS\).
Since \(\cSs\) is max 0-representative,
there is an \(S'\in\cSs\)
with
\begin{align*}
w_{\Sigma}(S')&\geq w_{\Sigma}(S)
\geq
\sum_{i=1}^{\alp}
w'(H_i^{(i)})
=\sum_{i=1}^{\alp}
w(H_i).
\end{align*}

($\leq$)
For \(S\in\cSs\),
there are
\(H_1^{(1)},\dots,H_\alp^{(\alp)}\in\cH'\)
such that
\(\biguplus_{i=1}^\alp H_i^{(i)}\in\bigcap_{i=1}^m\I'_i\) and
\[
w_{\Sigma}(S)=
\sum_{i=1}^{\alp}
w'(H_i^{(i)})
=\sum_{i=1}^{\alp}
w(H_i),
\]
which is at most
the weight of an optimal solution to \spmc{},
because the~\(H_i\)
are a feasible solution:
$\biguplus_{i=1}^\alp H_i^{(i)} \in \bigcap_{i=1}^m \I_i^*$.
Thus,
by \eqref{pf4},
$\biguplus_{i=1}^{\alp}H_i\in\bigcap_{i=1}^m\I_i$.
\qed\end{proof}



%
\section{Facility location with multiple matroid constraints}
\label{sec:algorithm}
\noindent
Facility location problems are a classical topic studied
in operations research \citep{LNG15}:
each facility~\(u\) has an opening cost~\(c_u\),
serving client~\(v\) by facility~\(u\) costs~\(p_{uv}\),
and the task is to decide which facilities to open
in order to minimize the total cost
of serving all clients.
\citet{FF11} show that
this problem
is fixed\hyp parameter tractable
when parameterized by an upper bound
on the optimum solution cost,
yet W[2]\hyp hard when parameterized
by the number of facilities that may be opened.
\citet{KKN+14} and \citet{Swa16}
study approximation algorithms
for the problem variant
where the set of facilities
is required to be independent in
a single matroid
and show several applications.
\citet{Kal18} additionally
studies capacity constraints on facilities.

We study facility location problems where not
all clients have to be served,
but where both
the set of clients and the set of facilities 
are required to be independent in multiple matroids.
In this case,
the minimization problem is meaningless:
it would be optimal
not serve any clients and not open any facilities.
Thus,
we study the problem
of \emph{maximizing} income
minus facility opening costs.

\begin{problem}[Uncapacitated facility location with matroid constraints (\mdmc{})]\leavevmode
	\label[problem]{prob:mdmc}
	\begin{compactdesc}
		\item[\normalfont\textit{Input:}]
		A universe~\(U\)
		with \(n:=|U|\),
		for each pair~\(u,v\in U\)
		a \emph{profit} \(p_{uv}\in\mathbb N\)
		obtained when
		a facility at~\(u\) serves a client at~\(v\),
		for each~\(u\in U\)
		a \emph{cost}~\(c_u\in\mathbb N\)
		for opening a facility at~\(u\),
		\emph{facility matroids} \(\{(U_i,A_i)\}_{i=1}^a\),
		and
		\emph{client matroids}~\(\{(V_i,C_i)\}_{i=1}^c\),
		where~\(U_i\cup V_i\subseteq U\).
		\item[\normalfont\textit{Task:}]
		Find two disjoint sets \(A\uplus C\subseteq U\)
		that maximize the \emph{profit}
		\begin{align}
		\sum_{v\in C}\max_{u\in A}p_{uv}-\sum_{u\in A}c_u&&
		\text{such that}&&A&\in\bigcap_{i=1}^a A_i&&\text{and}&&C\in\bigcap_{i=1}^c C_i.\label{mdmcgf}
		\end{align}
	\end{compactdesc}
\end{problem}
%
By convention,
the intersection of
no sets is the whole universe.
Thus,
if $a=0$ or~$c=0$,
this is the same as giving matroids
in which any set of facilities or clients is feasible.
For \mdmc{} without matroid constraints,
\citet{AS99} showed
a 0.828\hyp approximation algorithm
and that there is no polynomial\hyp time
approximation scheme.

\mdmc{} with multiple matroid constraints
can model natural facility location scenarios like
``open at most $k_i$~facilities of type~$i$'' \citep{CLLW16},
even if types are not disjoint,
moving facilities \citep{Swa16},
moving clients,
yet can also model
problems in social network analysis \citep[Section~3]{BTZ19}.
It also generalizes fundamental covering problems:
\begin{example}
	\label[example]{ex:maxcover}
	Using
	\mdmc{}
	with \(a=1\) facility matroid
	and \(c=0\) client matroids,
	one can model
	the classical NP-hard \textsc{Set Cover}
	problem \citep{Kar72}
	of covering a maximum number of elements
	of a set~\(V\)
	using at most \(r\)~sets
	of a collection~\(\cH\subseteq 2^V\).
	To this end,
	choose
	the universe~\(U=V\cup \cH\),
	a single facility matroid~\((\cH,A_1)\)
	with
	\(A_1:=\{H\subseteq \cH\mid |H|\leq r\}\),
	\(c_u=0\) for each \(u\in U\),
	and,
	for each \(u,v\in U\),
	\begin{align*}
	p_{uv}&=
	\begin{cases}
	1&\text{ if $u\in \cH$ such that $v\in u$,}\\
	0&\text{ otherwise}.
	\end{cases}
	\end{align*}
\end{example}
From \cref{ex:maxcover}
and the W[2]-hardness
of \textsc{Set Cover} \citep{CFK+15},
it immediately follows that \mdmc{}
is W[2]-hard parameterized by~$r$
even for zero costs,
binary profits,
and a single uniform facility matroid.
Hence, 
when the set of clients is unconstrained,
the problem of optimally
placing a small number~\(r\)
of \emph{facilities} is hard. 
However,
facility location problems
have also been studied
with a small 
number of \emph{clients} \citep{ABGL15}
and occur in several plausible scenarios \citep{BTZ19}.
We use the tools developed in \cref{sec:tools,sec:spmc}
to analyze the parameterized complexity of \mdmc{}
with a small number of clients.

\begin{theorem}
  \label{thm:alg}
  \mdmc{} is
  \begin{enumerate}[(i)]
  \item \label{thm:hardness}
    W[1]-hard parameterized by~$r$
    for a single client matroid of rank~$r$,
    even with unit costs,
    binary profits,
    and without facility matroids,
  \item\label{thm:alg1}
    solvable in \(2^{\bigO(r\log r)}\cdot n^2\)~time
    for
    a single uniform client matroid
    of rank~\(r\)
    and any
    a single facility matroid
    given as an independence oracle,
  \item\label{thm:alg2}
    fixed\hyp parameter tractable parameterized by \(a+c+r\),
    where \(r\)~is the minimum rank
    of the client matroids
    and representations of all matroids
    over the same finite
    field~\(\F_{p^d}\) are given,
    where $p$~is a prime
    polynomially upper\hyp bounded by the input size.
  \end{enumerate}
\end{theorem}
Before proving \cref{thm:alg},
we put it into context.
\cref{thm:alg}
generalizes and unifies
several fixed\hyp parameter tractability
results from the literature.
\citet{Mar09} showed
that a common independent set
of size~\(r\) in \(m\)~matroids
can be found in \(f(r,m)\cdot\poly(n)\)~time.
Our \cref{thm:alg}\eqref{thm:alg2}
is a direct generalization of this result.
\citet{BPS16} showed that
the problem of covering
at least \(p\)~elements of a set~\(V\)
using at most \(k\)~sets of a given
family~\(\cH\subseteq 2^V\)
is fixed\hyp parameter tractable
parameterized by~\(p\).
This result also follows
from our \cref{thm:alg}\eqref{thm:alg1}
using \cref{ex:maxcover} with an
additional uniform client matroid of rank~\(p\).
Earlier,
\citet{Mar08} showed that
\textsc{Partial Vertex Cover}
(can one cover at least \(p\)~edges of a graph by at most \(k\)~vertices?)
is fixed\hyp parameter tractable by~\(p\).
Our \cref{thm:alg}\eqref{thm:alg1} generalizes this result
and, indeed,
is based on the color coding
approach in \citeauthor{Mar08}'s \citep{Mar08} algorithm.


We now prove \cref{thm:alg}:
\eqref{thm:hardness} is proved in \cref{sec:hardness},
whereas
\eqref{thm:alg1} and \eqref{thm:alg2} are proved
in \cref{sec:cc}.


\subsection{W[1]-hardness for general client matroids
  (proof of \cref{thm:alg}\eqref{thm:hardness})}
\label{sec:hardness}
\noindent
To prove \cref{thm:alg}\eqref{thm:hardness},
we exploit that the
\clique{} problem
is \Wone-hard parameterized by~\(k\)~\cite{CFK+15}:
\begin{problem}[\clique{}]\leavevmode
	\begin{compactdesc}
		\item[\normalfont\textit{Input:}]
		A graph $G=(V,E)$ and integer $k \in \mathbb N$.
		\item[\normalfont\textit{Question:}]
		Does \(G\)~contain a clique on $k$~vertices?
	\end{compactdesc}
\end{problem}
To transfer the \Wone-hardness of \textsc{Clique}
parameterized by~\(k\)
to \mdmc{} parameterized by the client matroid rank,
we reduce the problem
of finding a clique of size~\(k\)
to \mdmc{} with a client matroid of rank~\(2k\).
%
The reduction
is inspired
by the proof
that \textsc{Matroid Parity}
is generally not polynomial\hyp time solvable
when matroids are given using an independence oracle
\citep[Section~43.9]{Sch03}.

%

\begin{construction}
	\label[construction]{constr:hard}
	Let \((G,k)\) with~\(G=(V,E)\)~be an instance
	of \clique{}.
	We construct an instance
	of \mdmc{} as follows.
	For each vertex~$u \in V$,
	let \(D_u:=\{u',u''\}\) such that~\(D_u\cap V=\emptyset\),
	let $L := \{ D_u \mid u \in V  \}$,
	and $S := \bigcup_{ u \in V} D_u$.
	Our \mdmcs{} instance
	consists of the universe~$U = V \uplus S$,
	and, for all \(u,v\in U\),
	\begin{align}
	\label{eq:profit}
	c_u=
	\begin{cases}
	1,&\text{ if $u\in V$,}\\
	0,&\text{ otherwise,}
	\end{cases}
	&&
	p_{uv}=\begin{cases}
	1,&\text{ if $u \in V$ and $v \in D_u$,}\\
	0,&\text{otherwise.}
	\end{cases}
	\end{align}
	We do not use a facility matroid.
	As client matroid, we use the known
	matroid~$(S,\I_C)$ \cite[Section 43.9]{Sch03} with
	\begin{align*}
	\I_C =\{ J \subseteq S \mid{}& |J| \leq 2k -1 \}\\
	{}\cup\{ J \subseteq S \mid{}& |J| = 2k \text{ and } J \text{ is not the union of any $k$~pairs in~$L$} \}\ \\
	{}\cup
	\Bigl\{ J \subseteq S \Bigm|{}&A\subseteq V,|A|=k, J=\smashoperator{\biguplus_{u\in A}}D_{u},\text{ and $G[A]$ is a clique}\Bigr\}.
	\end{align*}
\end{construction}
\begin{proof}[of \cref{thm:alg}\eqref{thm:hardness}]
\cref{constr:hard}
works in polynomial time
and creates an \mdmc{} instance
with unit costs and binary profits as
claimed in \cref{thm:alg}\eqref{thm:hardness}.
Moreover,
the rank of the client matroid is~\(2k\).
We now show that
there is a clique
of size~$k$
in~$G=(V,E)$
if and only if
there is
a solution to the created \mdmc{}
instance with profit~\(k\).

(\(\Rightarrow\))
Let~$A\subseteq V$, \(|A|=k\),
$G[A]$~be a clique,
and $C:=\biguplus_{u\in A}D_{u}$.
Since $|C| = 2k$, 
it follows that $C \in \I_C$.
Hence, \(A\uplus C\)~is a feasible solution to \mdmc{}.
Since
each~$u\in A$
has \(c_u=1\)
and
each~$v\in C$ has \(p_{uv}=1\)
for the \(u\in A\) with \(v\in D_u\),
the profit of~\(A\uplus C\)
given by \eqref{mdmcgf} is
\[\sum_{v \in C} \max_{u \in A} p_{uv} - \sum_{u \in A} c_u = 2k - k = k.\]

(\(\Leftarrow\)) Let~$A \uplus C$~be
an inclusion\hyp minimal
solution with profit at least~\(k\)
to the created \mdmc{} instance.
Since 
each facility~\(u\in A\)
has \(c_u=1\)
and since \(p_{uv}=1\) if and only if~\(v\in D_u\cap C\),
the profit of \(A\uplus C\) given by \eqref{mdmcgf} is
\begin{align}
- \sum_{u \in A} c_u+    \sum_{v \in C} \max_{u \in A} p_{uv}=-|A|+\sum_{u\in A}|D_u\cap C|\geq k.
\label{firstbudget}
\end{align}
Hence,
if there is an \(u\in A\)
with \(|D_u\cap C|\leq 1\),
then \((A \setminus \{u\})\uplus (C\setminus  D_u)\) 
is a solution with at least the same profit,
contradicting the minimality of~\(A\uplus C\).
Thus, for each~\(u\in A\), we have \(D_u\subseteq C\).
Moreover, \(|C|\leq 2k\)
since \(C\in\I_C\) and
the client matroid~\((S,\I_C)\) has rank~\(2k\).
Combining this with \eqref{firstbudget}, we thus get
\begin{align*}
2k\geq\sum_{u\in A}|D_u\cap C|=2|A|\geq k+|A|,
\end{align*}
which means \(|A|=k\).
Thus,
$C = \biguplus_{u\in A}D_u$, \(|C|=2k\), \(C\in\I_C\),
and we conclude that
$G[A]$~is a clique of size~$k$.
\qed\end{proof}

\subsection{Fixed-parameter algorithms
for linear client matroids (proof of \cref{thm:alg}\eqref{thm:alg1} and \eqref{thm:alg2})}
\label{sec:cc}
\noindent
One major difficulty
in solving \mdmc{} is
that the profit
from opening a facility
depends on
which facilities are already open.
To name an extreme example:
when opening only facility~\(u\),
it induces cost~\(c_u\)
and yields profit
from serving all the clients.
However,
when some other facility~\(v\)
is already open,
then additionally opening~\(u\)
induces cost~\(c_u\)
yet might not yield any profit
if all clients are more profitably
already served by~\(v\).
To avoid such interference
between facilities,
we reduce \mdmc{}
to problem variant
with matroid and \emph{color constraints} (\mdmcc{}).

\pagebreak[3]
\begin{problem}[\mdmcc{}]
	\label[problem]{prob:mdmcc}
	\leavevmode
	\begin{compactdesc}
		\item[\normalfont\textit{Input:}]
		A universe~\(U\),
		a coloring~\(\col\colon U\to\{1,\dots,k+\ell\}\),
		a partition
		\(Z_1\uplus\dots\uplus Z_\ell=\{\ell+1,\dots,\ell+k\}\),
		for each pair~\(u,v\in U\)
		a \emph{profit}~\(p_{uv}\in\mathbb N\)
		gained when
		a facility at~\(u\) serves a client at~\(v\),
		for each~\(u\in U\)
		a \emph{cost}~\(c_u\in\mathbb N\)
		for opening a facility at~\(u\),
		\emph{facility matroids}~\(\{(U_i,A_i)\}_{i=1}^a\),
		and
		\emph{client matroids} \(\{(V_i,C_i)\}_{i=1}^c\),
		where~\(U_i\cup V_i\subseteq U\).
		\item[\normalfont\textit{Task:}]
		Find two sets \(A\uplus C\subseteq U\) such that
		\begin{enumerate}[(i)]
			\item\label{mdmcc:colored-fac} for each \(i\in\{1,\dots,\ell\}\),
			there is exactly  one facility \(u\in A\) with \(\col(u)=i\), 
			\item \label{mdmcc:colored-client} for each \(i\in\{\ell+1,\dots,\ell+k\}\),
			there is exactly one client \(v\in C\)
			with \(\col(v)=i\),
			\item\label{mdmcc:independent} \(
			\displaystyle
			A\in\bigcap_{i=1}^a A_i\text{\quad and\quad}C\in\bigcap_{i=1}^c C_i,
			\)
		\end{enumerate}
		and that maximizes
		\begin{align}
		\sum_{u\in A}\Bigl(-c_u+
		\smashoperator{\sum_{v\in C\cap Z(u)}}p_{uv}
		\Bigr),
		\label{mdmccgf}
		\text{ where }
		Z(u):=\{v\in U\mid
		\col(v)\in Z_{\col(u)}\}.
		\end{align}
	\end{compactdesc}
\end{problem}
We call $A\uplus C$
a \emph{feasible solution}
if it satisfies~\eqref{mdmcc:colored-fac}--\eqref{mdmcc:independent},
not necessarily maximizing~\eqref{mdmccgf}.
In \mdmcc{},
a facility~\(u\) of color~\(i\)
gets profit
only from clients in~\(Z(u)\),
that is, from clients
with a color in~\(Z_i\).
Moreover,
there can be only one facility of color~\(i\)
and the clients with a color in~\(Z_i\)
are only served by facilities of color~\(i\).
Thus,
the contribution of a facility~\(u\)
to the goal function of \mdmcc{}
is independent
from the contributions of other facilities,
which makes \mdmcc{} significantly
easier than \mdmc{}.
We prove the following lemma.
\begin{lemma}
	\label[lemma]{lem:turingred}
	If \mdmcc{} is solvable
	in \(t(k+\ell)\)~time,
	then
	\mdmc{}
	is solvable in
	\(2^{\bigO(r\log r)}(t(2r)+n)\log n\)~time,
	where \(r\)~is the minimum rank
	of the client matroids.
\end{lemma}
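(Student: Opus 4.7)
\medskip

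\noindent\textbf{Proof plan.}
The plan is to reduce \mdmc{} to \mdmcc{} via color coding, paying a factor of $2^{O(r\log r)}\log n$. The first ingredient is a bound on the size of an optimum. In any optimum $A^*\uplus C^*$ of \mdmc{}, $C^*$ is independent in every client matroid, so $|C^*|\leq r$. Moreover we may assume $|A^*|\leq |C^*|$: if some $u\in A^*$ fails to attain $\max_{u'\in A^*}p_{u'v}$ for any $v\in C^*$, then $u$ contributes only the nonnegative cost $c_u$ to the objective, and $A^*\setminus\{u\}$ is still independent in every facility matroid by downward closure of independent sets. Thus $|A^*\cup C^*|\leq 2r$.

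Now we guess, in turn, the values $\ell:=|A^*|$ and $k:=|C^*|$ (at most $r^2$ choices), a coloring $\col\colon U\to\{1,\dots,k+\ell\}$ from an $(n,k+\ell)$-perfect hash family of size $2^{O(r)}\log n$ (Naor--Schulman--Srinivasan), a relabeling of its color classes that sends $A^*$ to $\{1,\dots,\ell\}$ and $C^*$ to $\{\ell+1,\dots,\ell+k\}$ (at most $(k+\ell)!\in 2^{O(r\log r)}$ options), and a partition $Z_1\uplus\dots\uplus Z_\ell=\{\ell+1,\dots,\ell+k\}$ into nonempty blocks (at most $\stirling{k}{\ell}\in 2^{O(r\log r)}$ options), where $Z_i$ represents the set of client colors whose clients are to be served by the unique facility of color~$i$. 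For each such guess we invoke the \mdmcc{} solver on the resulting instance in $t(k+\ell)\leq t(2r)$ time, plus $O(n)$ time to build the instance, and keep the best objective value found.

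For correctness, fix the ``right'' guess, in which $\col$ is injective on $A^*\cup C^*$ and sends the facilities and clients to their intended colors, and in which $Z_i$ records, for the facility $u_i\in A^*$ with $\col(u_i)=i$, the colors of the clients $v$ with $p_{u_iv}=\max_{u\in A^*}p_{uv}$ (breaking ties arbitrarily; the WLOG assumption that every $u_i$ serves at least one client makes this a valid partition). Then $A^*\uplus C^*$ is feasible for \mdmcc{}, and the \mdmcc{}-objective~\eqref{mdmccgf} on it equals the \mdmc{}-objective~\eqref{mdmcgf}, because for each $v\in C^*$ the unique $u\in A^*$ with $\col(v)\in Z_{\col(u)}$ attains $\max_{u'\in A^*}p_{u'v}$. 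Conversely, every feasible \mdmcc{} solution is feasible for \mdmc{} (the matroid constraints are literally the same), and its \mdmcc{}-value is a lower bound on its \mdmc{}-value since the latter maximizes over all of $A$. Hence the best value returned equals the optimum of \mdmc{}. Multiplying the enumeration factors yields $r^2\cdot 2^{O(r)}\log n\cdot 2^{O(r\log r)}\cdot 2^{O(r\log r)}\cdot(t(2r)+n)=2^{O(r\log r)}(t(2r)+n)\log n$ as claimed.

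The main subtlety is the size bound $|A^*|\leq|C^*|\leq r$: without it, $|A^*|$ would only be bounded by the facility-matroid ranks, which may be arbitrarily large, and color coding on $A^*\cup C^*$ would no longer be affordable. The remaining arguments---perfect hashing, partition enumeration, and the equivalence of the two objectives on a correctly colored optimum---are then routine; the main care is in verifying that the tight coupling in~\eqref{mdmccgf} between each facility and its ``own'' clients exactly reproduces the $\max$ in~\eqref{mdmcgf} under the intended partition.
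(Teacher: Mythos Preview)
Your proposal is correct and follows essentially the same approach as the paper: bound the optimum's support by $2r$ via the observation $|A^*|\leq|C^*|\leq r$, apply an $(n,k+\ell)$-perfect hash family, enumerate which colors are facility colors and how client colors are partitioned among them, and compare the two objectives via \cref{lem:mdmcc-to-mdmc}. Two cosmetic remarks: the labeled partition $Z_1\uplus\dots\uplus Z_\ell$ is counted by $\ell!\stirling{k}{\ell}$ rather than $\stirling{k}{\ell}$, and your relabeling step could use $\binom{k+\ell}{\ell}$ instead of $(k+\ell)!$ choices; neither affects the $2^{O(r\log r)}$ bound.
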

To prove \cref{lem:turingred},
we will prove that
\begin{enumerate}[(a)]
	\item\label{cc-dir1}
	any feasible
	solution to \mdmcc{}
	is a feasible solution
	with at least the same profit
	for \mdmc{}
	and
	
	\item\label{cc-dir2}
	for any optimal solution~\(A\uplus C\)
	to \mdmc{},
	we can sufficiently quickly find
	a coloring
	and
	a partition~\(Z_1\uplus\dots\uplus Z_\ell\)
	such that
	\(A\uplus C\)~is a feasible
	solution
	with at least the same profit
	for \mdmcc{}.
\end{enumerate}
Towards \eqref{cc-dir2},
note that
we cannot simply try out
all partitions and colorings;
for example,
there are
\(\bigO(r^n)\)~colorings---%
too many for \cref{lem:turingred}.
However,
since there is a client matroid
of rank~\(r\),
there is an optimal solution
to \mdmc{}
such that \(|C|=\ka\)
and \(|A|=\ell\)
for some \(\ell\leq \ka\leq r\).
Thus,
without loss of generality
assuming \(U=\{1,\dots,n\}\),
%
the colorings in a
\emph{$(n,\ka+\ell)$-perfect hash family}
as defined below
will contain a coloring
such that the elements
of \(A\uplus C\) get pairwise distinct colors:
\begin{definition}[\boldmath $(n,s)$-perfect hash family
	{\citep[Definition~5.17]{CFK+15}}]
	An \emph{\((n,s)\)-perfect hash family}
	is a set~\(\mathcal F\)
	of functions~\(f\colon\{1,\dots,n\}\to\{1,\dots,s\}\)
	such that,
	for any \(S\subseteq\{1,\dots,n\}\)
	with \(|S|\leq s\),
	there is a function~\(f\in \mathcal F\)
	injective on~\(S\).
\end{definition}

\begin{proposition}[{{\citep[Theorem~5.18]{CFK+15}}}]
	\label[proposition]{prop:perfhash}
	An \((n,s)\)-perfect hash family
	of size \(e^ss^{\bigO(\log s)}\log n\)
        can be computed
	in
	\(e^ss^{\bigO(\log s)}n\log n\)~time.
\end{proposition}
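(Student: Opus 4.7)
The plan is to follow the classical two-stage derandomization strategy for color coding: compose a ``splitter'' from the large universe $[n]$ into a small universe with an explicit perfect hash family on the small universe.

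First, I would build a family $\mathcal{F}_1$ of functions $[n]\to[s^2]$ such that, for every $s$-subset $S\subseteq[n]$, at least one $f\in\mathcal{F}_1$ is injective on $S$. A uniformly random such $f$ is injective on a fixed $s$-subset with probability at least $1-\binom{s}{2}/s^2\geq 1/2$ for $s\geq 2$, so a randomized union bound over the at most $n^s$ subsets yields a family of size $O(s\log n)$. To make this explicit, I would use a standard $s$-wise independent hash family over $\mathbb{F}_p$ for a prime $p\in[s^2,2s^2]$, giving $|\mathcal{F}_1|=\poly(s)\cdot\log n$ constructible in $\poly(s)\cdot n\log n$ time.

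Second, I would build an explicit $(s^2,s)$-perfect hash family $\mathcal{F}_2$ of size $e^s\cdot s^{O(\log s)}$ in comparable construction time, via standard derandomization on the reduced universe of size $s^2$. The $e^s$ factor is forced by the probability $s!/s^s=\Theta(e^{-s}/\sqrt{s})$ that a random function $[s^2]\to[s]$ is injective on a fixed $s$-subset; a careful derandomization (e.g., via $O(\log s)$-wise independent hash families or conditional expectations on a suitable potential function) keeps the additional overhead within the stated $s^{O(\log s)}$ factor. Third, I would compose: set $\mathcal{F}:=\{g\circ f\mid f\in\mathcal{F}_1,\ g\in\mathcal{F}_2\}$. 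For any $s$-subset $S\subseteq[n]$, pick $f\in\mathcal{F}_1$ injective on $S$ (so $|f(S)|=s$) and then $g\in\mathcal{F}_2$ injective on $f(S)$; then $g\circ f$ is injective on $S$. The size satisfies $|\mathcal{F}|\leq|\mathcal{F}_1|\cdot|\mathcal{F}_2|\leq e^s\cdot s^{O(\log s)}\cdot\log n$, and the construction runs in $e^s\cdot s^{O(\log s)}\cdot n\log n$ time, dominated by enumerating pairs $(f,g)$ and evaluating each on all $n$ elements.

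The main obstacle is the second step: producing an explicit $(s^2,s)$-perfect hash family whose size matches the probabilistic upper bound up to an $s^{O(\log s)}$ factor, rather than the crude $s^{O(s)}$ one gets from naive enumeration. The leverage is that the universe $[s^2]$ is polynomial in $s$, so one can afford derandomization tools whose cost would be prohibitive on $[n]$; the resulting $e^s\cdot s^{O(\log s)}$ bound on $|\mathcal{F}_2|$ is exactly what propagates into the final bound after composition. Everything else is routine bookkeeping, and the cited \citep[Theorem~5.18]{CFK+15} provides the missing details.
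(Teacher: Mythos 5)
The paper never proves this proposition: it is imported verbatim from \citep[Theorem~5.18]{CFK+15}, and your two-stage plan (map $[n]$ down to a universe of size $s^2$, then compose with an explicit $(s^2,s)$-perfect hash family of size $e^s s^{\bigO(\log s)}$) is exactly the splitter-based construction of Naor, Schulman and Srinivasan on which the cited theorem rests, so at the level of strategy you are following the same route as the source the paper points to.

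There is, however, a concrete flaw in your first stage as written: an exactly $s$-wise independent (indeed, even an exactly pairwise independent) family of functions $[n]\to[s^2]$ cannot have size $\poly(s)\cdot\log n$, since known lower bounds force such sample spaces to have size growing polynomially (at least linearly) in $n$; moreover, the polynomial construction over $\mathbb F_p$ that you invoke requires the field to contain the domain, i.e.\ $p\geq n$, not $p\approx s^2$. What the cited proof actually uses here is an $(n,s,s^2)$-splitter of size $\poly(s)\cdot\log n$ built from small-bias, \emph{almost} independent sample spaces (Theorem~5.16 in \citep{CFK+15}); only approximate independence is compatible with a $\log n$-size family, and your birthday-bound calculation survives that relaxation. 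Similarly, the $e^s s^{\bigO(\log s)}$ bound for the small-universe family comes in the cited proof from an $O(\log s)$-deep recursive splitting of the $s$-subset, not from ``$O(\log s)$-wise independence or conditional expectations,'' which in your sketch is a placeholder rather than an argument. Since you ultimately defer these details to the cited theorem, the overall plan is sound, but as a self-contained proof the first-stage primitive you name does not exist at the claimed size and must be replaced by the splitter construction.
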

In the following,
we will prove that
\cref{alg:turingred}
correctly solves \mdmc{}.

\begin{algorithm}
  \caption{for the proof of \cref{lem:turingred}.}
  \label[algorithm]{alg:turingred}
  \begin{compactdesc}
    
  \item[Input:]
    An \mdmc{} instance:
    universe~\(U=\{1,\dots,n\}\),
    profits~\(p_{uv}\in\N\)
    for each~$u,v\in U$,
    costs~\(c_u\in\N\)
    for each~$u\in U$,
    facility matroids \(\{(U_i,A_i)\}_{i=1}^a\),
    and client matroids \(\{(V_i,C_i)\}_{i=1}^c\)
    of minimum rank~\(r\),
    where \(U_i\cup V_i\subseteq U\).
    
  \item[Output:]
    An optimal solution~\(A\uplus C\) to \mdmc{}.
  \end{compactdesc}
    \smallskip\hrule\smallskip
  \begin{compactenum}[\footnotesize 1:]
  \item \label[line]{lin:guessell} \textbf{for each} $1\leq\ell\leq\ka\leq r$  \textbf{do}
  \item \quad    \(\mathcal F\gets(n,\ell+\ka)\)-perfect hash family. \label[line]{lin:perfhash}
  \item \quad \label[line]{lin:forhash} \textbf{for each} $f\in\mathcal F$ \textbf{do}
    
  \item \qquad \label[line]{lin:x} \textbf{for each} $X\subseteq\{1,\dots,\ell+k\}$ with \(|X|=\ell\) \textbf{do}
  \item  \label[line]{lin:startlinear}%
    \qquad\quad  bijectively rename colors in~\(f\) and~\(X\) so that \(X=\{1,\dots,\ell\}\).
        
  \item \qquad\quad \label[line]{lin:partitions} \textbf{for each} partition \(Z_1\uplus\dots\uplus Z_\ell=\{\ell+1,\dots,\ell+\ka\}\) \textbf{do}
  \item \qquad\qquad Solve \mdmcc{} with            universe~\(U\),
    coloring~\(f\),
    partition~\(Z_1\uplus\dots\uplus Z_\ell\),

    \qquad\qquad\qquad
    profits~\((p_{uv})_{u,v\in U}\),
    costs~\((c_u)_{u\in U}\),
    matroids \(\{(U_i,A_i)\}_{i=1}^a\)
    and \(\{(V_i,C_i)\}_{i=1}^c\)%
    \label[line]{lin:solve}.
    
  \item \textbf{return} maximum\hyp profit \mdmcc{} solution found in any iteration.
    \label[line]{lin:return}\;
  \end{compactenum}
\end{algorithm}

\begin{lemma}
	\label[lemma]{lem:mdmcc-to-mdmc}
	Any \mdmcc{} solution~\(A\uplus C\)
	is also an \mdmc{} solution
	with at least the same profit.
\end{lemma}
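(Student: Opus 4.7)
The plan is to verify feasibility and compare profits separately. For feasibility, the matroid independence constraints in \mdmcc{} (item~\eqref{mdmcc:independent} of \cref{prob:mdmcc}) coincide verbatim with the ones imposed by \mdmc{}, and the notation \(A\uplus C\) already guarantees the disjointness required by \mdmc{}. So the nontrivial content is the profit comparison.

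Next, I will rewrite the \mdmcc{}\hyp profit~\eqref{mdmccgf} by exchanging the order of summation and showing that each client contributes to the inner sum of exactly one facility. Formally, the color classes \(Z_1,\dots,Z_\ell\) partition the client color range \(\{\ell+1,\dots,\ell+k\}\), and by condition~\eqref{mdmcc:colored-client} each \(v\in C\) has \(\col(v)\in\{\ell+1,\dots,\ell+k\}\). Hence there is a unique \(i(v)\in\{1,\dots,\ell\}\) with \(\col(v)\in Z_{i(v)}\), and by condition~\eqref{mdmcc:colored-fac} there is a unique facility \(u(v)\in A\) with \(\col(u(v))=i(v)\); by definition of~\(Z(u)\) this is the only \(u\in A\) for which \(v\in Z(u)\). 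Consequently,
\[
  \sum_{u\in A}\sum_{v\in C\cap Z(u)}p_{uv}
  \;=\;\sum_{v\in C}p_{u(v),v}
  \;\leq\;\sum_{v\in C}\max_{u\in A}p_{uv}.
\]

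Plugging this into~\eqref{mdmccgf} gives
\[
  \sum_{u\in A}\Bigl(-c_u+\smashoperator{\sum_{v\in C\cap Z(u)}}p_{uv}\Bigr)
  \;\leq\;\sum_{v\in C}\max_{u\in A}p_{uv}-\sum_{u\in A}c_u,
\]
whose right\hyp hand side is exactly the \mdmc{}\hyp profit in~\eqref{mdmcgf}. Combined with the already observed matroid feasibility, this proves the claim.

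There is no real obstacle here; the only point to be careful about is verifying that every client in~\(C\) is assigned to a \emph{unique} facility by the partition, which is a direct consequence of conditions~\eqref{mdmcc:colored-fac} and \eqref{mdmcc:colored-client} together with the fact that \(Z_1,\dots,Z_\ell\) is a partition of the client color range. The inequality then trivially follows by replacing the specific assignment \(u(v)\) with the profit\hyp maximizing facility.
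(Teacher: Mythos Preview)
Your proof is correct and follows essentially the same approach as the paper: verify feasibility trivially via condition~\eqref{mdmcc:independent}, then swap the order of summation in~\eqref{mdmccgf}, use the partition structure together with conditions~\eqref{mdmcc:colored-fac} and~\eqref{mdmcc:colored-client} to argue that each client \(v\in C\) lies in \(Z(u)\) for exactly one \(u\in A\), and bound the resulting single term \(p_{u(v),v}\) by \(\max_{u\in A}p_{uv}\). The paper does the same thing, only packaging the swap via an auxiliary indicator \(\bar p_{uv}\) instead of your explicit map \(v\mapsto u(v)\); if anything, your justification of the uniqueness of \(u(v)\) is slightly more detailed than the paper's.
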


\begin{proof}
Let \(A\uplus C\) be
any \mdmcc{} solution.
Obviously,
it is also feasible for \mdmc{}.
Its profit as an \mdmcc{} solution is
given by \eqref{mdmccgf}
and,
denoting
\(\bar p_{uv}:=p_{uv}\)
if $v\in Z(u)$ and \(\bar p_{uv}:=0\) otherwise,
is
$$\sum_{u\in A}\Bigl(-c_u+
\smashoperator{\sum_{v\in C\cap Z(u)}}p_{uv}
\Bigr)
={}\sum_{u\in A}\Bigl(-c_u+\sum_{v\in C}\bar p_{uv}\Bigr)
={}-\sum_{u\in A}c_u+\sum_{v\in C}\sum_{u\in A}
\bar p_{uv}.$$
Since, for each~\(v\in C\),
	there is exactly one~\(u\in A\)
	with \(v\in Z(u)\),
	this is
$$\leq
-\sum_{u\in A}c_u+\sum_{v\in C}\max_{u\in A} p_{uv},
$$
which is exactly the cost of~\(A\uplus C\)
as a solution to \mdmc{} as given by \eqref{mdmcgf}.
\qed\end{proof}

\begin{lemma}
	\label[lemma]{lem:algcorrect}
	\cref{alg:turingred} is correct.
\end{lemma}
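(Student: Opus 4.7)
The plan is to combine \cref{lem:mdmcc-to-mdmc} with the color-coding guarantee of \cref{prop:perfhash}. By \cref{lem:mdmcc-to-mdmc}, every \mdmcc{} solution produced in \cref{lin:solve} yields an \mdmc{} solution of the same or larger profit, so whatever is returned in \cref{lin:return} is automatically \mdmc{}-feasible. It therefore suffices to show that, for some iteration of the outer loops, the \mdmcc{} solver recovers a solution whose profit is at least that of an optimal \mdmc{} solution.

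Next, I would fix an optimal \mdmc{} solution \(A^{*}\uplus C^{*}\). Without loss of generality, I may assume that every facility \(u\in A^{*}\) is a profit\hyp maximizing facility for at least one client \(v\in C^{*}\): otherwise removing \(u\) from \(A^{*}\) leaves every client's best profit unchanged, weakly increases the objective (since \(c_u\geq 0\)), and preserves independence in all facility matroids by hereditary. After this reduction \(|A^{*}|\leq|C^{*}|\leq r\). Set \(\ell:=|A^{*}|\) and \(\kappa:=|C^{*}|\); if \(\kappa=0\) the optimum is zero and is matched trivially, so assume \(1\leq\ell\leq\kappa\leq r\), a pair enumerated by \cref{lin:guessell}. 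By \cref{prop:perfhash}, the family \(\mathcal F\) constructed in \cref{lin:perfhash} contains some \(f\) injective on the \((\ell+\kappa)\)-element set \(A^{*}\cup C^{*}\), and the loop in \cref{lin:x} will eventually pick \(X=f(A^{*})\); after the renaming in \cref{lin:startlinear} the facilities of \(A^{*}\) receive pairwise distinct colors in \(\{1,\dots,\ell\}\) and the clients of \(C^{*}\) receive pairwise distinct colors in \(\{\ell+1,\dots,\ell+\kappa\}\).

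It then remains to produce the right partition and check the profit bookkeeping. For each \(v\in C^{*}\) I would fix an arbitrary \(u(v)\in\argmax_{u\in A^{*}}p_{uv}\) and set \(Z_u:=\{\col(v)\mid v\in C^{*},\,u(v)=u\}\). By the WLOG reduction each \(Z_u\) is non\hyp empty, so \(\{Z_u\}_{u\in A^{*}}\) is a valid partition of \(\{\ell+1,\dots,\ell+\kappa\}\) and is enumerated by \cref{lin:partitions}. For this particular combination \((\ell,\kappa,f,X,\{Z_u\})\), the set \(A^{*}\uplus C^{*}\) satisfies conditions (i)--(iii) of \cref{prob:mdmcc} directly, and rewriting the sum in \eqref{mdmccgf} along the mapping \(v\mapsto u(v)\) shows that its \mdmcc{} profit equals \(-\sum_{u\in A^{*}}c_u+\sum_{v\in C^{*}}\max_{u\in A^{*}}p_{uv}\), which is the \mdmc{} profit of \(A^{*}\uplus C^{*}\). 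Hence the \mdmcc{} solver in \cref{lin:solve} finds a solution of profit at least the \mdmc{} optimum, and \cref{lem:mdmcc-to-mdmc} converts it back. The one subtle point, and in my view the only real obstacle, is that the partition must consist of non\hyp empty parts; this is precisely what forces the preliminary WLOG removal of facilities from \(A^{*}\) that serve no client as a best facility, and it is the only step that is not immediate from the perfect\hyp hash guarantee and the outer enumerations.
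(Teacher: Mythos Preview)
Your overall structure matches the paper's proof: fix an optimal \mdmc{} solution, argue that some iteration of the loops produces an \mdmcc{} instance for which this solution is feasible with the same profit, and invoke \cref{lem:mdmcc-to-mdmc} for the converse direction.

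There is, however, a small but genuine gap. Your WLOG reduction only removes a facility~\(u\) that is \emph{not} profit\hyp maximizing for any client. After this reduction you pick, for each~\(v\), an \emph{arbitrary} \(u(v)\in\argmax_{u\in A^{*}}p_{uv}\) and then claim every \(Z_u\) is non\hyp empty. This does not follow. Consider \(A^{*}=\{u_1,u_2,u_3\}\), \(C^{*}=\{v_1,v_2,v_3,v_4\}\), zero costs, and profits such that \(\argmax_{v_j}=\{u_1\}\) for \(j\in\{1,2,3\}\) while \(\argmax_{v_4}=\{u_2,u_3\}\). Every facility survives your WLOG (each lies in some argmax), yet whatever single value you pick for \(u(v_4)\), one of \(Z_{u_2},Z_{u_3}\) is empty. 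Since costs are zero, \(A^{*}\uplus C^{*}\) is indeed an optimal \mdmc{} solution, so you cannot rule it out by optimality alone.

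The paper closes this gap by choosing the optimal solution with \(A\) of \emph{minimum cardinality} and using a deterministic tie\hyp break \(m(v):=\min\{u\in A\mid p_{uv}=\max_{w\in A}p_{wv}\}\); then any facility not hit by~\(m\) could be dropped, contradicting minimality of~\(|A|\). An equally easy fix for your argument is to reverse the order of operations: first fix the assignment \(u(\cdot)\), then discard every facility not in its image (this preserves the objective since each client's maximum is still attained at \(u(v)\)); on the trimmed \(A^{*}\) your \(Z_u\) are non\hyp empty by construction.
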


\begin{proof}
Let \(A\uplus C\)~be
an optimal solution to \mdmc{}
such that \(A\)~is of minimum size.
We show that \cref{alg:turingred}
outputs a solution to \mdmc{}
of equal profit.

Since one of the client matroids has rank~\(r\),
one has \(|A|=\ell\) and~\(|C|=\ka\)
such that \(1\leq \ell\leq\ka\leq r\).
\cref{alg:turingred}
tries these \(\ell\) and~\(\ka\) in
\cref{lin:guessell}.
Thus,
the \((n,\ka+\ell)\)-perfect hash family~\(\mathcal F\)
generated in \cref{lin:perfhash}
contains a function~\(\col\colon U\to\{1,\dots,\ka+\ell\}\)
that is bijective restricted to~\(A\uplus C\).
\cref{alg:turingred}
tries this function~\(\col\) in \cref{lin:forhash}.
Since \(|A|=\ell\)
contains elements of pairwise
distinct colors,
\cref{alg:turingred}
in \cref{lin:x}
iterates over the color set~\(X\) of~\(A\)
and renames all colors so that \(X=\{1,\dots,\ell\}\).
We get that \(A\)~contains
exactly one element of each color of \(X=\{1,\dots,\ell\}\)
and that \(C\)~contains
exactly one element of each color of~\(\{\ell+1,\dots,\ell+k\}\).

Now,
recall that \(U=\{1,\dots,n\}\)
and,
for each \(v\in C\), let \(m(v)\in A\)~be
the facility with minimum index
that serves~\(v\) with maximum profit,
that is,
\[m(v):=\min\{u\in A\mid p_{uv}=\max_{w\in A}p_{wv}\}.\]
Then,
for each~\(u\in A\),
there is a \(v\in C\)
such that \(m(v)=u\):
otherwise,
we can rewrite the goal function \eqref{mdmcgf}
of \mdmc{}
as
\[
\sum_{v\in C}\max_{u\in A}p_{uv}-\sum_{u\in A}c_u=
\sum_{v\in C}p_{m(v),v}-\sum_{u\in A}c_u,
\]
and removing~\(u\) from~\(A\)
would yield a solution
to \mdmc{} with at least the same profit but smaller~\(A\),
contradicting the minimality of~\(A\).
Thus,
for each \(i\in\{1,\dots,\ell\}\),
the set~\(Z_i\)~of colors of the clients~\(v\)
served by a facility~\(m(v)\) of color~\(i\),
that is,
\begin{align*}
Z_i&:=\{\col(v)\mid v\in C, m(v)=u, \col(u)=i\},
\end{align*}
is nonempty. Hence,
\(Z_1\uplus\dots\uplus Z_\ell=\{\ell+1,\dots,\ell+\ka\}\)
is a partition:
\begin{itemize}
	\item Equality follows since \(C\)~contains exactly one element of each color of~\(\{\ell+1,\dots,\ell+k\}\)
	and \(m(v)\)~is defined for each~\(v\in C\).
	\item Pairwise disjointness follows
	since \(m(v)\)~for each client~\(v\in C\) is unique.
\end{itemize}
Since \(Z_1\uplus\dots\uplus Z_\ell\)~is a partition,
\cref{alg:turingred} in \cref{lin:partitions}
iterates over this partition
and  
\(A\uplus C\)~is a feasible solution
to the \mdmcc{} instance in this iteration.
We show that its profit as an \mdmcc{}
solution,
given by \eqref{mdmccgf},
is the same as the profit as an \mdmc{} solution,
given by \eqref{mdmcgf}.
To this end,
denote
\[
\bar p_{uv}:=
\begin{cases}
p_{uv}&\text{ if $m(v)=u$, and}\\
0&\text{ otherwise},
\end{cases}
\]
and observe that,
for \(u\in A\) and~\(v\in C\),
one has
\(m(v)=u\) if and only if
\(\col(v)\in Z_{\col(u)}\).
By choice in \eqref{mdmccgf},
this is if and only if
\(v\in C\cap Z(u)\).
Thus, the cost of~\(A\uplus C\)
as a solution to \mdmc{} is
\begin{align*}
\sum_{v\in C}\max_{u\in A}p_{uv}-\sum_{u\in A}c_u
  =\sum_{v\in C}\sum_{u\in A}\bar p_{uv}-\sum_{u\in A}c_u
=\sum_{u\in A}\sum_{v\in C\cap Z(u)} p_{uv}-\sum_{u\in A}c_u,
\end{align*}
which is exactly the profit~\eqref{mdmccgf}
of~\(A\uplus C\)
as a solution to the \mdmcc{} instance
given solved in this iteration.
Thus,
in \cref{lin:return},
\cref{alg:turingred}
will return an \mdmcc{}
solution with at least this profit.
By \cref{lem:mdmcc-to-mdmc},
this will be an \mdmc{} solution
of at least the same profit.
Since \(A\uplus C\)~is
an optimal \mdmc{} solution,
we conclude that the solution
returned by \cref{alg:turingred}
is also optimal.
\qed\end{proof}
We can now complete
the reduction of \mdmc{} to \mdmcc{}.

\begin{proof}[of \cref{lem:turingred}]
We have shown in \cref{lem:algcorrect}
that \cref{alg:turingred}
correctly solves \mdmc{}.
It remains to analyze the running time.

The loop in \cref{lin:guessell}
makes \(r\cdot(r+1)/2\)~iterations.
Observe that
\(\ell+k\leq 2r\).
By \cref{prop:perfhash},
the perfect hash family
in \cref{lin:perfhash}
is computable in
\(e^{\ell+\ka}\cdot(\ell+\ka)^{\bigO(\log(\ell+\ka))}\cdot
n\log n\)~time
and
the loop in \cref{lin:forhash}
makes
\(e^{\ell+\ka}\cdot(\ell+\ka)^{\bigO(\log(\ell+\ka))}\cdot\log n\)~iterations.
\Cref{lin:startlinear} works in \(\bigO(n)\)~time,
whereas \cref{lin:solve} works
in \(t(k+\ell)\)~time
by assumption.
There are at most \(\ell^{\ell+k}\)~variants
to chose \(X\uplus Z_1\uplus \dots\uplus Z_{\ell}\).
Thus, the overall running time of the algorithm is
\(2^{\bigO(r\log r)}(t(2r)+n)\log n\).
\qed\end{proof}


\subsubsection{One arbitrary facility matroid and
	one uniform client matroid}
\label{sec:uniclients}
\noindent
We now prove \cref{thm:alg}\eqref{thm:alg1}:
an algorithm that solves
\mdmc{} in \(2^{\bigO(r\log r)}\cdot n^2\)~time
if there is only one,
yet arbitrary (not necessarily linear)
facility matroid
and one uniform client matroid of rank~\(r\).
To this end,
we show:
\begin{proposition}
	\label[proposition]{lem:mdmccpoly}
	\mdmcc{} is solvable in \(O(\ell n^2)\)~time
	if there is one facility matroid
	given as an independence oracle
	and one client matroid
	that is uniform.
\end{proposition}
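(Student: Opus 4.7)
The plan is to exploit the decoupled structure of~\mdmcc{}: its objective~\eqref{mdmccgf} is a sum over $u\in A$ of $-c_u+\sum_{v\in C\cap Z(u)}p_{uv}$, and since $Z_1\uplus\dots\uplus Z_\ell$ partitions $\{\ell+1,\dots,\ell+k\}$ and $C$ contains exactly one element per color in this range, the sets $\{C\cap Z(u)\}_{u\in A}$ themselves partition $C$. Thus, for any fixed facility set $A$, an optimal $C$ is obtained independently for each $u\in A$ and each color $j\in Z_{\col(u)}$ by picking the client $v$ with $\col(v)=j$ that maximizes $p_{uv}$. The uniform client matroid imposes no further cross\hyp facility constraint, since $|C|=k$ is forced by the color requirements and the instance can be rejected upfront if $k$ exceeds the client matroid's rank.

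First, I would precompute, for each $u$ with $\col(u)=i\in\{1,\dots,\ell\}$, the ``reduced profit''
\[
  \hat p(u):=-c_u+\sum_{j\in Z_i}\max_{v\colon\col(v)=j}p_{uv},
\]
together with an argmax witness $v^\star(u,j)$ for each $j\in Z_i$; a single scan of the profit matrix accomplishes this in $O(n^2)$~time. By the previous paragraph, an optimal solution then maximizes $\sum_{u\in A}\hat p(u)$ over sets $A$ that are independent in the facility matroid $M=(U,A_1)$ and contain exactly one element per color in $\{1,\dots,\ell\}$. This is exactly a maximum\hyp weight common basis (of rank $\ell$) in the intersection of $M$ and the partition matroid $P$ whose blocks are the $\ell$ facility color classes. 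I would solve it using the textbook augmenting\hyp path weighted matroid intersection algorithm, building the exchange graph on $U\cup\{s,t\}$ in each of the $\ell$ iterations; since $P$\hyp independence checks are trivial, only the $M$\hyp arcs need oracle queries. Finally, the client set $C$ is reconstructed from the optimal $A$ in $O(k)$~time via the cached witnesses $v^\star(\cdot,\cdot)$.

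The main obstacle is certifying that each iteration of the matroid\hyp intersection routine runs within $O(n^2)$~time: the exchange graph has $O(\ell n)$ arcs (requiring one $M$\hyp oracle call per candidate swap and per source/sink check), and finding a minimum\hyp cost augmenting path reduces to a shortest\hyp path computation on this DAG\hyp like structure, handled by maintaining non\hyp negative reduced costs via potentials and running Dijkstra, yielding $O(\ell n+n\log n)\subseteq O(n^2)$ per iteration and $O(\ell n^2)$ overall. Everything else---the decomposition, the precomputation, and the client reconstruction---is bookkeeping once the structural observation of the first paragraph is in place.
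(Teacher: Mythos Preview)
Your proposal is correct and follows essentially the same approach as the paper: precompute a reduced profit $\hat p(u)$ for each potential facility by greedily assigning the best client of each color in $Z_{\col(u)}$ (the paper calls this $w(u)$ with follower set $F(u)$), then reduce to maximum\hyp weight common independent set of size~$\ell$ in the facility matroid and the partition matroid on facility colors. The only difference is presentational: where you spell out the augmenting\hyp path/exchange\hyp graph argument for the $O(\ell n^2)$ bound, the paper simply cites Schrijver, Sections~41.3 and~41.3a.
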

Then,
\cref{thm:alg}\eqref{thm:alg1}
follows from \cref{lem:mdmccpoly,lem:turingred}.
To present the algorithm
for the proof of \cref{lem:mdmccpoly},
we introduce the following notation:
\begin{definition}
		\label[definition]{def:ui}
	For a universe~\(U\)
	with coloring~\(\col\colon\allowbreak U\to\{1,\dots,k+\ell\}\),
	we denote by $U(i):=\{u\in U\mid \col(u)=i\}$
	the elements of color~\(i\).
\end{definition}
To prove \cref{lem:mdmccpoly},
we use \cref{alg:mdmccpoly},
which solves \mdmcc{} as follows.
In \cref{lin:fu},
for each facility~\(u\),
it computes a set~\(F(u)\)
containing of each color in~\(Z_{\col(u)}\)
exactly one client~\(v\in V_1\)
that maximizes~\(p_{uv}\).
The intuition is that
if facility~\(u\) will be part of a solution,
then the clients~\(F(u)\) will
\emph{follow}~\(u\) into the solution.
In \cref{lin:wu},
it assigns
to each facility~\(u\)
a weight~\(w(u)\),
which is the profit
gained from serving
the clients in~\(F(u)\) by~\(u\)
minus the cost for opening facility~\(u\).
Finally,
in \cref{lin:compA},
it computes a maximum\hyp
weight set~\(A\in A_1\)
containing exactly one facility
of each color~\(\{1,\dots,\ell\}\)
and chooses \(C:=\bigcup_{u\in A}F(u)\).
The crucial point herein is
that the set~\(A\) can be computed
as the
maximum\hyp weight common independent set
of size~\(\ell\)
of two matroids,
which can be done in polynomial time
\citep[Sections~41.3 and~41.3a]{Sch03}.
In the following,
we prove the correctness
and the running time of
\cref{alg:mdmccpoly}.
\begin{algorithm}
  \caption{for the proof of \cref{lem:mdmccpoly}.}
  \label[algorithm]{alg:mdmccpoly}
  \begin{compactdesc}
    
  \item[Input:]
    An \mdmcc{} instance:
    a universe~\(U=\{1,\dots,n\}\),
    profits~\(p_{uv}\in\N\) for each~$u,v\in U$,
    costs~\(c_u\in\N\) for each~$u\in U$,
    a~coloring~\(\col\colon U\to\{1,\dots,k+\ell\}\),
    a facility matroid~\((U_1,A_1)\),
    a uniform client matroid \((V_1,C_1)\) of rank~\(r\),
    where~\(U_1\cup V_1\subseteq U\), and a
    partition~\(Z_1\uplus\dots \uplus Z_\ell=\{\ell+1,\dots,\ell+k\}\).
    
  \item[Output:] An optimal solution~\(A\uplus C\) to \mdmcc{}.
  \end{compactdesc}
  \smallskip\hrule\smallskip
  \begin{compactenum}[\footnotesize 1:]
  \item \label[line]{lin:check-kr} \textbf{if} $k>r$ \textbf{then return} No solution exists.
  \item \label[line]{lin:checkcol} \textbf{if} $\{\ell+1,\dots,\ell+k\}\nsubseteq \{\col(v)\mid v\in V_1\}$ \textbf{then return} No solution exists.
  \item \textbf{for each} $u\in U_1$ \textbf{do}
  \item       \label[line]{lin:fu}\quad
    \(\displaystyle
    F(u)\gets\Bigl\{\argmax_{v\in V_1\cap U(j)}p_{uv}\Bigm| j\in Z_{\col(u)}\Bigr\}\).
    
  \item \quad \(w(u)\gets -c_u+\sum_{v\in F(u)}p_{uv}\). \label[line]{lin:wu}

  \item     Compute $A\in A_1$ 
    maximizing \(\sum_{u\in A}w(u)\)
    and
    containing exactly one element of each color in~\(\{1,\dots,\ell\}\).
    \label[line]{lin:compA}\;
    
  \item  \label[line]{lin:check-A}
    \textbf{if} \(A\) not found
    \textbf{then return} No solution exists.
      \label[line]{lin:nosolution}
  \item     \(C\gets\bigcup_{u\in A}F(u)\).
    \label[line]{lin:compC}\;
  \item \textbf{return} $A\uplus C$.  \label[line]{lin:retAC}\;
  \end{compactenum}
\end{algorithm}

\begin{lemma}
	\label[lemma]{lem:mdmccpoly-correct}
	\cref{alg:mdmccpoly} is correct.
\end{lemma}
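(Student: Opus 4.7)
The plan is to exploit the decomposition of the UFLP-MCC objective per facility. In any feasible solution $A\uplus C$ with $|A|=\ell$ and $|C|=k$, the partition $Z_1\uplus\dots\uplus Z_\ell=\{\ell+1,\dots,\ell+k\}$ together with the color constraints forces a unique assignment of clients to facilities: the one facility $u$ of color $i\in\{1,\dots,\ell\}$ is the only one that can collect profit from clients whose colors lie in $Z_i$. Therefore the objective \eqref{mdmccgf} splits as $\sum_{u\in A}(-c_u+\sum_{v\in C\cap Z(u)}p_{uv})$ with \emph{independent} per-facility summands. This means that once $A$ is fixed, the optimal client set is obtained by picking, \emph{separately for each $u\in A$ and each color $j\in Z_{\col(u)}$}, an element $v\in V_1\cap U(j)$ maximizing $p_{uv}$—exactly the set $F(u)$ computed in \cref{lin:fu}. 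Hence $w(u)$ computed in \cref{lin:wu} equals the maximum profit contribution of facility $u$ to the objective when $u$ is opened.

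Next I would verify the two rejection conditions. Since the client matroid is uniform of rank $r$, any feasible $C$ has $|C|=k\leq r$, which justifies \cref{lin:check-kr}; and the condition in \cref{mdmcc:colored-client} demands a client of each color in $\{\ell+1,\dots,\ell+k\}$, which must come from $V_1$, justifying \cref{lin:checkcol}. Assuming both pass, I would check that $C:=\bigcup_{u\in A}F(u)$ computed in \cref{lin:compC} is indeed feasible: by construction $|F(u)|=|Z_{\col(u)}|$, and since the $Z_i$'s partition $\{\ell+1,\dots,\ell+k\}$ the $F(u)$'s are pairwise disjoint and their union contains exactly one client of each required color, while $|C|=k\leq r$ ensures $C\in C_1$. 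Disjointness $A\cap C=\emptyset$ follows from the fact that $A$ uses only colors in $\{1,\dots,\ell\}$ while $C$ uses only colors in $\{\ell+1,\dots,\ell+k\}$.

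It then remains to argue that \cref{lin:compA} correctly selects an optimal $A$. The constraint ``exactly one element of each color $\{1,\dots,\ell\}$'' is the basis constraint of a partition matroid whose ground-set blocks are $U(1),\dots,U(\ell)$. Together with the facility matroid $(U_1,A_1)$, finding a maximum-$w$ set $A$ lying in both is precisely the \textsc{Weighted Matroid Intersection} problem at basis rank $\ell$, which is solvable in polynomial time with a polynomial number of oracle queries \citep[Sections~41.3 and~41.3a]{Sch03}; if no common basis of this size exists in both matroids, then no feasible $A$ exists at all and rejection in \cref{lin:nosolution} is correct. Combining this with the per-facility decomposition above, the returned $A\uplus C$ attains the maximum of \eqref{mdmccgf} over all feasible solutions.

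The main obstacle, and the reason the proof is not entirely trivial, is establishing that the \emph{greedy} local optimization of $F(u)$ combined with the \emph{global} optimization of $A$ actually yields a jointly optimal pair. This relies on the key observation that the coloring and partition make the profits attributable to distinct facilities functionally independent, so that no client picked by $F(u)$ can ever conflict with a better assignment involving another facility $u'\in A$; once this independence is articulated, the reduction to matroid intersection and the running time $O(\ell n^2)$ (dominated by the $O(n^2)$ cost of computing all $F(u)$ and the polynomial cost of matroid intersection with $\ell$ as the rank parameter) are immediate.
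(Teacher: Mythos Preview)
Your proposal is correct and follows essentially the same approach as the paper: verify the two rejection conditions, show feasibility of the returned $A\uplus C$, and argue optimality by observing that the per-facility contributions are independent so that $w(u)$ upper-bounds any feasible contribution of facility~$u$, whence the maximum-weight~$A$ in \cref{lin:compA} together with $C=\bigcup_{u\in A}F(u)$ is optimal. One organizational remark: the reduction of \cref{lin:compA} to weighted matroid intersection and the $O(\ell n^2)$ bound belong to the running-time lemma (\cref{lem:mdmccpoly-time}), not to the correctness lemma you are proving here; the paper separates these, and you should too.
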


\begin{proof}
First, assume that
\cref{alg:mdmccpoly}
returns some \(A\uplus C\)
in \cref{lin:retAC}.
We show that
\(A\uplus C\)~is
a feasible solution to
the input \mdmcc{} instance.
By construction of~\(A\)
in \cref{lin:compA},
\(A\in A_1\) and contains exactly one element
of each color~\(\{1,\dots,\ell\}\).
Thus,
\cref{prob:mdmcc}\eqref{mdmcc:colored-fac}
is satisfied.
Since \(Z_1\uplus\dots\uplus Z_\ell\)~is
a partition of~\(\{\ell+1,\dots,\ell+k\}\)
and
\cref{lin:checkcol}
has been passed,
\(C=\bigcup_{u\in A}F(u)\)
for the sets~\(F(u)\subseteq V_1\) computed
in \cref{lin:fu}
contains 
exactly one element
of each color~\(\{\ell+1,\dots,\ell+k\}\)
and \eqref{mdmcc:colored-client} is satisfied.
Thus, \(C\subseteq V_1\) and \(|C|=k\).
Moreover,
\(k\leq r\)
since \cref{lin:check-kr}
has been passed.
Thus,
since \((V_1,C_1)\)~is a uniform
matroid of rank~\(r\),
it follows that \(C\in C_1\)
and \eqref{mdmcc:independent} is satisfied.
We conclude that
\(A\uplus C\)~is a feasible solution.

Now assume that there is
an optimal solution~\(A^*\uplus C^*\)
to \mdmcc{}.
We show that
\cref{alg:mdmccpoly}
returns a solution with the same profit.
First,
since \(C^*\)~contains
exactly one vertex of each color in~\(\{\ell+1,\dots,\ell+k\}\)
by \cref{prob:mdmcc}\eqref{mdmcc:colored-client},
we get \(|C^*|=k\).
Second,
since \(C^*\in C_1\) by \eqref{mdmcc:independent},
we get \(C^*\subseteq V_1\) and \(k=|C^*|\leq r\).
Thus,
the tests in \cref{lin:check-kr,lin:checkcol}
pass.
Thus,
\cref{lin:compA}
of
\cref{alg:mdmccpoly}
computes a set~\(A\in A_1\)
containing exactly
one element of each color~\(\{1,\dots,\ell\}\)
(by \eqref{mdmcc:colored-fac},
\(A^*\)~witnesses the existence of such a set),
a corresponding set~\(C=\bigcup_{u\in A}F(u)\)
in \cref{lin:compC},
and finally
returns \(A\uplus C\) in \cref{lin:retAC},
which we already proved to be a feasible solution
for \mdmcc{}.
It remains to
compare the profit of~\(A\uplus C\)
to that of~\(A^*\uplus C^*\).
To this end,
the goal function \eqref{mdmccgf} for~\(A^*\uplus C^*\)
can be rewritten as

$$
\sum_{u\in A^*}w'(u)\text{ \quad for \quad}w'(u):=-c_u+\smashoperator{\sum_{v\in C^*\cap Z(u)}}p_{uv}.$$
In comparison,
	consider the weight~\(w(u)\)
	assigned to each \(u\in A^*\)
	as in \cref{lin:wu} of \cref{alg:mdmccpoly}.
	Since, for each~\(u\in A\) with \(\col(u)=i\),
$$ w(u)=-c_u+\smashoperator{\sum_{v\in F(u)}}p_{uv}=-c_u+\sum_{j\in Z_i}\max_{v\in V_1\cap U(j)}p_{uv},$$
one has \(w(u)\geq w'(u)\).
Since \(A\)~computed in \cref{lin:compA}
maximizes \(\sum_{u\in A}w(u)\),
\begin{align*}
\sum_{u\in A^*}w'(u)
\leq \sum_{u\in A}w(u)
=\sum_{u\in A}
\Bigl(
-c_u+\smashoperator{\sum_{v\in F(u)}}p_{uv}
\Bigr)
=\sum_{u\in A}
\Bigl(
-c_u+\smashoperator{\sum_{v\in C\cap Z(u)}}p_{uv}
\Bigr),
\end{align*}
which is exactly the profit
of solution~\(A\uplus C\) to \mdmcc{}.
\qed\end{proof}
Having shown the correctness of \cref{alg:mdmccpoly},
we now analyze its running time.

\begin{lemma}
	\label[lemma]{lem:mdmccpoly-time}
	\cref{alg:mdmccpoly} can be run in \(\bigO(\ell n^2)\)~time
	if the matroid \((U_1,A_1)\)
	is given as an independence oracle.
\end{lemma}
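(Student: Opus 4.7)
The plan is to bound the running time of each line of \cref{alg:mdmccpoly}, separating the elementary work from the maximum-weight matroid intersection that dominates. The tests in \cref{lin:check-kr,lin:checkcol} cost $O(n)$. For each $u\in U$, computing $F(u)$ amounts to a single scan of $V_1$ that records, for each color $j\in Z_{\col(u)}$, the maximizer~$\argmax_{v\in V_1\cap U(j)} p_{uv}$, which takes $O(n)$~time; the summation in \cref{lin:wu} then costs $O(|F(u)|)\subseteq O(k)\subseteq O(n)$. Summed over $u\in U$, this loop runs in $O(n^2)$~time, and the construction of $C$ in \cref{lin:compC,lin:retAC} adds only $O(\ell k)$.

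The main work is \cref{lin:compA}: finding an $A\in A_1$ that contains exactly one element of each color in $\{1,\dots,\ell\}$ and maximizes $\sum_{u\in A}w(u)$. I would encode the color constraint as a partition matroid $M_P=(U,\I_P)$ in which $S\in\I_P$ if and only if $|S\cap U(i)|\leq 1$ for each $i\leq \ell$ and $S\cap U(j)=\emptyset$ for each $j>\ell$. This matroid has rank exactly~$\ell$ and admits constant-time independence tests via a length-$\ell$ counter over the colors. A set~$A$ satisfies the color condition and lies in $A_1$ if and only if $A$ is a common independent set of size~$\ell$ of $M_P$ and $(U_1,A_1)$, that is, a common basis of these two rank-$\ell$ matroids, and its weight under $w$ is the objective we wish to maximize.

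The main obstacle is certifying the time bound for this matroid intersection step: I would invoke the classical algorithm for maximum\hyp weight matroid intersection \citep[Sections~41.3 and~41.3a]{Sch03} with the given oracle for $(U_1,A_1)$ and the constant\hyp time oracle for $M_P$, whose augmenting\hyp path structure performs $\ell$~iterations, each building and searching an exchange graph on $n$~vertices with $O(n)$~oracle calls per vertex. This yields $O(\ell n^2)$~time for \cref{lin:compA} and, together with the $O(n^2)$~preprocessing, the claimed overall bound of $O(\ell n^2)$.
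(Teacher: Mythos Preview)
Your proposal is correct and follows essentially the same approach as the paper: bound the elementary lines by $O(n^2)$ and reduce \cref{lin:compA} to a maximum\hyp weight common independent set of size~$\ell$ in $(U_1,A_1)$ and a partition matroid encoding the color constraint, citing \citep[Sections~41.3 and~41.3a]{Sch03} for the $O(\ell n^2)$ bound. One small slip: $(U_1,A_1)$ need not have rank~$\ell$, so ``common basis of these two rank-$\ell$ matroids'' is inaccurate, but your preceding formulation as a size-$\ell$ common independent set is correct and is what the cited algorithm computes.
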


\begin{proof}
Lines~\ref{lin:check-kr} to \ref{lin:wu},
\ref{lin:compC}, and \ref{lin:retAC}
are easy to implement in \(\bigO(n^2)\)~time.
We show how to execute \cref{lin:compA} efficiently.
To this end,
consider the partition
matroid~\((U_1',B)\)
in which a subset of~\(U_1\)
is independent
if it contains at most one element
of each color~\(\{1,\dots,\ell\}\)
and no elements of other colors.
\Cref{lin:compA} is then
computing a set~\(A\) of maximum weight
and cardinality~\(\ell\)
that is independent
in both matroids~\((U_1,A_1)\) and~\((U_1,B)\).
This can be done in
\(\bigO(\ell n^2)\)~time
\citep[Sections~41.3 and~41.3a]{Sch03}.
\qed\end{proof}
\cref{lem:mdmccpoly-time,lem:mdmccpoly-correct,lem:turingred}
together finish
the proof of \cref{thm:alg}\eqref{thm:alg1}.

\subsubsection{Facility and client matroids representable over the same field}
\label{sec:alggeneral}
\noindent
In this section,
we prove \cref{thm:alg}\eqref{thm:alg2}: \mdmc{} is fixed\hyp parameter tractable parameterized by the number of matroids and the minimum rank over all client matroids
if all matroids are representable
over the same field.
To this end,
we prove the following,
which, together with \cref{lem:turingred},
yields \cref{thm:alg}\eqref{thm:alg2}.
\begin{proposition}
	\label[proposition]{lem:mdmccfpt}
	An optimal solution to an \mdmcc{} instance~\(\I\)
	can be found in
	$f(a+c+k+\ell)\cdot \poly(|\I|)$~time
	if representations
	of the \(a\)~facility matroids
	and \(c\)~client matroids
	over the same field~\(\F_{p^d}\) are given
	for some prime~\(p\) polynomially upper\hyp bounded
        by~\(|\I|\).
\end{proposition}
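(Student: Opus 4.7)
The plan is to reduce \mdmcc{} to an instance of \spmc{} whose ``sets'' are \emph{bundles}, one per facility color. Concretely, a bundle is a set $\{u\}\cup C_u$ with $\col(u)\in\{1,\dots,\ell\}$ and $C_u$ containing exactly one client of each color in $Z_{\col(u)}$; its weight is $w(\{u\}\cup C_u):=-c_u+\sum_{v\in C_u}p_{uv}$. Since the~$Z_i$ partition~$\{\ell+1,\dots,\ell+k\}$, any disjoint union of $\ell$~bundles with distinct facility colors automatically satisfies the color and cardinality conditions of \cref{prob:mdmcc}, and the \mdmcc{} objective~\eqref{mdmccgf} coincides with the sum of bundle weights, which is an \dtext{} by \cref{ex:ws}. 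To enforce the remaining matroid and color constraints, I extend each of the $a$~facility matroids and $c$~client matroids to the universe~$U$ via \cref{lemma:matroid-set-ext} (making irrelevant elements free) and add a multicolored matroid of rank $\ell+k$ that has one rank-one parallel class per color. All these matroids can be represented over a common extension of~$\F_{p^d}$ and, where needed, truncated via \cref{rem:same_field_trunc}.

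The obstacle is that the full bundle family has size up to $\Theta(n^{1+k})$, yielding only an XP algorithm. To circumvent this, I exploit the crucial fact that, for a \emph{fixed} facility~$u$ with $\col(u)=i$, the bundle weight is additive over the chosen clients. I would therefore apply \cref{lem:max-int-repr} separately to each such~$u$: with $\cH_u:=\{\{v\}:\col(v)\in Z_i\}$, parameters $\alpha=|Z_i|$, $\gamma=1$, $\beta=k-|Z_i|$, additive weights $v\mapsto p_{uv}$, and the matroid system consisting of the $c$ extended client matroids together with a multicolored matroid enforcing at most one element per color in~$Z_i$ (all truncated to rank~$k$), I obtain a max intersection $(k-|Z_i|)$-representative family $\widehat{\mathcal{C}}_u$ of size at most $\binom{k(c+1)}{|Z_i|(c+1)}$ in $2^{O(kc)}\cdot\poly(|\I|)$~time per~$u$. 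Collecting over all facilities gives a bundle family $\cH:=\{\{u\}\cup C: \col(u)\le\ell,\, C\in\widehat{\mathcal{C}}_u\}$ of size $\poly(|\I|)$ times a function of $k$ and~$c$.

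Finally, I would invoke \cref{thm:solve-spmc} on~$\cH$ with $\alpha=\ell$, $\gamma\le 1+k$, and the $m=a+c+1$ prepared matroids, solving the resulting \spmc{} instance in $2^{O(\ell k(a+c))}\cdot\poly(|\I|)$~time, which is FPT in $a+c+k+\ell$ and polynomial in~$|\I|$. The main subtlety lies in correctness in the nontrivial direction: given an optimal \mdmcc{} solution with facilities $u_1^*,\dots,u_\ell^*$ (of colors $1,\dots,\ell$) and client blocks $C_i^*:=\{v\in C^*:\col(v)\in Z_i\}$, I would iteratively substitute each $C_i^*$ by some $C_i'\in\widehat{\mathcal{C}}_{u_i^*}$. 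At step~$i$, setting $Y:=\biguplus_{j<i}C_j'\uplus\biguplus_{j>i}C_j^*$ of size~$k-|Z_i|$, the representative property of $\widehat{\mathcal{C}}_{u_i^*}$ yields a~$C_i'$ with $C_i'\uplus Y$ still independent in all client and multicolored matroids and $\sum_{v\in C_i'}p_{u_i^*,v}\geq\sum_{v\in C_i^*}p_{u_i^*,v}$. The resulting $(\{u_i^*\}\cup C_i')_{i=1}^\ell$ is then a valid \spmc{} packing of weight at least that of the \mdmcc{} optimum; the other direction, namely that every \spmc{} packing from~$\cH$ yields a feasible \mdmcc{} solution of equal weight, is immediate from the way bundles and matroids were constructed.
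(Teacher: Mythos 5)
Your proof follows essentially the same route as the paper's \cref{alg:mdmccfpt}: reduce to \spmc{} on per-facility bundles, compute per-facility max intersection representative families of candidate client sets via \cref{lem:max-int-repr}, and assemble $\ell$ of them via \cref{thm:solve-spmc}, with the same iterative-substitution argument in the nontrivial direction of the correctness proof. The one non-cosmetic difference is which matroids you feed into \cref{lem:max-int-repr} per facility: you use only the $c$ extended client matroids plus a colour-constraining matroid, all truncated to rank~$k$, with the set~$Y$ in the substitution consisting purely of the other client blocks; the paper instead uses all $a+c+1$ matroids of $\mathcal M$ truncated to rank $k+\ell$, with $Y$ also containing the $\ell$ facilities. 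Your simplification is valid — the extended facility matroids render the clients free and the facility colours are disjoint from the client colours, so dropping them costs nothing in the representative property — and it marginally tightens the constants ($m=c+1$ instead of $m=a+c+1$ in that step).

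One imprecision you should fix when writing this up: the ``multicolored matroid enforcing at most one element per colour in~$Z_i$'' must have a ground set containing clients of \emph{all} colours $\{\ell+1,\dots,\ell+k\}$, not just those in~$Z_i$ — for instance, take the multicolored matroid on all client colours restricted to~$U_C$ and truncated to rank~$k$. Otherwise $C_i'\uplus Y$ is not even a subset of that matroid's ground set and \cref{def:qrep} does not apply. This is harmless once fixed, since your $Y$'s by construction use colours disjoint from $Z_i$ and are automatically colourful, but the construction needs to be stated so the definition can be invoked. You also implicitly use two different truncations (rank~$k$ for the per-facility step, rank~$k+\ell$ for the \spmc{} step), which is fine but worth saying explicitly; the paper sidesteps this by using a single rank-$(k+\ell)$ truncation of~$\mathcal M$ throughout.
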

The algorithm for \cref{lem:mdmccfpt}
is more involved
than \cref{alg:mdmccpoly},
which breaks in
the presence of client matroids,
even a single one:
we cannot guarantee
that the sets~\(F(u)\)
chosen in \cref{lin:fu} of \cref{alg:mdmccpoly}
are independent in the client matroids
or that their union will be.
Ideally,
one would we able to choose
from all possible subsets~\(F(u)\subseteq Z(u)\)
of clients that could be served by~\(u\),
yet there are too many.
Here the \emph{max intersection representative families}
that we construct in \cref{sec:rep-intersect}
come into play:
using \cref{lem:max-int-repr},
we compute a family~\(\cFs(u)\)
so that, if there is any set of clients
that can be served by~\(u\)
and that is independent in all client
matroids together with the clients
served by other facilities,
then \(\cFs(u)\)~contains
at least one such subset
yielding at least the
same profit.
Using \cref{thm:solve-spmc},
we can then compute
disjoint unions of these
sets maximizing profit.
To describe the algorithm,
we introduce some notation.
\begin{definition}
	For a universe~\(U\)
	with coloring~\(\col\colon\allowbreak U\to\{1,\dots,k+\ell\}\) and \(U(i)\) as in \cref{def:ui},
	we denote by
	\begin{align*}\allowdisplaybreaks
	U_A&:=\smashoperator{\bigcup_{i=1}^\ell} U(i)&&\text{ is the set of \emph{facilities}, and}\\
	U_C&:=\smashoperator{\bigcup_{i=\ell+1}^{\ell+k}} U(i)&&\text{ is the set of \emph{clients}.}
	\end{align*}
\end{definition}
\cref{alg:mdmccfpt}
now solves \mdmcc{} as follows.
In \cref{alg2:color-matroid},
it constructs 
a multicolored matroid~\(M_P\)
that will ensure that
any independent
set of \(k\)~facilities
and \(\ell\)~clients
fulfills
\cref{prob:mdmcc}\eqref{mdmcc:colored-fac}
and \eqref{mdmcc:colored-client}.
In \cref{alg2:mat-ext},
it computes a family~\(\mathcal M\)
of matroids that contains~\(M_P\)
and all facility and client matroids,
which
are extended 
so that
a set~$A \uplus C \subseteq U$
is independent in all of them
if and only if $A$~is independent
in all facility matroids
and $C$~is independent in all client matroids.
Now,
if one of the matroids in~\(\mathcal M\) has rank
less than~\(k+\ell\),
then there is no common independent
set of \(\ell\)~facilities
and \(k\)~clients,
which is checked in \cref{alg2:smallrank}.
The truncation in
\cref{alg2:truncation}
thus results in each matroid in~\(\mathcal M\)
having rank exactly~\(k+\ell\),
which is needed
to apply \cref{lem:max-int-repr}
in
\cref{alg2:get-followers}.
In \cref{alg2:get-followers},
we construct for each~$u \in U_A$
with \(\col(u)=i\)
a max intersection~$(k + \ell - |Z_i|)$-representative~$\cFs(u)$
for the family~\(\cF(u)\)
of all sets of clients
that could potentially be served by~$u$
in a solution.
Afterwards, in \cref{alg2:the-family},
we construct a family of sets,
each consisting of one facility~$u \in U_A$
and a potential client set from~$\cFs(u)$.
Finally,
in \cref{alg2:ask-marx-for-help},
we will use \cref{thm:solve-spmc}
to combine $\ell$~of such sets
into a set that is independent in all matroids
in~$\mathcal M$ and yields maximum profit.
To prove \cref{lem:mdmccfpt},
we now show that
\cref{alg:mdmccfpt}
is correct and
analyze its running time.

\begin{algorithm}
  \caption{for the proof of \cref{lem:mdmccfpt}.}
  \label[algorithm]{alg:mdmccfpt}
  \begin{compactdesc}
  \item[Input:]
    An \mdmcc{} instance:
    universe~\(U=\{1,\dots,n\}\),
    partition~\(Z_1\uplus\dots \uplus Z_\ell=\allowbreak\{\ell+1,\dots,\ell+k\}\),
    coloring~\(\col\colon U\to\{1,\dots,k+\ell\}\),
    profits~\(p_{uv}\in\N\)
    for each~$u,v\in U$,
    costs~\(c_u\in\N\)
    for each~$u\in U$,
    facility matroids $\mathcal A = \{ (U_i,A_i) \}_{i=1}^a$,
    client matroids $\mathcal C = \{ (V_i,C_i) \}_{i=1}^c$,
    all given as representations over the same finite field,
    where $U_i,V_i \subseteq U$.
    
  \item[Output:]
    An optimal solution~\(A\uplus C\) to \mdmcc.
  \end{compactdesc}
    \smallskip\hrule\smallskip
  \begin{compactenum}[\footnotesize 1:]
  \item   \label[line]{alg2:color-matroid}
    \(M_P\gets (U,\{ I \subseteq U \mid\)  \(I\)~has at most one element  of each color in \(\{1,\dots,k+\ell\})\).
  \item      \label[line]{alg2:mat-ext}
    \(\mathcal M\gets\{M_P\}\cup\{M\vee (U_C,2^{U_C})\mid M\in\mathcal A\} \cup \{M\vee (U_A,2^{U_A})\mid M\in\mathcal C\}\).
  \item     \label[line]{alg2:smallrank}
    \textbf{if}
    any matroid in \(\mathcal M\) has rank
    less than \(k+\ell\)
    \textbf{then return}
    No solution exists.
    
  \item Truncate all matroids in~$\mathcal M$
    to rank~$k+\ell$ (using \cref{rem:same_field_trunc}).%
    \label[line]{alg2:truncation}\;
  \item \label[line]{alg2:for-loop}
    \textbf{for each} $u \in U_A$ and \(i:=\col(u)\) \textbf{do}
    
  \item \quad
      \(\widehat\cF(u)\gets\)
      max intersection~$(k+\ell-|Z_{i}|)$-representative for the family
      \begin{align*}
        \cF(u):=\{ I \subseteq Z(u)\mid
        I \text{ is independent in each of } 
        \mathcal M
        \text{ and } |I|=|Z_i|\}
      \end{align*}
      \quad with respect to	weights~$w_u\colon 2^U\to\N,I\mapsto\sum_{v\in I}p_{uv}$ (via \cref{lem:max-int-repr}).
      \label[line]{alg2:get-followers}
      
    \item \quad
      $\cFs[u] \gets \{ X \cup \{ u \} \mid X \in \cFs(u) \}$.
      \label[line]{alg2:pad-followers}

    \item 
    $\cFs \gets \bigcup_{u\in U_A}\cFs[u]$.
    \label[line]{alg2:the-family}\;
  \item     \(S_1,\dots,S_\ell\gets\) solution to
    \spmc{}  with matroids~\(\mathcal M\),
    family~\(\cFs\), and
    weights $w\colon \cFs\to\Z,X\mapsto w_u(X\setminus\{u\})-c_u$, where $\{u\}=X\cap U_A$ (via \cref{thm:solve-spmc}).
    \label[line]{alg2:ask-marx-for-help}
  \item \textbf{if} not found \textbf{then return}
    No solution exists.
  \item     $A \gets U_A \cap (S_1 \cup \dots \cup S_\ell)$.

  \item $C \gets U_C \cap (S_1 \cup \dots \cup S_\ell)$.
  \item \textbf{return} $A \uplus C$.
    \label[line]{alg2:retsol}
  \end{compactenum}
\end{algorithm}

\begin{lemma}
		\label[lemma]{lem:alg2-feasible}
	Any solution output by \cref{alg:mdmccfpt}
	is feasible for \mdmcc{}.
\end{lemma}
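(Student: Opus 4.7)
The plan is to verify in turn the three feasibility conditions of \cref{prob:mdmcc} for the set $A\uplus C$ returned in \cref{alg2:retsol}, by unpacking the construction of $\cFs$ and combining it with the guarantees of the \spmc{} solution $S_1,\dots,S_\ell$ obtained in \cref{alg2:ask-marx-for-help}. The disjointness $A\cap C=\emptyset$ is immediate since $A\subseteq U_A$ and $C\subseteq U_C$ lie in disjoint color classes.

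First I would observe that by \cref{alg2:the-family}, each $S_j$ lies in some $\cFs[u_j]$ with $u_j\in U_A$, hence $S_j=\{u_j\}\uplus X_j$ for some $X_j\in\cFs(u_j)\subseteq\cF(u_j)$. By the definition in \cref{alg2:get-followers}, $X_j\subseteq Z(u_j)$, $|X_j|=|Z_{\col(u_j)}|$, and $X_j$ is independent in every matroid of~$\mathcal M$. Moreover, the \spmc{} solution guarantees $S:=\biguplus_{j=1}^\ell S_j\in\bigcap_{M\in\mathcal M}\I(M)$; in particular $S$ is independent in the multicolored matroid $M_P$ from \cref{alg2:color-matroid}.

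Independence in $M_P$ means that all elements of $S$ have pairwise distinct colors. Since each $u_j$ belongs to $U_A=\bigcup_{i=1}^\ell U(i)$, the $\ell$ facilities cover each color in $\{1,\dots,\ell\}$ exactly once, yielding \cref{prob:mdmcc}\eqref{mdmcc:colored-fac}. Together with $X_j\subseteq Z(u_j)$ and $|X_j|=|Z_{\col(u_j)}|$, the multicolor constraint forces $X_j$ to contain exactly one client per color in $Z_{\col(u_j)}$; as $Z_1,\dots,Z_\ell$ partition $\{\ell+1,\dots,\ell+k\}$, the union $C=\biguplus_j X_j$ contains exactly one client of each color in this range, which is \cref{prob:mdmcc}\eqref{mdmcc:colored-client}.

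The step I expect to be the most delicate is \cref{prob:mdmcc}\eqref{mdmcc:independent}, where I must translate independence of $S$ in the extended matroids back to independence of $A$ and $C$ in the original ones. Fix a facility matroid $M=(U_i,A_i)\in\mathcal A$; by \cref{alg2:mat-ext}, $M\vee(U_C,2^{U_C})\in\mathcal M$, so by \cref{def:matunion} there is a partition $S=J_1\uplus J_2$ with $J_1\in A_i$ and $J_2\subseteq U_C$. Because $A=U_A\cap S$ is disjoint from $U_C$, every element of $A$ must lie in $J_1$, and the hereditary axiom then gives $A\in A_i$. The analogous argument with $M\vee(U_A,2^{U_A})$ yields $C\in C_i$ for every client matroid. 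The subtlety here is that independence in a matroid union is witnessed only by some partition of $S$, not canonically; the disjointness of the color classes $U_A$ and $U_C$ is precisely what guarantees that the witnessing partition automatically sends facility-side elements to the $M$-side and client-side elements to the free side.
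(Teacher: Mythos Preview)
Your proof is correct and follows essentially the same approach as the paper: both establish \eqref{mdmcc:colored-fac}--\eqref{mdmcc:colored-client} via independence in the multicolored matroid~$M_P$ and a cardinality count, and both establish \eqref{mdmcc:independent} by pulling back from independence in $M\vee(U_C,2^{U_C})$ (respectively $M\vee(U_A,2^{U_A})$) using $A\cap U_C=\emptyset$ and the hereditary axiom. Your handling of \eqref{mdmcc:independent} is in fact more explicit than the paper's---you spell out the witnessing decomposition $S=J_1\cup J_2$ from \cref{def:matunion} and argue $A\subseteq J_1$, whereas the paper simply asserts the conclusion; note only that \cref{def:matunion} gives a union $J_1\cup J_2$, not necessarily disjoint, but your inference $A\subseteq J_1$ goes through regardless since $A\cap U_C=\emptyset$ and $J_2\subseteq U_C$.
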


\begin{proof}
If \cref{alg:mdmccfpt}
outputs a solution in \cref{alg2:retsol},
then, in \cref{alg2:ask-marx-for-help},
it founds
sets~$S_1,\dots,S_{\ell} \in \cFs$
such that \(S=S_1\uplus\dots\uplus S_{\ell}\)
is independent in all matroids~$\mathcal M$.
We show that \(A\uplus C\)
for
$A = S \cap U_A$
and
$C = S \cap U_C$ is a feasible solution for \mdmcc{},
that is, it satisfies
properties
\cref{prob:mdmcc}\eqref{mdmcc:colored-fac}--\eqref{mdmcc:independent}.

\eqref{mdmcc:colored-fac} and \eqref{mdmcc:colored-client}:
Observe that each set in~$\cFs$
(constructed in \cref{alg2:the-family})
contains
exactly one facility~$u \in U_A$
and \(|Z_i|\)~elements from~\(U_C\)
for \(i=\col(u)\).
Thus,
\[
|S|=\sum_{i=1}^\ell|S_i|=\ell+\sum_{i=1}^\ell|Z_i|=\ell+k
\]
since \(Z_1\uplus\dots\uplus Z_\ell=\{\ell+1,\dots,\ell+k\}\).
Since \(S\)~is independent
in the multicolored matroid~$M_P$,
it follows that \(S\)~contains
exactly one facility and exactly one client
of each color.
Since \(A=S\cap U_A\) and \(C=S\cap U_C\),
\eqref{mdmcc:colored-fac} and \eqref{mdmcc:colored-client}
hold.

\eqref{mdmcc:independent}:
Since \(A\)~is independent in
all matroids of~\(\mathcal M\),
it is independent in all matroids
of~\(\{M\vee (U_C,2^{U_C})\mid M\in\mathcal A\}\).
Since \(A\subseteq U_A\) and thus
\(A\cap U_C=\emptyset\),
it follows that \(A\)~is independent
in all matroids in~\(\mathcal A\).
Analogously,
it follows that \(C\)~is independent
in all matroids in~\(\mathcal C\).
\qed\end{proof}

\begin{lemma}
	\label[lemma]{lemma:alg2-correct}
	Given a feasible
	\mdmcc{} instance,
	\cref{alg:mdmccfpt} outputs
	a solution of maximum profit.
\end{lemma}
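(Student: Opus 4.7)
The plan is to take any optimal MDMCC solution $A^* \uplus C^*$ and exhibit a feasible SPMC solution for the instance solved in \cref{alg2:ask-marx-for-help} whose weight equals the MDMCC profit of $A^* \uplus C^*$. Together with \cref{lem:alg2-feasible} and the optimality guarantee of \cref{thm:solve-spmc}, this will imply that the output of \cref{alg:mdmccfpt} is optimal as well.

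First, I would check that $X^{(0)} := A^* \cup C^*$ is independent in every matroid in $\mathcal{M}$ and has cardinality exactly $k+\ell$: independence in $M_P$ follows from \cref{prob:mdmcc}\eqref{mdmcc:colored-fac}--\eqref{mdmcc:colored-client}; independence in each facility-side matroid $M \vee (U_C, 2^{U_C})$ follows from $A^* \in A_i$ together with $C^* \subseteq U_C$, and symmetrically on the client side; the rank-$(k+\ell)$ truncation in \cref{alg2:truncation} preserves independence of sets of size $k+\ell$. Writing $A^* = \{u_1,\dots,u_\ell\}$ with $\col(u_i)=i$ and setting $C^*_i := C^* \cap Z(u_i)$, property \eqref{mdmcc:colored-client} gives $|C^*_i| = |Z_i|$, and therefore $C^*_i \in \cF(u_i)$.

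The heart of the argument is an iterative replacement that builds sets $X^{(1)},\dots,X^{(\ell)}$, each independent in every matroid in $\mathcal{M}$ and of cardinality $k+\ell$. Since the classes $Z_1,\dots,Z_\ell$ are pairwise disjoint and the swaps performed in steps $1,\dots,i-1$ only touch colors in $Z_1 \cup \dots \cup Z_{i-1}$, one still has $X^{(i-1)} \cap Z(u_i) = C^*_i$. Setting $Y_i := X^{(i-1)} \setminus C^*_i$, which has size $k+\ell-|Z_i|$, the independence of $C^*_i \uplus Y_i = X^{(i-1)}$ in all of $\mathcal{M}$ lets the defining property of the max intersection $(k+\ell-|Z_i|)$-representative $\widehat{\cF}(u_i)$ (\cref{def:qrep}, as built by \cref{lem:max-int-repr}) produce a set $\widehat{C}_i \in \widehat{\cF}(u_i)$ with $\widehat{C}_i \uplus Y_i$ still independent in every matroid of $\mathcal{M}$ and $w_{u_i}(\widehat{C}_i) \geq w_{u_i}(C^*_i)$. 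I then put $X^{(i)} := \widehat{C}_i \uplus Y_i$.

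To conclude, set $S_i := \{u_i\} \cup \widehat{C}_i \in \widehat{\cF}[u_i] \subseteq \widehat{\cF}$. Distinctness of the facility colors and pairwise disjointness of the $Z_i$ ensure that $S_1 \uplus \dots \uplus S_\ell = X^{(\ell)}$ is a disjoint union independent in every matroid of $\mathcal{M}$, so $(S_1,\dots,S_\ell)$ is a feasible SPMC solution of weight $\sum_{i=1}^\ell\bigl(w_{u_i}(\widehat{C}_i) - c_{u_i}\bigr) \geq \sum_{i=1}^\ell\bigl(w_{u_i}(C^*_i) - c_{u_i}\bigr)$, and the right-hand side equals the MDMCC profit of $A^* \uplus C^*$ since the $Z_i$ partition the client colors. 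Conversely, any SPMC solution returned by \cref{thm:solve-spmc} decomposes as $\ell$ sets of the form $\{u_j\} \cup X_j$ with $X_j \in \widehat{\cF}(u_j)$ and distinct facility colors (forced by $M_P$-independence), so the assembled $A \uplus C$ has MDMCC profit matching the SPMC weight. The main obstacle is the iterative swap: verifying that prior replacements do not invalidate the applicability of the representative property at later steps, which succeeds precisely because the partition $Z_1,\dots,Z_\ell$ decouples the steps.
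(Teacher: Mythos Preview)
Your proposal is correct and follows essentially the same strategy as the paper: show that an optimal \mdmcc{} solution \(A^*\uplus C^*\) is independent in all matroids of~\(\mathcal M\), then use the representative families \(\cFs(u)\) to swap each client block \(C^*\cap Z(u)\) for one in~\(\cFs(u)\) without losing weight or independence, thereby exhibiting a feasible \spmc{} solution whose weight equals the \mdmcc{} profit; the converse direction comes from \cref{lem:alg2-feasible} and a direct profit computation.

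The one noteworthy difference is rigor in the swap step. The paper considers ``an arbitrary facility~\(u\in A\)'', obtains \(C_u'\in\cFs(u)\) with \((A\uplus\biguplus_{w\ne u}C_w)\uplus C_u'\) independent, and then simply asserts that \(\biguplus_{u\in A}(\{u\}\uplus C_u')\) is feasible for \spmc{}---leaving implicit why replacing \emph{all} the \(C_u\) simultaneously preserves independence. Your explicit sequence \(X^{(0)},\dots,X^{(\ell)}\), together with the observation that earlier swaps touch only colors in \(Z_1\cup\dots\cup Z_{i-1}\) and hence \(X^{(i-1)}\cap Z(u_i)=C^*_i\), is exactly the argument needed to close this gap; the disjointness of the \(Z_i\) is what makes the steps decoupled, as you correctly identify.
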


\begin{proof}
Let \(S=A\uplus C\)
be an optimal solution to \mdmcc{}.
We show that \cref{alg:mdmccfpt}
outputs a solution of \mdmcc{}
with the same profit.

Since \(S\)~contains exactly one
facility of each color in~\(\{1,\dots,\ell\}\)
by \cref{prob:mdmcc}\eqref{mdmcc:colored-fac}
and exactly one client
of each color in~\(\{\ell+1,\dots,\ell+k\}\)
by \eqref{mdmcc:colored-client},
it is independent in the colorful matroid~\(M_P\)
constructed in \cref{alg2:color-matroid}.
Moreover, by \eqref{mdmcc:independent},
\(C\)~is independent in all matroids in~\(\mathcal C\)
and trivially in~\((U_C,2^{U_C})\).
Similarly, \(A\)~is independent in all matroids
in~\(\mathcal A\) and~\((U_A,2^{U_A})\).
Thus,
by \cref{def:matunion} about matroid unions,
\(A\uplus C\)~is independent in
all matroids in the set~\(\mathcal M\)
constructed in \cref{alg2:mat-ext}.
Since \(|A\uplus C|=k+\ell\),
it follows that each matroid in~\(\mathcal M\)
has rank at least~\(k+\ell\)
and \cref{alg2:smallrank} is passed.
It follows that after \cref{alg2:truncation},
all matroids in~\(\mathcal M\)
have rank \emph{exactly} \(k+\ell\).

Now,
consider an arbitrary facility~\(u\in A\)
and \(i:=\col(u)\).
For the set~\(Z(u)\) in \eqref{mdmccgf}
and the set~\(\cF(u)\) constructed in
\cref{alg2:get-followers},
one has \(C_u:=C\cap Z(u)\in\cF(u)\)
and
\[
w_{u}(C_u)=\sum_{v\in C_u}p_{uv}.
\]
Moreover, one has
\(|A\uplus C|=k+\ell\),
\(|C_u|=|Z_i|\),
and
\[
A\uplus C=A\uplus\biguplus_{w\in A}C_{w}=(A\uplus\smashoperator{\biguplus_{w\in A\setminus\{u\}}}C_{w})\uplus C_u.
\]
Since \(\cFs(u)\) is max intersection
\((k+\ell-|Z_i|)\)-representative
with respect to~\(w_{u}\),
by \cref{def:qrep},
there is \(C_u'\in\cFs(u)\) with
\(w_{u}(C_u')\geq w_{u}(C_u)\)
and such that
\[
A\uplus C'=(A\uplus\smashoperator{\biguplus_{w\in A\setminus\{u\}}}C_{w})\uplus C_u'
\]
is independent in all matroids of~\(\mathcal M\).
Consequently,
\(\{u\}\uplus C_u'\in\cFs\) in \cref{alg2:the-family}
and
\[
A\uplus C''=\biguplus_{u\in A}(\{u\}\uplus C_u')
\]
is a feasible solution to the \spmc{} instance
in \cref{alg2:ask-marx-for-help}.
Thus,
\cref{alg:mdmccfpt} in \cref{alg2:ask-marx-for-help}
finds an optimal \spmc{} solution~\(S_1,\dots,S_\ell\in\cFs\)
for the weights $w\colon \cFs\to\Z,X\mapsto w_u(X\setminus\{u\})-c_u$, where \(\{u\}=X\cap U_A\).
It returns \(A^*\uplus C^*\) for
\(A^*=U_A\cap(S_1\uplus\dots\uplus S_\ell)\)
and
\(C^*=U_C\cap(S_1\uplus\dots\uplus S_\ell)\)
in \cref{alg2:retsol},
which is a feasible solution for \mdmcc{} by
\cref{lem:alg2-feasible}.
Finally,
since each such set~\(S_i\in\cFs\)
consists of one facility~\(u\in U_A\)
and \(|Z_{\col(u)}|\)~elements of~\(Z(u)\)
with pairwise distinct colors,
the profit of \(A^*\uplus C^*\)
as a solution to \mdmcc{}
given by \eqref{mdmccgf} is
\begin{align*}
&\sum_{u\in A^*}(-c_u+\smashoperator{\sum_{v\in C^*\cap Z(u)}}p_{uv})=
\sum_{j=1}^\ell w(S_j)
\geq \sum_{u\in A} w(\{u\}\uplus C_u')\\
&=\sum_{u\in A}(-c_u+w_u(C_u'))\geq \sum_{u\in A}(-c_u+w_u(C_u))
=    \sum_{u\in A}(-c_u+\sum_{v\in C_u}p_{uv}),
\end{align*}
which is exactly the profit of the optimal
solution~$A\uplus C$.
\qed\end{proof}

\begin{lemma}
	\label[lemma]{lemma:alg2-runtime}
	Given representations of all matroids over the same field $\mathbb F$, where \fieldconst{the input size},
	\cref{alg:mdmccfpt} can be executed in
	$2^{O(\ell k (a+c))}\cdot\poly(x)$~time,
	where $a$ is the number of facility matroids,
	$c$ is the number of client matroids, $k+\ell$ is the number of colors, 
	and $x$ is the input size.
\end{lemma}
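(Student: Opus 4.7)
The plan is to walk through \cref{alg:mdmccfpt} line by line, counting operations over the common finite field $\mathbb F$ (or a polynomial-degree extension of it), and then converting to bit operations at the end.

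\textbf{Preparation (lines 1--4).} The multicolored matroid~$M_P$ in \cref{alg2:color-matroid} is a direct sum of rank-one uniform matroids and is trivially representable over $\mathbb F$ in $\poly(x)$ time. Each matroid extension in \cref{alg2:mat-ext} takes $\poly(x)$ operations over~$\mathbb F$ by \cref{lemma:matroid-set-ext}, so $\mathcal M$ contains $m := a+c+1$ matroids represented over $\mathbb F$. The rank check in \cref{alg2:smallrank} is polynomial via Gaussian elimination. Applying \cref{rem:same_field_trunc} to simultaneously truncate all matroids in $\mathcal M$ to rank $r := k+\ell$ takes $\poly(k,\ell,a,c,p,d) \subseteq \poly(x)$ operations over $\mathbb F$ and produces representations over a common extension $\mathbb K \supseteq \mathbb F$ of degree polynomial in $x$.

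\textbf{Representative families (lines 5--8).} For each of the at most $n$ facilities $u \in U_A$, I invoke \cref{lem:max-int-repr} on the $\gam=1$-family $\{\{v\} \mid v \in Z(u)\}$ with $\alp := |Z_{\col(u)}|$, $\bet := r-\alp \geq 0$, and the $m$ matroids of $\mathcal M$ of common rank $r$. The additive weight~$w_u$ is an \dtex{} function by \cref{ex:additive}. This produces $\cFs(u)$ of size at most $\binom{rm}{\alp m} \leq 2^{O((k+\ell)(a+c))}$ in time of $2^{O(rm)} \cdot n + O(m^2 r n)$ operations over~$\mathbb K$. Absorbing the trivial padding in \cref{alg2:pad-followers}, the family~$\cFs$ built in \cref{alg2:the-family} has $|\cFs| \leq n \cdot 2^{O((k+\ell)(a+c))}$, and each of its sets has size at most $|Z_{\col(u)}|+1 \leq k+1$.

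\textbf{The \spmc{} call (line 10) and conclusion.} Applying \cref{thm:solve-spmc} with $\alp := \ell$, $\gam \leq k+1$, and $m = a+c+1$ matroids of rank $r = k+\ell$ gives running time $2^{O(\ell(k+1)m)} \cdot |\cFs|^2 \cdot \poly(r) + m^2 n \cdot \poly(x)$ operations over $\mathbb K$. After noting that $\ell(k+1)(a+c+1) = O(\ell k(a+c))$ and $(k+\ell)(a+c) = O(\ell k(a+c))$ in the nontrivial regime $k, \ell, a+c \geq 1$, this fits within $2^{O(\ell k (a+c))} \cdot \poly(x)$ operations over $\mathbb K$ and dominates all previous lines. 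Since each operation over $\mathbb K$ reduces to $\poly(k+\ell, d)$ operations over $\mathbb F = \mathbb F_{p^d}$, each of which takes $\poly(\log p, d) \subseteq \poly(x)$ bit operations (because $p$ is polynomial in $x$), the overall bit-complexity is $2^{O(\ell k (a+c))} \cdot \poly(x)$ as claimed. The main obstacle is the bookkeeping required to propagate parameters through the three nested transformations (matroid union, truncation, field extension) and to verify that the product of the exponential factor from \cref{thm:solve-spmc} with the already-exponential $|\cFs|^2$ still fits within the stated envelope rather than blowing up to something like $2^{O((k+\ell)^2 (a+c+1))}$.
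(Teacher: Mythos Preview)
Your proof is correct and follows essentially the same line-by-line approach as the paper: build and represent $M_P$, extend and truncate the matroids in $\mathcal M$ over a polynomial-degree field extension, apply \cref{lem:max-int-repr} with $\gamma=1$ for each facility, bound $|\cFs|$, and finish with \cref{thm:solve-spmc}. If anything, you are a bit more explicit than the paper in verifying that the exponential factor $|\cFs|^2$ coming from \cref{thm:solve-spmc} is absorbed into $2^{O(\ell k(a+c))}$ and in flagging the ``nontrivial regime'' assumption $k,\ell,a+c\geq 1$ needed for that absorption.
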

\begin{proof}
First,
we compute
a $((k+\ell) \times n)$-representation~\(B=(b_{ij})\)
of the multicolored matroid~$M_P$ in
\cref{alg2:color-matroid} of 
\cref{alg:mdmccfpt} over~\(\mathbb F\)
in $\bigO(nk)$~time:
\(b_{ij}=1\) if
element $j\in U$ has color~$i$, and
\(b_{ij}=0\) otherwise.
By \cref{lemma:matroid-set-ext},
we can compute the set~$\mathcal M$
of matroids and their representations over~\(\mathbb F\)
in \cref{alg2:mat-ext}
in time of a polynomial number
of field operations over~$\F$.

Due to \cref{alg2:smallrank},
all matroids in~\(\mathcal M\) have rank at
least~\(k+\ell\).
In \cref{alg2:truncation},
we use \cref{rem:same_field_trunc}
to compute $(k+\ell)$-truncations
of all matroids in~$\mathcal M$
over a field extension~$\mathbb F' \supseteq\mathbb F$,
in a polynomial number
of field operations over~$\mathbb F$.
Herein,
\(\mathbb F'=\F_{p^{d'}}\)
with
$d'= (k+\ell)\cdot rd \in \poly(n+d)$ since $\ell\leq k \le n$
and $r$~is the maximum rank of the input matroids.

Let $m=|\mathcal M|=a+c+1$.
Using \cref{lem:max-int-repr} with~\(\gamma=1\),
\(\cH:=\{\{v\}\mid v \in \cF(u)\}\),
and weight function~\(w_{u}\colon 2^U\to\N\)
to implement \cref{alg2:get-followers},
we can execute the
for-loop starting at \cref{alg2:for-loop}
in time of $2^{\bigO((k+\ell)\cdot m)} \cdot n^2$ 
operations over $\F'$:
since \(\cH\)~is a 1-family
and,
thus,
the partition of any
subset of~\(U\) into sets of~\(\cH\) is unique,
the function~\(w_{u}\)
is a \dtext{}
generated by \(w_{u}(\emptyset)=0\)
and the constant\hyp time computable
function~\(\dg\colon \N\times \cH\to\N,(k,\{v\})\mapsto k+p_{uv}\)
(cf.\ \cref{ex:ws}).

In \cref{alg2:pad-followers},
for each $u\in U_A$,
it holds that $|\cFs[u]|\leq|\cFs(u)| \leq {{(k + \ell)m} \choose {km}}\leq 2^{(k+\ell)m}$
by \cref{lem:max-int-repr}.
Thus, in \cref{alg2:the-family},
we have
\(|\cFs|\leq n 2^{(k+\ell)m}\).
Moreover,
each set in~\(\cFs\) has size at most~\(k+1\).
Therefore,
by \cref{thm:solve-spmc},
\cref{alg2:ask-marx-for-help}
can be executed in time of
\(2^{O(\ell (k+1) m)}\cdot n\cdot\poly(p,d')\)
operations over~\(\mathbb F'\).

Since we initially get the representations
of the input matroids 
over the field~$\mathbb F_{p^d}$
with $p^d$ elements, we need
at least $d \log p$ bits to encode
an element of the field.
Thus, $d \log p$ is less than
the input size.
Therefore, \(d\) and \(d'\) are polynomially
bounded by the input size.
Since
each element of the field $\mathbb F_{p^{d'}}$ 
can be encoded using $d' \log p$ bits,
each field operation over~$\mathbb F'$ 
(and therefore over~$\mathbb F$)
can be executed in $\poly(d' \log p)$~time, 
which is polynomial in the input size.
Thus,
\cref{alg:mdmccfpt}
can be executed in
$2^{O(\ell k (a+c))}\cdot\poly(x)$~time.
\qed\end{proof}
\cref{lem:mdmccfpt}
now follows from
\cref{lemma:alg2-correct,lemma:alg2-runtime,lem:alg2-feasible}.
Finally,
\cref{thm:alg}\eqref{thm:alg2} follows
from \cref{lem:mdmccfpt,lem:turingred}.

\section{Conclusion}
\noindent
The complexity of \mdmc{}
seems to be determined
by the client matroids:
it is fixed\hyp parameter tractable
parameterized by the minimum rank
of the client matroids
in case when
the facility matroid is arbitrary
and the client matroid is uniform,
or when all matroids are linear.
The problem becomes W[1]-hard
for general client matroids,
even without facility matroids.
It would be interesting to settle
the complexity of \mdmc{}
with one \emph{arbitrary} facility matroid
parameterized by the rank
of a single linear client matroid.

We point out that the algorithms
in \cref{thm:alg}\eqref{thm:alg1} and \eqref{thm:alg2}
are easy to implement:
the construction of perfect
hash families using \cref{prop:perfhash}
can be replaced by coloring the universe
uniformly at random with~\(k+\ell\) colors \citep{AYZ95}
and the truncation
of matroids using \cref{rem:same_field_trunc},
involving large field extensions
and generation of irreducible polynomials,
can be replaced by a very simple
randomized algorithm 
that does not enlarge fields \citep[Proposition~3.7]{Mar09}.
Doing so,
when aiming for an error probability
of at most~\(\varepsilon\in(0,1)\),
the asymptotic running time of our algorithms
increases by a factor~\(\ln(1/\varepsilon)\).

For future research,
we point out that our algorithm
for \cref{thm:alg}\eqref{thm:alg1}
works in polynomial space,
whereas \cref{thm:alg}\eqref{thm:alg2}
requires exponential space
due to \cref{lem:max-int-repr,thm:solve-spmc}.
It is interesting whether this is avoidable.
Moreover,
given that approximation algorithms
are known for UFLP
without matroid constraints~\citep{AS99},
for the minimization variant of UFLP
with a single facility matroid~\citep{KKN+14,Swa16},
as well as for
other optimization problems
under matroid constraints \citep{CCPV11,FW12,LSV13},
it is canonical to study
approximation algorithms
for \mdmc{}.


\paragraph{Acknowledgments}

We thank F.\ V.\ Fomin, F.\ Panolan,
and the anonymous referees
for valuable input.
This study was initiated
at the 7th annual research retreat
of the Algorithmics and Computational Complexity group
of TU Berlin,
Darlingerode, Germany,
March 18th--23rd, 2018.

\paragraph{Funding}
René van Bevern
was supported by Russian Foundation for Basic Research (RFBR)
grant~18-501-12031~NNIO\textunderscore a.
Oxana Yu.\ Tsidulko
was supported by RFBR
grant 18-31-00470 mol\_a.

\bibliographystyle{mydam}
\bibliography{mcmc}

\end{document}